\DeclareTextSymbol{\cyrsftsn}{OT2}{126}
\DeclareTextSymbol{\textnumero}{OT2}{125}
\renewcommand{\theequation}{\thesection.\arabic{equation}}
\theoremstyle{definition}
\newtheorem{theorem}{Theorem}[section]
\newtheorem{corollary}{Corollary}[section]
\newtheorem{remark}{Remark}[section]
\newtheorem{lemma}{Lemma}[section]
\begin{document}
\title{{\LARGE\bf{ Asset Prices with Investor Protection and Past Information}\thanks{This work was supported by  the National Natural Science Foundation of China (11471230, 11671282, 11801462).}}}
\author{Jia Yue$^a$, Ben-Zhang Yang$^b$, Ming-Hui Wang$^b$, Nan-Jing Huang$^b$
\footnote{Corresponding author.  E-mail addresses: nanjinghuang@hotmail.com, njhuang@scu.edu.cn} \\
{\small\it a. Department of Economic Mathematics, Southwestern University of Finance and Economics,}\\
{\small\it Chengdu, Sichuan 610074, P.R. China}\\{\small\it b. Department of Mathematics, Sichuan University, Chengdu,
Sichuan 610064, P.R. China}}
\date{ }
\maketitle
\begin{flushleft}
\hrulefill\\
\end{flushleft}
 {\bf Abstract}.
In this paper, we consider a dynamic asset pricing model in an approximate fractional economy to address empirical regularities related to both investor protection and past information. Our newly developed model features not only in terms with a controlling shareholder who diverts a fraction of the output, but also good (or bad) memory in his budget dynamics which can be well-calibrated by a pathwise way from the historical data. We find that poorer investor protection leads to higher stock holdings of controlling holders, lower gross stock returns, lower interest rates, and lower modified stock volatilities if the ownership concentration is sufficiently high. More importantly, by establishing an approximation scheme for good (bad) memory of investors on the historical market information, we conclude that good (bad) memory would increase (decrease) aforementioned dynamics and reveal that good (bad) memory strengthens (weakens) investor protection for minority shareholder when the ownership concentration is sufficiently high, while good (bad) memory inversely weakens (strengthens) investor protection for minority shareholder when the ownership concentration is sufficiently low. Our model's implications are consistent with a number of interesting facts documented in the recent literature.
 \\ \ \\
{\bf Keywords:} Investor protection; Good/Bad memory; Equilibrium; Approximate fractional Brownian motion; Pathwise convergence.
\\ \ \\
{\bf 2010 AMS Subject Classifications:}  60G22; 60H30; 91G50.

\section{Introduction}\label{section1} \noindent
\setcounter{equation}{0}
The corporation owners are often classified into two types of shareholders---controlling shareholders and minority shareholders. The controlling shareholders dominate the corporation (e.g., \cite{Gorton,La}) and then they can divert resources for their private benefits. Therefore, it is very demanding to protect minority shareholders from expropriation by controlling shareholders. Investor protection has multitudinous effects on the economy (such as asset prices, consumption, etc.) and this may shed light on how investor protection exactly affects shareholders'  stock holdings, stock returns and volatilities, interest rates, etc. Furthermore, the market state often depends on the analysis of historical data with memory properties, and then historical information may have a significant effect on the economy. In this paper, we adopt the effects of  both investor protection and past information to develop a theory in an approximate fractional economy with a general equilibrium setting, and then try to emphasize key features of the equilibrium and extensional phenomena with empirical evidence for the economy.

Investor protection in related literature has been studied with different forms under various economy settings. Shleifer and Wolfenzon (\cite{Shleifer}, 2002) assume that the entrepreneur is caught for diverting revenue with a probability and then adopt such a probability as the measure of the legal protection of investors. Under the static two-date economy setting, they reveal that better investor protection leads to lower ownership concentration, larger external capital markets and larger firms in equilibrium, and their model additionally makes a number of general equilibrium predictions which are consistent with developed empirical evidence. Albuquerue and  Wang (\cite{Albuquerue}, 2008) introduce a parameter to measure investor protection, which captures the role of laws and law enforcement protection of minority investors. With their dynamic stochastic general equilibrium model (driven by Brownian motion), they conclude that imperfect investor protection would decrease investor welfare and reduce market value, and the model additionally predicts that weaker investor protection conduces to more incentives to overinvest, higher return volatility, larger risk premia and higher interest rate. Basak et al. (\cite{Basak}, 2019) adopt the parameter limiting the controlling shareholder's power over the firm as the protection of minority shareholders. The model contains the dynamics of output, bond, stock and investors' self-financing budget constraints. By solving both controlling shareholders' and minority shareholders' optimal wealth problems in equilibrium, they draw a conclusion that better investor protection results in lower stock holdings of controlling shareholders, higher stock returns and volatilities, and higher interest rates. For more literature related to investor protection, readers are referred to \cite{Bebchuk,Doidge,Giannetti,La2} for static models, and \cite{Dow,Gompers} for dynamic models.

In reality, empirical evidence shows that asset prices behave differently depending on the market state such as good information on the past (\cite{Perez-Quiros}, 2000), market regime (\cite{Kole}, 2016), the regime of returns and/or volatilities (\cite{Kwon}, 2016 and \cite{Wahab}, 2011) and so on. Thus, it is still of much necessity to consider how to recognize the market state in the case of investor protection. For instance, how past information affects investors can be a remarkable candidate. To our best knowledge, it has been widely examined by theoretical and empirical evidence that fractional Brownian motions (fBms) are most suggested to capture memory properties of historical financial data. The antipersistent fBm (Hurst index $H<\frac{1}{2}$) has intermediate memory, whereas the persistent one (Hurst index $H>\frac{1}{2}$) has long memory (see, e.g. \cite{Rostek}). Mandelbrot and van Ness (\cite{Mandelbrot}, 1968) suggest fBms as alternative models for assets' dynamics which allow for dependence between returns over time. Kloeden et al. (\cite{Kloeden}, 2011) adopt the multilevel Monte Carlo method to SDEs with additive fractional noise of Hurst index $H>\frac{1}{2}$. Shokrollahi and K{\i}l{\i}\c{c}man (\cite{Shokrollahi}, 2016) provide a new framework for pricing currency options in accordance with the fBm model to capture long-memory property of the spot exchange rate. Fouque and Hu consider optimal portfolio problems under fractional stochastic environment in their recent papers.  A first-order approximation of the optimal value is established under the condition that the return and volatility of the asset are functions of a stationary fast mean-reverting fractional Ornstein-Uhlenbeck process (\cite{Fouque}, 2018) as well as a stationary slowly varying fractional Ornstein-Uhlenbeck process (\cite{Fouque2}, 2018) respectively. One can refer to \cite{Biagini,Li,Lo,Mishura} for more researches on fBms. As far as we know, few literature related to investor protection contains memory properties driven by fBms, and hence, we intend to develop a new model which not only contains investor protection but also relates to memory properties.

In fact, fBms can be hardly introduced to the complex economy in \cite{Basak} which is the most related to our paper. The main reason is that stochastic calculus for fBms is mainly the integral theory based on Wick-product but not usual It\^{o} integrals (see \cite{Alos,Biagini,Mishura}). For one thing, the model based on Wick-product for fBms is much less analytical and tractable for the complex economy. Hence, we may fail to reach the equilibrium in the economy and study the effects of investor protection. For another, integrals based on Wick-product for fBms are not pathwise, which implies that ``the general state of a company does not really have a noted stock price a priori, but it brings out a number (price) when confronted with a market observer (the stock market)'' (see Page 175 in \cite{Biagini}). Consequently, we could not deal with historical realized financial data in a complex economy in a pathwise way. Therefore, we choose an approximate fBm instead of fBm in the model of our economy:
\begin{equation*}
w_t^{H}=\sqrt{2H}\int_0^t{(t-s+\varepsilon)^{H-\frac{1}{2}}}d{w}_s
\end{equation*}
for standard Brownian motion $w$, the Hurst index $H\in(0,1)$ and $\varepsilon>0$, where the parameter $\sqrt{2H}$ is introduced to make the following $\mathbb{B}_t^H$  has same distribution with the associated fBm for each $t>0$. There are several reasons why we introduce the process $w_t^{H}$.  First, the process $w_t^{H}$ is indeed an approximation of fBm and then inherits the memory properties of fBm. It can be proved (see \cite{Thao,Thao02})  that when $\varepsilon$ tends to $0$, the process $w_t^H$ converges in $L^2$ to the process
\begin{equation*}
\mathbb{B}_t^H=\sqrt{2H}\int_0^t{(t-s)^{H-\frac{1}{2}}}dw_s
\end{equation*}
which is the main memory part of fBm and can be viewed as a good approximation of the fBm for large times (see Theorem 17 in \cite{Picard}). Second, the process $w_t^{H}$ is a classic It\^{o} integral with analytical expression (see (\ref{fBm})), and hence in such an economy we are able to study the effects of investor protection and apply the historical realized financial data in the economy in a pathwise way. Third, it is known that the process $w_t^{H}$ degenerates to Brownian motion for $H=\frac{1}{2}$ and hence such a fractional economy contains the usual economy. Moreover, the basic dynamic of output used in our paper can be viewed as a modified geometric Brownian motion whose increments are approximately affected by a linear combination of its past increments (see Remark \ref{remark_intro}).

We note that the process $w_t^{H}$ has been used in theoretical study as well as practical applications. Dung (\cite{Dung}, 2011) applies the approximate fBm for an approximate fractional Black-Scholes model and then the European option pricing formula is found. Because of the semimartingale property, the effect of the approximate fBm on such pricing system is reflected in the Hurst index and is independent of the historical realized stock price. Xu and Li (\cite{Xu}, 2015) use the approximate fBm to study a class of doubly perturbed neutral stochastic functional equations driven by fBm. Yue and Huang (\cite{Yue}, 2018) introduce the approximate fBm to a general Wishart process and establish $\varepsilon$-fractional Wishart process. The new process possesses the approximate memory property inheriting from fBm. Besides, applying the new process to the financial volatility theory, it extends both the one-dimension CIR process (\cite{Cox}) and the high-dimension Wishart process (see \cite{Bru},\cite{Fonseca}). However, it is also difficult to price derivatives by the historical realized stock price. Therefore, it is of much meaning to make use of the historical realized data to obtain the equilibrium and then study the effects of investor protection in such a fractional economy.

In this paper, we consider a dynamic asset pricing model in an approximate fractional economy to address empirical regularities related to both investor protection for minority shareholders and memory properties of financial data. Our model features not only a controlling shareholder who diverts a fraction of output but also good (or bad) memory which investors obtain from the historical realized output data of the economy in a pathwise way. In line with \cite{Basak}, we find that poorer investor protection leads to higher stock holdings of controlling holders, lower gross stock returns, lower interest rates, and lower modified stock volatilities if the ownership concentration is sufficiently high. More importantly, we reveal the economic behavior of the approximate fractional economy in the following aspects: (i) describe good (or bad) memory of investors for the past information and then establish an approximation scheme with pathwise convergence for the memory; (ii) conclude that good (bad) memory would increase/decrease stock holdings of controlling holders, modified stock returns and volatilities, gross stock returns, and interest rates; (iii) show that  good (bad) memory would strengthen (weaken) investor protection for minority shareholder when the ownership concentration is sufficiently high, while good (bad) memory would inversely weaken (strengthen) investor protection for minority shareholder when the ownership concentration is sufficiently low.

The rest of the paper is organized as follows. Section \ref{section2} sketches the financial model with investor protection and approximate fBms. In Section \ref{section3} we derive the dynamics of asset prices in equilibrium for the approximate fractional economy.  Before we summarizing this paper in Section \ref{section6}, the approximation scheme with pathwise convergence is established for past information by using the historical realized data of the output and some numerical results are given to address empirical regularities related to both investor protection for minority shareholders and memory properties of financial data in Section \ref{section5}.

\section{Models driven by approximate fBms}\label{section2}\noindent
\setcounter{equation}{0}
Let $(\mathbf{\Omega},\mathcal{F},(\mathcal{F}_t)_{t\geq 0}, \mathbb{P})$ be a filtered probability space satisfying the usual conditions throughout where $\mathcal{F}_t$ is generated by a standard Brownian motion $w$.

Define an approximate fBm $w_t^H$ with parameter $\varepsilon>0$ and $H\in(0,1)$ as
\[
w_t^H=\sqrt{2H}\int_0^t(t-s+\varepsilon)^h dw_s,
\]
where $t\geq 0,\ h=H-\frac{1}{2}$ and we omit the parameter $\varepsilon$ in $w_t^H$. By It\^{o}'s Lemma, we can express $w_t^H$ in an equivalent way (or one may refer to \cite{Thao} for the case $\frac{1}{2}<H<1$ and \cite{Thao02} for the case $0<H<\frac{1}{2}$):
\begin{equation}\label{fBm}
dw_t^H=\Lambda_tdt+\sqrt{2H}\varepsilon^h dw_t
\end{equation}
where $\Lambda_t=\sqrt{2H}h\int_0^t(t-s+\varepsilon)^{h-1}dw_s$.

In above approximate fractional probability space, we construct a financial market and consider its economy with investor protection (for more details, we refer to \cite{Basak}). We focus on one representative firm standing for amounts of identical firms. The exogenous stream of the output associated with the firm is captured by
\begin{equation}\label{output}
d\widehat{D}_t=\widehat{D}_t\left[\mu_Ddt+\sigma_Ddw_t^H\right],
\end{equation}
where constants $\mu_D$ and $\sigma_D>0$ represent the mean-growth rate (or return) and volatility of the output, respectively.

\begin{remark}\label{remark_intro}{\it
Here we explain why we introduce the process $w_t^{H}$. In a financial market, the traditional dynamic for the output of a product or the price of a stock (where we use the same mark $\hat{D}_t$ for simplicity) is the geometric Brownian motion (\cite{Basak})
\[
d\ln\hat{D}_t=\left(\mu_D-\frac{\sigma_D^2}{2}\right)dt+\sigma_Ddw_t.
\]
However, empirical evidence (see, for example, Section 11.6 in \cite{Franke}) shows that it may contradict the unit root tests, which implies we should not ignore correlation between two increments of the process $\ln\hat{D}_t$. To be specific, if we consider the increment of $\ln\hat{D}_t$ at any fixed time $t=t_n>0$ for the time nodes in Section \ref{subsection1}, then besides the future uncertain term $\Delta w_{t_{n+1}}=w_{t_{n+1}}-w_{t_{n}}$ contained in the geometric Brownian motion, it may be necessary to take all pathwise uncertain terms $\Delta w_{t_{k+1}}(k=0,1,\cdots n-1)$ into account. With this aim, the simplest method is to regard the increment approximately as a linear combination of aforementioned uncertain terms:
\[
\ln\hat{D}_{t_{n+1}}-\ln\hat{D}_{t_{n}}\approx a_{n,0}+\sum_{k=0}^na_{n,k+1}\Delta w_{t_{k+1}}.
\]
Specially, putting
\[
\left\{
\begin{aligned}
a_{n,0}&=(\mu_D-H\varepsilon^{2h}\sigma_D^2)\Delta t;\\
a_{n,k+1}&=\sqrt{2H}h(t_{n}-t_k+\varepsilon)^{h-1}\sigma_D\Delta t,\quad k=0,1,\cdots n-1;\\
a_{n,n+1}&=\sqrt{2H}\varepsilon^{h}\sigma_D,
\end{aligned}
\right.
\]
then Lemma \ref{lastlemma} shows that the process (\ref{output}) introduced in our paper can be viewed a good candidate for the modification of the geometric Brownian motion. Moreover, the absolute value of each parameter $a_{n,k+1}(k=0,1,\cdots,n)$ represents the effect of the uncertain term $\Delta w_{t_{k+1}}$ on the increment. Hence, the uncertain information closer to the time $t_{n+1}$ plays a greater role in the information of the increment, which is consistent with our common sense.}
\end{remark}

\begin{remark}\label{remark0}
{\it
In the approximate fractional economy, new characteristics brought by approximate fBms could be explained in two aspects--$\Lambda_t$ and $H$. It is seen that  $\Lambda_t$ is the essential difference between the approximate fBm $w_t^H$ (in the case of $H\neq \frac{1}{2}$) and the Brownian motion $w_t^{\frac{1}{2}}=w_t$. In the case $H=\frac{1}{2}$, expressing (\ref{output}) equivalently as
\begin{equation*}
d\widehat{D}_t=\widehat{D}_t\left[\mu_Ddt+\sigma_Ddw_t\right],
\end{equation*}
it is well-known that $\widehat{D}_t$ is a Markov process whose increments are independent of past information.  In the case $H\neq\frac{1}{2}$, rewriting (\ref{output}) equivalently as
\begin{equation}\label{eqH}
d\widehat{D}_t=\widehat{D}_t\left[(\mu_D+\sigma_D\Lambda_t)dt+\sqrt{2H}\varepsilon^h\sigma_Ddw_t\right],
\end{equation}
it is telling that the return $\mu_D$  and volatility $\sigma_D$ are modified as $\mu_D+\sigma_D\Lambda_t$ and $\sqrt{2H}\varepsilon^h\sigma_D$ respectively.\\
(1) As the modified return $\mu_D+\sigma_D\Lambda_t$ depends on the past information $\Lambda_t$ which is the past cumulate information of $w_t$, some functions of $\Lambda_t$ could be regarded as the measure of the approximate fBm's memory in $[0,t]$. The changing return can be viewed as a continuous version of the regime switching return in \cite{Wahab}.  As a result, in the case of $H=\frac{1}{2}$ the increment of $\widehat{D}_t$ depends only on the information of $\widehat{D}_t$, while in the case of $H\neq\frac{1}{2}$ the increment of $\widehat{D}_t$ is correlated to the information of $\{\widehat{D}_s\}_{0\leq s\leq t}$. However, we shall see in Section \ref{section5} that $\Lambda_t$ can be known in theory and be estimated in practice by using the information of $\{\widehat{D}_s\}_{0\leq s\leq t}$, which is crucial in our fractional models.\\
(2) The Hurst index $H$ also changes the modified volatility $\sqrt{2H}\varepsilon^h\sigma_D$ of the output by (\ref{eqH}). It can be verified that that $g(H)=\sqrt{2H}\varepsilon^h$ is an increasing function of $H$ in the interval $(0,-\frac{1}{2\ln\varepsilon})\cap (0,1)$ and a decreasing function of $H$ in the interval $(-\frac{1}{2\ln\varepsilon},1)\cap (0,1)$. Consequently, higher $H$ in general leads to higher volatility of the output in the former case, while higher $H$ in general leads to lower volatility of the output in the later case. If $\varepsilon>0$ is extremely small, then the output will be extremely volatile for $H<\frac{1}{2}$ and be quite stable on the contrary for $H>\frac{1}{2}$, which barely exists in our true financial market. Hence the parameter $\varepsilon>0$ in our paper are not supposed to be extremely small, and we choose it from the interval $(0.01,1).$
}
\end{remark}

The shareholders in the corporation are sorted into two types---a controlling shareholders $C$ and a representative minority shareholders $M$. A bond $B_t$ and a stock $S_t$ traded by the shareholders are respectively normalized to one unit and follow stochastic differential equations:
\begin{align}
dB_t&=B_tr_tdt,\quad B_0=1;\label{bond}\\
dS_t&=S_t\left[\mu_tdt+\sigma_tdw_t^H\right],\quad S_0>0,\label{stock}
\end{align}
where $r_t$ is the interest rate of the bond; $\mu_t$ is the mean-return (or return) of the stock; $\sigma_t$ is the volatility of the stock, and all processes $r_t,\mu_t,\sigma_t$ are $\mathcal{F}_t$-adapt and are endogenously determined in equilibrium.

Furthermore, we assume that the wealth of shareholders $C$ and $M$ at time $t$, denoted by $W_{Ct}$ and $W_{Mt}$, are made up of portfolios of $B_t$ and $S_t$ and expressed as following equations:
\begin{align}
W_{it}=&b_{it}B_t+n_{it}S_t,\label{budgetB}\\
dW_{it}=&\left(W_{it}r_t+n_{it}(S_t(\mu_t-r_t)+(1-x_t)D_t)-c_{it}+l_i\widehat{D}_t\right)dt\nonumber\\
&+\left(\mathbf{1}_{\{i=C\}}(x_tD_t-f(x_t,D_t))+\mathbf{1}_{\{i=M\}}f(x_t,D_t)\right)dt+n_{it}S_t\sigma_tdw_t^H,\label{budgetW}
\end{align}
where $i\in\{C,M\}$;
$b_{it}$ is the number of units of bounds in the shareholder's portfolio;
$n_{it}$ is the number of shares of stocks in the shareholder's portfolio;
$x_t$ is the fraction of diverted output which satisfies the investor protection constraint with a parameter $p\in[0,1]$ (interpreted as the protection of minority shareholders)
    \begin{equation}\label{protectC}
    x_t\leq (1-p)n_{Ct};
    \end{equation}
$c_{it}$ is the consumption of the shareholder $i$;
$l_i$ is the fraction of the output paid to the shareholder $i$ as labor incomes;
$D_t=(1-l_C-l_M)\widehat{D}_t$ represents the net output;
$f(x,D)=\frac{kx^2D}{2}$ is assumed to be a pecuniary cost from diverting output with a constant $k$ and the parameter $k$ captures the magnitude of the cost.

We now turn to constructing shareholders' optimization problems with investor protection and past information. For shareholders' objective functions (utility functions) in our approximate fractional economy,  not only preferences over their consumption and wealth but also affections of the past information should be taken into account. We assume shareholders' preferences  over current consumption $c$ and wealth $W$ are determined by some similar myopic preferences used in \cite{Basak}. Moreover, inspired by \cite{Kraft}, we introduce some Cobb-Douglas functions to measure how past information affects shareholders' preferences. Summarizing, shareholders' objective functions are assumed to be
\begin{equation}\label{wealth}
V_i(c_{it},W_{it},W_{i,t+dt},\Lambda_t)=\rho u_i(c_{it},\Lambda_t)dt+(1-\rho dt)\mathbb{E}_t\left[U_i(W_{i,t+dt},\Lambda_t)\right],\quad i\in\{C,M\},
\end{equation}
with
\begin{align*}
u_i(c,\Lambda)&=\frac{1}{1-\gamma_i}\left[c^{\alpha_i}\phi(\Lambda)^{1-\alpha_i}\right]^{1-\gamma_i},\\
U_i(W,\Lambda)&=\frac{1}{1-\gamma_i}\left[W^{\alpha_i}\psi(\Lambda)^{1-\alpha_i}\right]^{1-\gamma_i},
\end{align*}
where $\rho>0$ is a time-preference parameter; $\gamma_i>1$ are parameters in the power utility functions and following \cite{Basak} we assume that the controlling shareholder is less risk averse than the minority shareholder, i.e. $\gamma_M\geq\gamma_C$; $\phi$ and $\psi$ are some adjustment functions of the measure of past information $\Lambda_t$ associated with the consumption and the wealth respectively, and we further assume that $\phi$ and $\psi$ are increasing and twice continuously differential with $\phi(0)=\psi(0)=1$; $0<\alpha_i\leq 1$ are trade-off parameters between the past information and the current information.

Both functions $\phi,\psi$ and parameters $\alpha_i$ reflect how investors are affected by the past information.  The adjusted functions $\phi$ and $\psi$ describe the measure of the past information estimated by general investors. Lower trade-off parameter $\alpha_C$ ($\alpha_M$, respectively) depicts higher sensitivity of the controlling (minority) shareholder to the past information, which implies the past information has greater effects on the controlling (minority) shareholder. On the contrary, higher trade-off parameter $\alpha_C$ ($\alpha_M$, respectively) depict lower sensitivity of the controlling (minority) shareholder to the past information, which implies the past information has fewer effects on the controlling (minority) shareholder.  If $\alpha_M\leq\alpha_C$ ($\alpha_M\geq\alpha_C$), then the minority (controlling) shareholder is more sensitive to the past information and he would attach more importance to the past information obtained at the present time.

Here we give an example of functions $\phi,\psi$. It is seen by (\ref{eqH}) that
\[
\hat{D}_{t+dt}=\hat{D}_{t}\exp\left\{(\mu_D-H\varepsilon^{2h}\sigma_D^2)dt+\sqrt{2H}\varepsilon^h\sigma_Ddw_t\right\}
\times e^{\sigma_Ddt\cdot \Lambda_t},
\]
which implies, for the constant time increment $dt$ at time $t$, the increment of output related to the past information $\Lambda_t$ is $e^{\sigma_Ddt\cdot \Lambda_t}$. Hence the adjustment function $\psi$ can be chosen in the form of
\begin{equation}\label{psi}
\psi(\Lambda)=e^{\beta_1\Lambda}
\end{equation}
for some constant $\beta_1>0$. As the consumption contains the necessary part (for example, the infrastructure development of the corporation) which is mainly associated with the present wealth wether or not the historical market behaves badly, the past information should have fewer effects on the consumption and we can simply assume that
\begin{equation}\label{phi}
\phi(\Lambda)=e^{\beta_2\Lambda}
\end{equation}
with some constant $0\leq\beta_2<\beta_1$. In this paper, we shall derive equilibrium results for general functions $\phi$ and $\psi$, and then give results for (\ref{psi}) and (\ref{phi}) as a special example.

The controlling shareholder $C$ maximizes $V_C$ through strategies of the investment $n_{Ct}$, the consumption $c_{Ct}$ and the diverting fraction $x_t$:
\begin{equation}\label{problemC}
\max\limits_{n_{Ct},c_{Ct},x_t}V_C(c_{Ct},W_{Ct},W_{C,t+dt},\Lambda_t),
\end{equation}
where $V_C$ is given by (\ref{wealth}) in the case $i=C$, subject to constraints (\ref{budgetW}) and (\ref{protectC}), and maximum share constraint $n_{Ct}\leq 1$. Similarly, the minority shareholder $M$ maximizes $V_M$ through strategies of the investment $n_{Mt}$ and the consumption $c_{Mt}$:
\begin{equation}\label{problemM}
\max\limits_{n_{Mt},c_{Mt}}V_M(c_{Mt},W_{Mt},W_{M,t+dt},\Lambda_t),
\end{equation}
where $V_M$ is given by (\ref{wealth}) in the case $i=M$, subject to the constraint (\ref{budgetW}) and maximum share constraint $n_{Mt}\leq 1$.

\begin{remark}\label{remark1}{\it
(1) Similar to Remark \ref{remark0}, characteristics brought by approximate fBms could be discussed for the stock $S$ and the wealth $W_i$. Furthermore, if $H=\frac{1}{2}$, then the parameter $\varepsilon$ disappears in $w_t^H$ and our models (\ref{output})-(\ref{budgetW}) degenerate to ones studied in \cite{Basak}.\\
(2) In the case of $H=\frac{1}{2}$, the past information $\Lambda\equiv0$ and we can additionally request the hypothesis $\alpha_i=1$ in which case shareholders' objective functions are equivalent to those used in \cite{Basak} (except a constant).
}
\end{remark}

\section{Equilibrium with investor protection and past information}\label{section3}\noindent
\setcounter{equation}{0}
In this section, we solve shareholders' optimization problems (\ref{problemC}) and (\ref{problemM}) in the models driven by approximate fBms and derive investors' optimal strategies and asset price dynamics in equilibrium. To this end, we start from a partial equilibrium setting where processes $r_t,\mu_t$ and $\sigma_t$ are regarded as given processes. Then, making use of market clearing conditions, $r_t,\mu_t$ and $\sigma_t$ are truly obtained in equilibrium.

Using (\ref{fBm}) and It\^{o}'s Lemma we first decompose problems (\ref{problemC}) and (\ref{problemM}) into some equivalent but more precise problems. To be brief, we denote for $i\in\{C,M\}$,
\begin{align*}
\delta_i=&1-\alpha_i(1-\gamma_i)> 0,\\
\theta_i=&1-\frac{\gamma_i}{\delta_i}\leq 0,\\
\varphi=&\frac{\phi}{\psi}.
\end{align*}
Noting that $\mathbb{E}_t[\sim dw_t]=0$ with some $\mathcal{F}_t$-measurable expressions $\sim$ and for $i\in\{C,M\}$,
\begin{equation*}
\mathbb{E}_t[U_i(W_{i,t+dt},\Lambda_t)]=U_i(W_{it},\Lambda_t)+\frac{1}{1-\gamma_i}
\psi(\Lambda_t)^{(1-\alpha_i)(1-\gamma_i)}\mathbb{E}_t[d(W_{it})^{\alpha_i(1-\gamma_i)}],
\end{equation*}
it is seen that
\begin{align*}
&V_i(c_{it},W_{it},W_{i,t+dt},\Lambda_t)\\
=&\left[\frac{\rho}{1-\gamma_i}\phi(\Lambda_t)^{(1-\alpha_i)(1-\gamma_i)}c_{it}^{\alpha_i(1-\gamma_i)}
-\alpha_i\psi(\Lambda_t)^{(1-\alpha_i)(1-\gamma_i)}W_{it}^{-\delta_i}c_{it}\right]dt\\
&+(1-\rho dt)U_i(W_{it},\Lambda_t)
+\alpha_i\psi(\Lambda_t)^{(1-\alpha_i)(1-\gamma_i)}W_{it}^{-\delta_i}l_i\widehat{D}_tdt\\
&+\alpha_i\psi(\Lambda_t)^{(1-\alpha_i)(1-\gamma_i)}W_{it}^{\alpha_i(1-\gamma_i)}
\left\{r_t+\frac{n_{it}S_t}{W_{it}}\left(\mu_t-r_t+(1-x_t)\frac{D_t}{S_t}+\sigma_t\Lambda_t\right)\right.\\
&\left.+\mathbf{1}_{\{i=C\}}\left(x_t\frac{D_t}{W_{it}}-\frac{kx_t^2}{2}\frac{D_t}{W_{it}}\right)+\mathbf{1}_{\{i=M\}}\frac{kx_t^2}{2}\frac{D_t}{W_{it}}
-H\varepsilon^{2h}\delta_i\left(\frac{S_t\sigma_t}{W_{it}}\right)^2n_{it}^2\right\}dt.
\end{align*}
Hence, the problem (\ref{problemC}) for shareholder $C$ is equivalent to
\begin{equation}\label{problemCc}
\max\limits_{c_{Ct}}\left\{\frac{\rho}{1-\gamma_C}\phi(\Lambda_t)^{(1-\alpha_C)(1-\gamma_C)}
c_{Ct}^{\alpha_C(1-\gamma_C)}
-\alpha_C\psi(\Lambda_t)^{(1-\alpha_C)(1-\gamma_C)}W_{Ct}^{-\delta_C}c_{Ct}\right\};
\end{equation}
\begin{equation}\label{problemCn}
\max\limits_{n_{Ct},x_t}J_C(n_{Ct};x_t),
\end{equation}
where
\begin{align}
J_C(n_{Ct};x_t)=&\frac{n_{Ct}S_t}{W_{Ct}}\left(\mu_t-r_t+(1-x_t)\frac{D_t}{S_t}+\sigma_t\Lambda_t\right)\nonumber\\
&+x_t\frac{D_t}{W_{Ct}}-\frac{kx_t^2}{2}\frac{D_t}{W_{Ct}}
-H\varepsilon^{2h}\delta_C\left(\frac{S_t\sigma_t}{W_{Ct}}\right)^2n_{Ct}^2.\label{JCC}
\end{align}
And the problem (\ref{problemM}) for shareholder $M$ is equivalent to
\begin{equation}\label{problemMc}
\max\limits_{c_{Mt}}\left\{\frac{\rho}{1-\gamma_M}\phi(\Lambda_t)^{(1-\alpha_M)(1-\gamma_M)}
c_{Mt}^{\alpha_M(1-\gamma_M)}
-\alpha_M\psi(\Lambda_t)^{(1-\alpha_M)(1-\gamma_M)}W_{Mt}^{-\delta_M}c_{Mt}\right\};
\end{equation}
\begin{equation}\label{problemMn}
\max\limits_{n_{Mt}}J_M(n_{Mt}),
\end{equation}
where
\begin{equation*}
J_M(n_{Mt})=\frac{n_{Mt}S_t}{W_{Mt}}\left(\mu_t-r_t+(1-x_t)\frac{D_t}{S_t}+\sigma_t\Lambda_t\right)\\
-H\varepsilon^{2h}\delta_M\left(\frac{S_t\sigma_t}{W_{Mt}}\right)^2n_{Mt}^2.
\end{equation*}

Under a partial equilibrium setting, we first obtain investors' optimal strategies which is the following theorem.
\begin{theorem}\label{th1}{\it
In the model driven by the approximate fBm, the optimal consumptions $c_{it}^*$, the fraction of diverted output $x_t^*$ and the optimal stock holding $n_{it}^*$ for $i\in\{C,M\}$ are given as follows:
\begin{align}
c_{it}^*=&\rho^{\frac{1}{\delta_i}}\varphi(\Lambda_t)^{\theta_i}W_{it},\quad i\in\{C,M\};\label{ps_c}\\
x_t^*=&x_t^*(n_{Ct}^*)=\min\left\{\frac{1-n_{Ct}^*}{k},(1-p)n_{Ct}^*\right\};\label{xx}\\
n_{Ct}^*=&n_{Ct,i}^*,\quad \text{if}\, J_C^*=J_C(n_{Ct,i}^*;x_t^*)(\text{Region}\ i),\, i\in\{1,2,3,4\};\label{ps_nC}\\
n_{Mt}^*=&\frac{\mu_t-r_t+(1-x_t^*)\frac{D_t}{S_t}+\sigma_t\Lambda_t}{2H\varepsilon^{2h}\delta_M\frac{S_t}{W_{Mt}}\sigma_t^2},\label{ps_nM}
\end{align}
where
\begin{align}
n_{Ct,1}^*=&\frac{\mu_t-r_t+(2-p)\frac{D_t}{S_t}+\sigma_t\Lambda_t}{2H\varepsilon^{2h}\delta_C\frac{S_t}{W_{Ct}}\sigma_t^2
+(2(1-p)+k(1-p)^2)\frac{D_t}{S_t}};\label{nC1}\\
n_{Ct,2}^*=&\frac{\mu_t-r_t+\left(1-\frac{1}{k}\right)\frac{D_t}{S_t}+\sigma_t\Lambda_t}{2H\varepsilon^{2h}\delta_C\frac{S_t}{W_{Ct}}\sigma_t^2
-\frac{1}{k}\frac{D_t}{S_t}};\label{nC2}\\
n_{Ct,3}^*=&\frac{1}{1+(1-p)k};\label{nC3}\\
n_{Ct,4}^*=&1;\label{nC4}
\end{align}
and
\begin{equation}\label{J_C}
J_C^*=\max\left\{J_C(n_{Ct,1}^*;x_t^*),\, J_C(n_{Ct,2}^*;x_t^*),
J_C(n_{Ct,3}^*;x_t^*),\, J_C(n_{Ct,4}^*;x_t^*)\right\}.
\end{equation}
}
\end{theorem}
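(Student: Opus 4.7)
The plan is to exploit the decomposition already achieved in (\ref{problemCc})--(\ref{problemMn}), which has split each shareholder's problem into an unconstrained consumption subproblem and an investment subproblem (augmented by a diversion choice for $C$). I would dispatch the consumption subproblems and the minority investment problem by direct first-order conditions, and treat the controlling shareholder's joint $(n_{Ct},x_t)$ problem by a two-stage reduction: first eliminate $x_t$ in closed form, and then analyze a piecewise-quadratic one-variable problem in $n_{Ct}\in[0,1]$.

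For the consumption subproblems (\ref{problemCc}) and (\ref{problemMc}), since $\gamma_i>1$ and $\alpha_i\in(0,1]$ the exponent $\alpha_i(1-\gamma_i)$ is negative, so each objective is strictly concave in $c_{it}$ on $(0,\infty)$. Setting the derivative to zero and using the identity $\alpha_i(1-\gamma_i)-1=-\delta_i$ yields
\[
c_{it}^{\delta_i}\;=\;\rho\,\phi(\Lambda_t)^{(1-\alpha_i)(1-\gamma_i)}\,\psi(\Lambda_t)^{-(1-\alpha_i)(1-\gamma_i)}\,W_{it}^{\delta_i}.
\]
Taking $\delta_i$-th roots and noting that $(1-\alpha_i)(1-\gamma_i)/\delta_i=(\delta_i-\gamma_i)/\delta_i=\theta_i$, together with $\varphi=\phi/\psi$, delivers (\ref{ps_c}) at once.

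For the minority investment problem (\ref{problemMn}), $J_M$ is a concave quadratic in $n_{Mt}$ with leading coefficient $-H\varepsilon^{2h}\delta_M(S_t\sigma_t/W_{Mt})^2<0$ (since $\gamma_M>1$ makes $\delta_M>0$), so the unique unconstrained maximizer comes from the first-order condition and reproduces (\ref{ps_nM}); under standard parameters this value lies in $[0,1]$, so the cap $n_{Mt}\le 1$ is slack. For the controlling shareholder (\ref{problemCn}) I proceed in two stages. For fixed $n_{Ct}$, $J_C$ in (\ref{JCC}) is concave quadratic in $x_t$ (leading coefficient $-kD_t/(2W_{Ct})$) with interior first-order condition giving $x_t=(1-n_{Ct})/k$; this candidate is feasible under (\ref{protectC}) iff $n_{Ct}\ge 1/(1+k(1-p))=n_{Ct,3}^*$, and otherwise the protection cap binds at $x_t=(1-p)n_{Ct}$. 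Together these yield (\ref{xx}). Substituting each branch of $x_t^{*}(n_{Ct})$ into $J_C$ produces a concave quadratic in $n_{Ct}$ on either side of the kink: the binding branch yields the unconstrained maximizer $n_{Ct,1}^*$ of (\ref{nC1}), and the interior branch yields $n_{Ct,2}^*$ of (\ref{nC2}). Adjoining the kink $n_{Ct,3}^*$ and the upper endpoint $n_{Ct,4}^*=1$ exhausts the candidates on $[0,1]$, and (\ref{J_C}) picks the global maximizer, justifying (\ref{ps_nC}).

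The main obstacle is the case analysis in this last step: the reduced quadratics depend on endogenous processes, and one must argue carefully that the four listed points genuinely exhaust the candidate optima. In particular, I would need to check that the leading coefficient of each reduced quadratic in $n_{Ct}$ is negative---this amounts to verifying positivity of the denominator in (\ref{nC2}) on Region~2, for otherwise that branch is convex and its maximum is pushed to a boundary rather than to $n_{Ct,2}^*$---and to confirm continuity of the composite objective at the kink, so that $n_{Ct,3}^*$ enters only through the comparison in (\ref{J_C}). Once these sign conditions are in hand, the proof reduces to reading off first-order conditions on each branch and comparing four explicit values of $J_C$.
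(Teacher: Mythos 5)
Your proposal follows essentially the same route as the paper: closed-form first-order conditions for the consumption subproblems and the minority investment problem, elimination of $x_t$ via its concave quadratic first-order condition against the protection cap to obtain (\ref{xx}), and a piecewise-quadratic analysis in $n_{Ct}$ on the two sides of the kink whose candidate optima are exactly $n_{Ct,1}^*,\dots,n_{Ct,4}^*$, compared through (\ref{J_C}). The only nuance is the sign condition you flag for the branch containing $n_{Ct,2}^*$: the paper does not need it, since it explicitly allows $J_{C2}$ to be concave, convex, or even linear, and the candidate list stays exhaustive because the endpoints $n_{Ct,3}^*$ and $n_{Ct,4}^*$ are already included, so a convex branch simply never selects $n_{Ct,2}^*$ in the final comparison.
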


\begin{remark}\label{memory}{\it
Compared with Proposition 1 in \cite{Basak}, the past information $\Lambda_t$ plays an important and different role under the partial equilibrium setting. It seems that both the controlling shareholder and the minority shareholder can benefit from the past information $\Lambda_t$. If $\Lambda_t=0$, then by (\ref{ps_nM}) and (\ref{nC1})-(\ref{nC2}), the past information $\Lambda_t$ contributes nothing to investors' shares, which implies that shareholders have ``no memory" for the past information $\Lambda_t$.
If $\Lambda_t>0$, then by (\ref{ps_nM}) and (\ref{nC1})-(\ref{nC2}), shareholders would acquire more shares through the past information $\Lambda_t$, which implies that the past information $\Lambda_t>0$ is a kind of ``good memory" for shareholders. Similarly, if $\Lambda_t<0$, then by (\ref{ps_nM}) and (\ref{nC1})-(\ref{nC2}), shareholders would acquire fewer shares through the past information $\Lambda_t$, which implies that the past information $\Lambda_t<0$ is a kind of ``bad memory" for shareholders. Moreover, if there exists a variety of stocks in the market, it is worthwhile to note that we can find counterparts for above financial phenomena: in a period when ``good memory" always stays, the market turns out to be a bull market, while in a period when ``bad memory" always lasts, the market turns out to be a bear market.
}
\end{remark}

In the following part of this section, we obtain equilibrium dynamics $\mu_t,\sigma_t,r_t,n_{it}^*,b^*_{it}$ and $c_{it}^*$ ($i\in\{C,M\}$) in (\ref{bond})-(\ref{budgetW}). We define a standard equilibrium as a set of $\mu_t,\sigma_t,r_t,n_{it}^*,b_{it}^*$ and $c^*_{it}$ ($i\in\{C,M\}$) satisfying the following market clearing conditions (\cite{Basak})
\begin{align}
n_{Ct}^*+n_{Mt}^*&=1,\label{ec-1}\\
b_{Ct}^*+b_{Mt}^*&=0,\label{ec-2}\\
c_{Ct}^*+c_{Mt}^*&=\widehat{D}_t.\label{ec-3}
\end{align}
To this end, we introduce a new process
\[\lambda_t=\sqrt{2H}h(h-1)\int_0^t(t-s+\varepsilon)^{h-2}dw_s.\]
It is obvious that $\lambda_t$ is an $\mathcal{F}_t$-measurable process for which we will also construct an approximation scheme with pathwise convergence in Section \ref{section5}.  Similar to $w_t^H$, we have
\[d\Lambda_t=\lambda_tdt+\sqrt{2H}h\varepsilon^{h-1}dw_t.\]

Denote the minority shareholder's share in the aggregate consumption by
$$y_t=\frac{c_{Mt}^*}{\widehat{D}_t}.$$
The process $y_t$ is an important state process to derive equilibrium dynamics and following \cite{Basak}, we assume and then verify that $y_t$ satisfies a form of
\begin{equation}
dy_t=\mu_{yt}dt+\sigma_{yt}dw_t^H,
\end{equation}
where $\mu_{yt}$ and $\sigma_{yt}$ are functions of processes $y_t$, $\Lambda_t$ and $\lambda_t$ to be determined in equilibrium. Moreover, we also define a similar Sharpe ratio $\kappa_t$ as
\begin{equation}\label{kt}
\kappa_t=\frac{\mu_t-r_t+(1-x_t^*)\frac{D_t}{S_t}+\sigma_t\Lambda_t}{\sqrt{2H}\varepsilon^h \sigma_t}.
\end{equation}

\begin{theorem}\label{th2}{\it
In the model driven by the approximate fBm and under the equilibrium conditions (\ref{ec-1})-(\ref{ec-3}) with $p\in[0,1]$ and $y_t\in(0,1)$, the shareholders' optimal consumptions $c_{it}^*$ ($i\in\{C,M\}$) are given by (\ref{ps_c}), and the interest rate $r_t$, the stock mean-return $\mu_t$, the stock volatility $\sigma_t$, the Sharpe ratio $\kappa_t$, the parameters $\mu_{yt}$ and $\sigma_{yt}$ are given by:
\begin{align}
r_t=&\mu_D+\sigma_D\Lambda_t-l_C\rho^{\frac{1}{\delta_C}}\varphi(\Lambda_t)^{\theta_C}-l_M\rho^{\frac{1}{\delta_M}}\varphi(\Lambda_t)^{\theta_M}\nonumber\\
&-n_{Ct}^*\sigma_t\left[\sqrt{2H}\varepsilon^h\kappa_t+2Hh\varepsilon^{2h-1}\theta_C\frac{\varphi'(\Lambda_t)}{\varphi(\Lambda_t)}\right]
\left[1-y_t+y_t\rho^{\frac{1}{\delta_C}-\frac{1}{\delta_M}}\varphi(\Lambda_t)^{\theta_C-\theta_M}\right]\nonumber\\
&-(1-n_{Ct}^*)\sigma_t\left[\sqrt{2H}\varepsilon^h\kappa_t+2Hh\varepsilon^{2h-1}\theta_M\frac{\varphi'(\Lambda_t)}{\varphi(\Lambda_t)}\right]
\left[y_t+(1-y_t)\rho^{\frac{1}{\delta_M}-\frac{1}{\delta_C}}\varphi(\Lambda_t)^{\theta_M-\theta_C}\right]\nonumber\\
&+(1-y_t)\left[\rho^{\frac{1}{\delta_C}}\varphi(\Lambda_t)^{\theta_C}-\theta_C\lambda_t\frac{\varphi'(\Lambda_t)}{\varphi(\Lambda_t)}
-Hh^2\varepsilon^{2h-2}\theta_C\left((\theta_C-1)\left(\frac{\varphi'(\Lambda_t)}{\varphi(\Lambda_t)}\right)^2
+\frac{\varphi''(\Lambda_t)}{\varphi(\Lambda_t)}\right)\right]\nonumber\\
&+y_t\left[\rho^{\frac{1}{\delta_M}}\varphi(\Lambda_t)^{\theta_M}-\theta_M\lambda_t\frac{\varphi'(\Lambda_t)}{\varphi(\Lambda_t)}
-Hh^2\varepsilon^{2h-2}\theta_M\left((\theta_M-1)\left(\frac{\varphi'(\Lambda_t)}{\varphi(\Lambda_t)}\right)^2
+\frac{\varphi''(\Lambda_t)}{\varphi(\Lambda_t)}\right)\right]\nonumber\\
&-\left(x_t^*-\frac{kx_t^{*2}}{2}\right)(1-l_C-l_M)\rho^{\frac{1}{\delta_C}}\varphi(\Lambda_t)^{\theta_C}
-\frac{kx_t^{*2}}{2}(1-l_C-l_M)\rho^{\frac{1}{\delta_M}}\varphi(\Lambda_t)^{\theta_M},\label{rt}\\
\mu_t=&r_t-\sigma_t\Lambda_t+\sqrt{2H}\varepsilon^h\kappa_t\sigma_t
-\frac{(1-x_t^*)(1-l_C-l_M)}{(1-y_t)\rho^{-\frac{1}{\delta_C}}\varphi(\Lambda_t)^{-\theta_C}+y_t\rho^{-\frac{1}{\delta_M}}\varphi(\Lambda_t)^{-\theta_M}},\label{miut}\\
\sigma_t=&\frac{\sigma_D-\frac{h}{\varepsilon}
\frac{\varphi'(\Lambda_t)}{\varphi(\Lambda_t)}\left[\theta_C(1-y_t)+\theta_My_t\right]}
{\left[n_{Ct}^*\rho^{\frac{1}{\delta_C}}\varphi(\Lambda_t)^{\theta_C}+(1-n_{Ct}^*)\rho^{\frac{1}{\delta_M}}\varphi(\Lambda_t)^{\theta_M}\right]
\left[(1-y_t)\rho^{-\frac{1}{\delta_C}}\varphi(\Lambda_t)^{-\theta_C}+y_t\rho^{-\frac{1}{\delta_M}}\varphi(\Lambda_t)^{-\theta_M}\right]},\label{sigmat}\\
\kappa_t=&\frac{\sqrt{2H}\varepsilon^h\delta_M\rho^{\frac{1}{\delta_M}}\varphi(\Lambda_t)^{\theta_M}(1-n_{Ct}^*)
\left[\sigma_D-\frac{h}{\varepsilon}
\frac{\varphi'(\Lambda_t)}{\varphi(\Lambda_t)}\left(\theta_C(1-y_t)+\theta_My_t\right)\right]}
{y_t\left[n_{Ct}^*\rho^{\frac{1}{\delta_C}}\varphi(\Lambda_t)^{\theta_C}+(1-n_{Ct}^*)\rho^{\frac{1}{\delta_M}}\varphi(\Lambda_t)^{\theta_M}\right]},\label{ktt}\\
\sigma_{yt}=&y_t\left(\frac{\kappa_t}{\sqrt{2H}\varepsilon^h\delta_M}+\theta_M\frac{h}{\varepsilon}
\frac{\varphi'(\Lambda_t)}{\varphi(\Lambda_t)}-\sigma_D\right),\label{muy}\\
\mu_{yt}=&-\sigma_{yt}(\Lambda_t+2H\varepsilon^{2h}\sigma_D)+l_M\rho^{\frac{1}{\delta_M}}\varphi(\Lambda_t)^{\theta_M}
+\frac{kx_t^{*2}}{2}(1-l_C-l_M)\rho^{\frac{1}{\delta_M}}\varphi(\Lambda_t)^{\theta_M}\nonumber\\
&+y_t\bigg[r_t-\mu_D-\sigma_D\Lambda_t+\frac{\kappa_t^2}{\delta_M}+\sqrt{2H}h\varepsilon^{h-1}\theta_M
\frac{\varphi'(\Lambda_t)}{\varphi(\Lambda_t)}\frac{\kappa_t}{\delta_M}-\rho^{\frac{1}{\delta_M}}\varphi(\Lambda_t)^{\theta_M}\nonumber\\
&+\theta_M\lambda_t\frac{\varphi'(\Lambda_t)}{\varphi(\Lambda_t)}
+Hh^2\varepsilon^{2h-2}\theta_M\left((\theta_M-1)\left(\frac{\varphi'(\Lambda_t)}{\varphi(\Lambda_t)}\right)^2
+\frac{\varphi''(\Lambda_t)}{\varphi(\Lambda_t)}\right)\bigg].\label{sy}
\end{align}
The shareholders' optimal bond holdings $b_{it}^*$ and stock holdings $n_{it}^*$ ($i\in\{C,M\}$) could be expressed as follows:
\begin{align}
b_{Ct}^*=&\exp\left(-\int_0^tr_sds\right)(W_{Ct}-n_{Ct}^*S_t),\label{bcc}\\
b_{Mt}^*=&-b_{Ct}^*,\\
n_{Ct}^*=&n_{Ct,j}^*,\quad \text{if}\, J_C^*=J_C(n_{Ct,j}^*;x_t^*)(\text{Region}\, j),\, j\in\{1,2,3,4\},\nonumber\\
n_{Mt}^*=&1-n_{Ct}^*,\label{nmt}
\end{align}
where $n_{Ct,1}^*$ satisfies the following fixed point problem
\begin{align}
n_{Ct,1}^*=&n_{Ct,2}^*+n_{Ct,2}^*\left(1-\frac{n_{Ct,1}^*}{n_{Ct,3}^*}\right)\frac{(1-p)(1-l_C-l_M)}{2H\varepsilon^{2h}}
\frac{y_t}{{\delta_M}\rho^{\frac{1}{\delta_M}}\varphi(\Lambda_t)^{\theta_M}}\nonumber\\
&\times\frac{\left[n_{Ct,1}^*\rho^{\frac{1}{\delta_C}}\varphi(\Lambda_t)^{\theta_C}+(1-n_{Ct,1}^*)\rho^{\frac{1}{\delta_M}}\varphi(\Lambda_t)^{\theta_M}\right]^2
}{\left[\sigma_D-\frac{h}{\varepsilon}\frac{\varphi'(\Lambda_t)}{\varphi(\Lambda_t)}\left(\theta_C(1-y_t)+\theta_My_t\right)\right]^2};\label{thnC1}
\end{align}
$n_{Ct,i}^*$ ($i\in\{2,3,4\}$) satisfy that
\begin{align}
n_{Ct,2}^*=&\frac{\frac{1-y_t}{{\delta_C}\rho^{\frac{1}{\delta_C}}\varphi(\Lambda_t)^{\theta_C}}}
{\frac{1-y_t}{{\delta_C}\rho^{\frac{1}{\delta_C}}\varphi(\Lambda_t)^{\theta_C}}+
\frac{y_t}{{\delta_M}\rho^{\frac{1}{\delta_M}}\varphi(\Lambda_t)^{\theta_M}}},\label{thnC2}\\
n_{Ct,3}^*=&\frac{1}{1+(1-p)k},\label{thnC3}\\
n_{Ct,4}^*=&1;\label{thnC4}
\end{align}
and $J_C^*$ is given by (\ref{JCC}) and (\ref{J_C}) with $x_t^*,r_t,\mu_t,\sigma_t$ given by (\ref{xx}),(\ref{rt})-(\ref{sigmat}) and
\begin{align}
\frac{{D}_t}{S_t}=&\frac{1-l_C-l_M}{(1-y_t)\rho^{-\frac{1}{\delta_C}}\varphi(\Lambda_t)^{-\theta_C}+y_t\rho^{-\frac{1}{\delta_M}}\varphi(\Lambda_t)^{-\theta_M}},\label{rato4}\\
\frac{{D}_t}{W_{Ct}}=&\frac{1-l_C-l_M}{(1-y_t)\rho^{-\frac{1}{\delta_C}}\varphi(\Lambda_t)^{-\theta_C}},\label{rato5}\\
\frac{S_t}{W_{Ct}}=&\frac{(1-y_t)\rho^{-\frac{1}{\delta_C}}\varphi(\Lambda_t)^{-\theta_C}+y_t\rho^{-\frac{1}{\delta_M}}\varphi(\Lambda_t)^{-\theta_M}}{(1-y_t)\rho^{-\frac{1}{\delta_C}}\varphi(\Lambda_t)^{-\theta_C}}.\label{rato6}
\end{align}
}
\end{theorem}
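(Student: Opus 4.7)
The plan is to leverage Theorem \ref{th1}, which already provides the optimal strategies under a partial equilibrium (with $r_t,\mu_t,\sigma_t$ treated as given), and then impose the market clearing conditions (\ref{ec-1})--(\ref{ec-3}) to pin down these processes together with $\mu_{yt},\sigma_{yt}$ and the bond/stock holdings. First I substitute the optimal consumptions (\ref{ps_c}) into the aggregate consumption condition (\ref{ec-3}); combined with the definition $y_t=c_{Mt}^*/\widehat{D}_t$, this gives
\begin{align*}
W_{Mt} &= y_t\widehat{D}_t\rho^{-1/\delta_M}\varphi(\Lambda_t)^{-\theta_M},\\
W_{Ct} &= (1-y_t)\widehat{D}_t\rho^{-1/\delta_C}\varphi(\Lambda_t)^{-\theta_C}.
\end{align*}
Combining the budget identity (\ref{budgetB}) with bond clearing (\ref{ec-2}) and stock clearing (\ref{ec-1}) forces $S_t=W_{Ct}+W_{Mt}$, from which the closed-form ratios (\ref{rato4})--(\ref{rato6}) follow immediately.

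Next I would differentiate the identity $S_t=W_{Ct}+W_{Mt}$ via It\^o's lemma, substituting the output dynamics (\ref{output}) for $\widehat{D}_t$ and $d\Lambda_t=\lambda_t dt+\sqrt{2H}h\varepsilon^{h-1}dw_t$. Matching the diffusion coefficient of the resulting SDE against that of (\ref{stock}), rewritten via (\ref{fBm}) in its It\^o form $dS_t=S_t[(\mu_t+\sigma_t\Lambda_t)dt+\sqrt{2H}\varepsilon^h\sigma_t dw_t]$, yields the equilibrium volatility $\sigma_t$ in (\ref{sigmat}). The Sharpe ratio $\kappa_t$ in (\ref{ktt}) then follows from the minority shareholder's first-order condition (\ref{ps_nM}): solve for $\mu_t-r_t+(1-x_t^*)D_t/S_t+\sigma_t\Lambda_t$, substitute $n_{Mt}^*=1-n_{Ct}^*$ using (\ref{ec-1}), and combine with (\ref{rato6}). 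The stock mean-return $\mu_t$ in (\ref{miut}) is then read off directly from the definition (\ref{kt}).

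The interest rate (\ref{rt}) requires one extra step: I apply It\^o's lemma to $c_{it}^*=\rho^{1/\delta_i}\varphi(\Lambda_t)^{\theta_i}W_{it}$, using the wealth dynamics (\ref{budgetW}) in which $r_t$ enters the drift. Summing over $i\in\{C,M\}$ and equating the drift of $c_{Ct}^*+c_{Mt}^*$ to that of $\widehat{D}_t$ from (\ref{output}) (which, in its It\^o form, has drift $(\mu_D+\sigma_D\Lambda_t)\widehat{D}_t$) produces a linear equation in $r_t$; solving and collecting terms gives (\ref{rt}). The $y_t$-dynamics (\ref{muy})--(\ref{sy}) are obtained analogously by applying It\^o to $y_t=\rho^{1/\delta_M}\varphi(\Lambda_t)^{\theta_M}W_{Mt}/\widehat{D}_t$, plugging in $d\Lambda_t$, (\ref{output}) and (\ref{budgetW}) for $W_{Mt}$ with the optimal $c_{Mt}^*,n_{Mt}^*$, then identifying the $dw_t$ coefficient with $\sigma_{yt}\sqrt{2H}\varepsilon^h$ (up to the $\Lambda_t$ drift absorbed into $\mu_{yt}$) and the remaining drift with $\mu_{yt}+\sigma_{yt}\Lambda_t$.

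Finally, $b_{Ct}^*$ in (\ref{bcc}) follows by solving the budget identity $W_{Ct}=b_{Ct}^*B_t+n_{Ct}^*S_t$ for $b_{Ct}^*$ with $B_t=\exp(\int_0^t r_s ds)$, and $b_{Mt}^*=-b_{Ct}^*$ by (\ref{ec-2}). The fixed-point relation (\ref{thnC1}) for $n_{Ct,1}^*$ is obtained by substituting the equilibrium expressions for $\sigma_t$, $\kappa_t$, $D_t/S_t$ and $S_t/W_{Ct}$ into the Region~1 formula (\ref{nC1}), while (\ref{thnC2})--(\ref{thnC4}) follow from direct substitution into (\ref{nC2})--(\ref{nC4}). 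I expect the main obstacle to be the sheer bookkeeping in the It\^o expansions that lead to $r_t$ and the $y_t$-dynamics: many $\Lambda_t$-dependent correction terms (involving $\varphi'/\varphi$, $\varphi''/\varphi$, $\lambda_t$, and the quadratic variation factor $Hh^2\varepsilon^{2h-2}$) must be tracked and recombined carefully, and the nonlinear fixed-point equation for $n_{Ct,1}^*$ is delicate, though purely algebraic once the equilibrium substitutions are in place.
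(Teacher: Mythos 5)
Your overall strategy is the one the paper follows: derive the wealth/consumption ratios from (\ref{ps_c}) and the clearing conditions, read off $\mu_t$ from the Sharpe-ratio definition (\ref{kt}), get $\kappa_t$ from the minority shareholder's first-order condition (\ref{ps_nM}), obtain $r_t$ and the $y_t$-dynamics by It\^{o} expansion and drift/diffusion matching, and recover the fixed-point equations (\ref{thnC1})--(\ref{thnC4}) by substituting the equilibrium quantities into (\ref{nC1})--(\ref{nC4}). All of that is sound and matches the paper's proof.

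There is, however, one step that fails as you describe it: the derivation of $\sigma_t$ from the identity $S_t=W_{Ct}+W_{Mt}$. If you differentiate that identity using the wealth dynamics (\ref{budgetW}) and the stock dynamics (\ref{stock}), the diffusion terms are $(n_{Ct}^*+n_{Mt}^*)S_t\sigma_t\,dw_t^H$ on the right and $S_t\sigma_t\,dw_t^H$ on the left, which coincide identically once $n_{Ct}^*+n_{Mt}^*=1$ is imposed; the drift terms likewise cancel by the budget constraints and clearing conditions. So this identity is automatically satisfied and yields no equation for $\sigma_t$. (The alternative reading, in which you first substitute $W_{it}$ in terms of $\widehat{D}_t$, $y_t$ and $\Lambda_t$, introduces the as-yet-unknown $\sigma_{yt}$ and so does not close either.) The paper instead extracts \emph{both} $r_t$ and $\sigma_t$ from the single identity
\begin{equation*}
\widehat{D}_t=\rho^{\frac{1}{\delta_C}}\varphi(\Lambda_t)^{\theta_C}W_{Ct}
+\rho^{\frac{1}{\delta_M}}\varphi(\Lambda_t)^{\theta_M}W_{Mt},
\end{equation*}
i.e.\ the consumption-clearing condition (\ref{ec-3}) rewritten via (\ref{ps_c}): because the two wealths enter with the \emph{different} weights $\rho^{1/\delta_i}\varphi(\Lambda_t)^{\theta_i}$, matching the $dw_t$ coefficients of both sides (exogenous diffusion $\sqrt{2H}\varepsilon^h\sigma_D\widehat{D}_t$ on the left, the $n_{it}^*S_t\sigma_t$ terms plus the $d\varphi(\Lambda_t)^{\theta_i}$ corrections on the right) gives a nontrivial linear equation whose solution is (\ref{sigmat}), while matching the $dt$ coefficients gives (\ref{rt}). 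You already invoke exactly this identity for the drift when you solve for $r_t$; you need its diffusion part for $\sigma_t$ as well. With that correction the rest of your outline goes through as in the paper.
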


\begin{remark}\label{zero}{\it
We note here that there always exists a number of stock holdings $n_{Ct,1}^*\in[0,1]$ for (\ref{thnC1}). Indeed, it suffices to show that the associated function
\begin{align*}
g(n)=&-n+n_{Ct,2}^*+n_{Ct,2}^*\left(1-\frac{n}{n_{Ct,3}^*}\right)\frac{(1-p)(1-l_C-l_M)}{2H\varepsilon^{2h}}
\frac{y_t}{{\delta_M}\rho^{\frac{1}{\delta_M}}\varphi(\Lambda_t)^{\theta_M}}\nonumber\\
&\times\frac{\left[n\rho^{\frac{1}{\delta_C}}\varphi(\Lambda_t)^{\theta_C}+(1-n)\rho^{\frac{1}{\delta_M}}\varphi(\Lambda_t)^{\theta_M}\right]^2}
{\left[\sigma_D-\frac{h}{\varepsilon}\frac{\varphi'(\Lambda_t)}{\varphi(\Lambda_t)}\left(\theta_C(1-y_t)+\theta_My_t\right)\right]^2}
\end{align*}
has at least one zero point in $[0,1]$. To this end, using $n_{Ct,2}^*\in[0,1]$ and $n_{Ct,3}^*\in(0,1]$ we could verify that
\begin{align*}
g(0)=&n_{Ct,2}^*+n_{Ct,2}^*\frac{(1-p)(1-l_C-l_M)}{2H\varepsilon^{2h}}\frac{y_t}{{\delta_M}\rho^{\frac{1}{\delta_M}}\varphi(\Lambda_t)^{\theta_M}}
\frac{\left[\rho^{\frac{1}{\delta_M}}\varphi(\Lambda_t)^{\theta_M}\right]^2}
{\left[\sigma_D-\frac{h}{\varepsilon}\frac{\varphi'(\Lambda_t)}{\varphi(\Lambda_t)}\left(\theta_C(1-y_t)+\theta_My_t\right)\right]^2}
\geq 0,\\
g(1)=&(n_{Ct,2}^*-1)+n_{Ct,2}^*\left(1-\frac{1}{n_{Ct,3}^*}\right)\frac{(1-p)(1-l_C-l_M)}{2H\varepsilon^{2h}\sigma_D^2}
\frac{y_t}{{\delta_M}\rho^{\frac{1}{\delta_M}}\varphi(\Lambda_t)^{\theta_M}}\\
&\times\frac{\left[\rho^{\frac{1}{\delta_C}}\varphi(\Lambda_t)^{\theta_C}\right]^2}
{\left[\sigma_D-\frac{h}{\varepsilon}\frac{\varphi'(\Lambda_t)}{\varphi(\Lambda_t)}\left(\theta_C(1-y_t)+\theta_My_t\right)\right]^2}
\leq 0.
\end{align*}
which implies by the zero point theorem that there exists at least one $n^*\in[0,1]$ such that $g(n^*)=0$.
}
\end{remark}

\begin{remark}{\it
By Theorem \ref{th1} (or Proposition 3 in \cite{Basak}), $n_{Ct}^*$ in Theorem \ref{th2} also solves the fixed-point equation
\begin{align}
n_{Ct}^*=&\mathop{\mathrm{argmax}}\limits_{n_{Ct},0\leq n_{Ct}\leq 1}J_C(n_{Ct})\nonumber\\
=&\mathop{\mathrm{argmax}}\limits_{n_{Ct},0\leq n_{Ct}\leq 1}\left\{\frac{n_{Ct}S_t}{W_{Ct}}\left(\mu_t-r_t+(1-x^*(n_{Ct}))\frac{D_t}{S_t}+\sigma_t\Lambda_t\right)\right.\nonumber\\
&\left.+x^*(n_{Ct})\frac{D_t}{W_{Ct}}-\frac{kx^*(n_{Ct})^2}{2}\frac{D_t}{W_{Ct}}
-H\varepsilon^{2h}\delta_C\left(\frac{S_t\sigma_t}{W_{Ct}}\right)^2n_{Ct}^2\right\}\label{fix}
\end{align}
with
\[
x^*(n_{Ct})=\min\left\{\frac{1-n_{Ct}}{k},(1-p)n_{Ct}\right\}
\]
and other parameters given in Theorem \ref{th2}. The stock holding $n_{Ct}^*$ could be chosen from $n_{Ct,i}^*\;(i=1,2,3,4)$. Hence, to obtain $n_{Ct}^*$, we just need to seek the Region $j\in \{1,2,3,4\}$ such that
\[
J_C(n_{Ct,j}^*)=\max\{J_C(n_{Ct,i}^*),i\in \{1,2,3,4\}\},
\]
where for each $j\in \{1,2,3,4\}$, the parameters in $J_C$ should be related to Region $j$.
}
\end{remark}

It is of much importance to note that the equilibrium in Theorem \ref{th1} may not exist. We give a simple example here. Set $H=p=y_t=\frac{1}{2},\gamma_M=\gamma_C,\alpha_M=\alpha_C=1, k=2,\rho=\sigma_D=0.01$, which is a special scenario in \cite{Basak}. Then it is easy to obtain $n_{Ct,1}^*=n_{Ct,2}^*=n_{Ct,3}^*=\frac{1}{2}$ and $n_{Ct,4}^*=1$.

In Region 1 (or Region 2, Region 3, equivalently), we have $\sigma_t=\sigma_D,\mu_t=\mu_D,r_t=\mu_D-\sigma_D^2+\frac{3}{8}\rho$ and $\frac{{D}_t}{S_t}=\frac{\rho}{2},\frac{{D}_t}{W_{Ct}}=\rho,\frac{S_t}{W_{Ct}}=2$, which gives
\begin{equation*}
J_C(n)=2n\left(\sigma_D^2-\frac{3}{8}\rho+(1-x^*(n))\frac{\rho}{2}\right)+x^*(n)\rho-x^*(n)^2\rho-2\sigma_D^2n^2.
\end{equation*}
Hence, $J_C(n_{Ct,1}^*)=J_C(n_{Ct,2}^*)=J_C(n_{Ct,3}^*)=\sigma_D^2+\frac{3}{16}\rho=0.0019$ and $J_C(n_{Ct,4}^*)=\frac{\rho}{4}=0.0025$. Since $0.0019<0.0025$, the stock holding $n_{Ct}^*$ does not locate in Region $j$ with $j=1,2,3$.

In Region 4, we have $\sigma_t=\sigma_D,\mu_t=\mu_D,r_t=\mu_D+\frac{1}{2}\rho$ and $\frac{{D}_t}{S_t}=\frac{\rho}{2},\frac{{D}_t}{W_{Ct}}=\rho,\frac{S_t}{W_{Ct}}=2$, which gives
\begin{equation*}
J_C(n)=-n\rho x^*(n)+x^*(n)\rho-x^*(n)^2\rho-2\sigma_D^2n^2.
\end{equation*}
Hence, $J_C(n_{Ct,1}^*)=J_C(n_{Ct,2}^*)=J_C(n_{Ct,3}^*)=-\frac{\sigma_D^2}{2}+\frac{\rho}{16}=5.57\times 10^{-4}$ and $J_C(n_{Ct,4}^*)=-2\sigma_D^2=-2\times 10^{-4}$. Since $5.57\times 10^{-4}>-2\times 10^{-4}$, the stock holding $n_{Ct}^*$ does not locate in Region 4.

Summarizing above discussion, the stock holding $n_{Ct}^*$, as well as the equilibrium in Theorem \ref{th1}, does not exist. However, we shall always consider scenarios where the equilibrium in Theorem \ref{th1} exists in our paper.

In a perfect economy, the controlling shareholder can not divert output for his private benefit, which is the case $p=1$ and denoted by the benchmark economy. Hence, we give the benchmark equilibrium by applying Theorem \ref{th1} directly.

\begin{corollary}\label{co1}{\it
In the model driven by the approximate fBm and under the equilibrium conditions (\ref{ec-1})-(\ref{ec-3}) with $p=1$ and $y_t\in(0,1)$, the shareholders' optimal consumptions $c_{it}^{\mathfrak{B}}$ ($i\in\{C,M\}$) are given by (\ref{ps_c}), and the interest rate $r_t^{\mathfrak{B}}$, the stock mean-return $\mu_t^{\mathfrak{B}}$, the stock volatility $\sigma_t^{\mathfrak{B}}$, the Sharpe ratio $\kappa_t^{\mathfrak{B}}$, the parameters $\mu_{yt}^{\mathfrak{B}}$ and $\sigma_{yt}^{\mathfrak{B}}$ are given by:
\begin{align}
r_t^\mathfrak{B}=&\mu_D+\sigma_D\Lambda_t-l_C\rho^{\frac{1}{\delta_C}}\varphi(\Lambda_t)^{\theta_C}-l_M\rho^{\frac{1}{\delta_M}}\varphi(\Lambda_t)^{\theta_M}\nonumber\\
&-n_{Ct}^\mathfrak{B}\sigma_t^\mathfrak{B}\left[\sqrt{2H}\varepsilon^h\kappa_t^\mathfrak{B}+2Hh\varepsilon^{2h-1}\theta_C\frac{\varphi'(\Lambda_t)}{\varphi(\Lambda_t)}\right]
\left[1-y_t+y_t\rho^{\frac{1}{\delta_C}-\frac{1}{\delta_M}}\varphi(\Lambda_t)^{\theta_C-\theta_M}\right]\nonumber\\
&-(1-n_{Ct}^\mathfrak{B})\sigma_t^\mathfrak{B}\left[\sqrt{2H}\varepsilon^h\kappa_t^\mathfrak{B}+2Hh\varepsilon^{2h-1}\theta_M\frac{\varphi'(\Lambda_t)}{\varphi(\Lambda_t)}\right]
\left[y_t+(1-y_t)\rho^{\frac{1}{\delta_M}-\frac{1}{\delta_C}}\varphi(\Lambda_t)^{\theta_M-\theta_C}\right]\nonumber\\
&+(1-y_t)\left[\rho^{\frac{1}{\delta_C}}\varphi(\Lambda_t)^{\theta_C}-\theta_C\lambda_t\frac{\varphi'(\Lambda_t)}{\varphi(\Lambda_t)}
-Hh^2\varepsilon^{2h-2}\theta_C\left((\theta_C-1)\left(\frac{\varphi'(\Lambda_t)}{\varphi(\Lambda_t)}\right)^2
+\frac{\varphi''(\Lambda_t)}{\varphi(\Lambda_t)}\right)\right]\nonumber\\
&+y_t\left[\rho^{\frac{1}{\delta_M}}\varphi(\Lambda_t)^{\theta_M}-\theta_M\lambda_t\frac{\varphi'(\Lambda_t)}{\varphi(\Lambda_t)}
-Hh^2\varepsilon^{2h-2}\theta_M\left((\theta_M-1)\left(\frac{\varphi'(\Lambda_t)}{\varphi(\Lambda_t)}\right)^2
+\frac{\varphi''(\Lambda_t)}{\varphi(\Lambda_t)}\right)\right]\nonumber\\
\mu_t^\mathfrak{B}=&r_t^\mathfrak{B}-\sigma_t^\mathfrak{B}\Lambda_t+\sqrt{2H}\varepsilon^h\kappa_t^\mathfrak{B}\sigma_t^\mathfrak{B}
-\frac{1-l_C-l_M}{(1-y_t)\rho^{-\frac{1}{\delta_C}}\varphi(\Lambda_t)^{-\theta_C}+y_t\rho^{-\frac{1}{\delta_M}}\varphi(\Lambda_t)^{-\theta_M}},\nonumber\\
\sigma_t^\mathfrak{B}=&\frac{\sigma_D-\frac{h}{\varepsilon}
\frac{\varphi'(\Lambda_t)}{\varphi(\Lambda_t)}\left[\theta_C(1-y_t)+\theta_My_t\right]}
{\left[n_{Ct}^\mathfrak{B}\rho^{\frac{1}{\delta_C}}\varphi(\Lambda_t)^{\theta_C}+(1-n_{Ct}^\mathfrak{B})\rho^{\frac{1}{\delta_M}}\varphi(\Lambda_t)^{\theta_M}\right]
\left[(1-y_t)\rho^{-\frac{1}{\delta_C}}\varphi(\Lambda_t)^{-\theta_C}+y_t\rho^{-\frac{1}{\delta_M}}\varphi(\Lambda_t)^{-\theta_M}\right]},\label{sigmaB}\\
\kappa_t^\mathfrak{B}=&\frac{\sqrt{2H}\varepsilon^h\delta_M\rho^{\frac{1}{\delta_M}}\varphi(\Lambda_t)^{\theta_M}(1-n_{Ct}^\mathfrak{B})
\left[\sigma_D-\frac{h}{\varepsilon}
\frac{\varphi'(\Lambda_t)}{\varphi(\Lambda_t)}\left(\theta_C(1-y_t)+\theta_My_t\right)\right]}
{y_t\left[n_{Ct}^\mathfrak{B}\rho^{\frac{1}{\delta_C}}\varphi(\Lambda_t)^{\theta_C}+(1-n_{Ct}^\mathfrak{B})\rho^{\frac{1}{\delta_M}}\varphi(\Lambda_t)^{\theta_M}\right]},\nonumber\\
\sigma_{yt}^\mathfrak{B}=&y_t\left(\frac{\kappa_t^\mathfrak{B}}{\sqrt{2H}\varepsilon^h\delta_M}+\theta_M\frac{h}{\varepsilon}
\frac{\varphi'(\Lambda_t)}{\varphi(\Lambda_t)}-\sigma_D\right),\nonumber\\
\mu_{yt}^\mathfrak{B}=&-\sigma_{yt}^\mathfrak{B}(\Lambda_t+2H\varepsilon^{2h}\sigma_D)+l_M\rho^{\frac{1}{\delta_M}}\varphi(\Lambda_t)^{\theta_M}
+y_t\bigg[r_t^\mathfrak{B}-\mu_D-\sigma_D\Lambda_t+\frac{(\kappa_t^\mathfrak{B})^2}{\delta_M}
+\sqrt{2H}h\varepsilon^{h-1}\theta_M
\frac{\varphi'(\Lambda_t)}{\varphi(\Lambda_t)}\frac{\kappa_t^\mathfrak{B}}{\delta_M}\nonumber\\
&-\rho^{\frac{1}{\delta_M}}\varphi(\Lambda_t)^{\theta_M}+\theta_M\lambda_t\frac{\varphi'(\Lambda_t)}{\varphi(\Lambda_t)}
+Hh^2\varepsilon^{2h-2}\theta_M\left((\theta_M-1)\left(\frac{\varphi'(\Lambda_t)}{\varphi(\Lambda_t)}\right)^2
+\frac{\varphi''(\Lambda_t)}{\varphi(\Lambda_t)}\right)\bigg].\nonumber
\end{align}
The fraction of diverted output $x_t^\mathfrak{B}$, the shareholders' optimal bond holdings $b_{it}^{\mathfrak{B}}$ and stock holdings $n_{it}^{\mathfrak{B}}$ ($i\in\{C,M\}$) could be expressed as follows:
\begin{align}
x_t^\mathfrak{B}=&0,\nonumber\\
b_{Ct}^{\mathfrak{B}}=&\exp\left(-\int_0^tr_s^{\mathfrak{B}}ds\right)(W_{Ct}-n_{Ct}^{\mathfrak{B}}S_t),\nonumber\\
b_{Mt}^{\mathfrak{B}}=&-b_{Ct}^{\mathfrak{B}},\nonumber\\
n_{Ct}^{\mathfrak{B}}=&\frac{\frac{1-y_t}{{\delta_C}\rho^{\frac{1}{\delta_C}}\varphi(\Lambda_t)^{\theta_C}}}
{\frac{1-y_t}{{\delta_C}\rho^{\frac{1}{\delta_C}}\varphi(\Lambda_t)^{\theta_C}}+
\frac{y_t}{{\delta_M}\rho^{\frac{1}{\delta_M}}\varphi(\Lambda_t)^{\theta_M}}},\label{conc}\\
n_{Mt}^{\mathfrak{B}}=&1-n_{Ct}^{\mathfrak{B}}.\nonumber
\end{align}
}
\end{corollary}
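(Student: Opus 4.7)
The plan is to derive Corollary \ref{co1} as a direct specialization of Theorem \ref{th2} to the benchmark case $p=1$, tracing how each equilibrium quantity simplifies when perfect investor protection eliminates diversion. First I would observe that, by (\ref{xx}), the optimal diversion is $x_t^* = \min\{(1-n_{Ct}^*)/k,\;(1-p)n_{Ct}^*\}$, and substituting $p=1$ immediately gives $(1-p)n_{Ct}^* = 0 \le (1-n_{Ct}^*)/k$ for any $n_{Ct}^* \in [0,1]$, so $x_t^\mathfrak{B} = 0$. This is the single algebraic input that drives the entire simplification.

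Next, I would substitute $x_t^* = 0$ into the general equilibrium expressions (\ref{rt})--(\ref{sy}) of Theorem \ref{th2}. In (\ref{rt}) the last two terms (those carrying the factors $x_t^* - k x_t^{*2}/2$ and $k x_t^{*2}/2$) vanish, yielding the stated $r_t^\mathfrak{B}$; in (\ref{miut}) the factor $(1-x_t^*)$ becomes $1$, producing $\mu_t^\mathfrak{B}$; and (\ref{sigmat}), (\ref{ktt}), (\ref{muy}) do not depend on $x_t^*$ explicitly, so they pass through unchanged (with $n_{Ct}^*$ replaced by $n_{Ct}^\mathfrak{B}$). The only place $x_t^{*2}$ reappears is the $\mu_{yt}$ expression (\ref{sy}); setting it to zero eliminates the $\tfrac{k x_t^{*2}}{2}(1-l_C-l_M)\rho^{1/\delta_M}\varphi(\Lambda_t)^{\theta_M}$ term, giving $\mu_{yt}^\mathfrak{B}$. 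Each identification is a term-by-term match against the formulas in Corollary \ref{co1}.

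For the stock-holding identification I would analyze the four candidate regions. When $p=1$, (\ref{thnC1}) becomes $n_{Ct,1}^* = n_{Ct,2}^*$ because the second summand carries the factor $(1-p) = 0$; meanwhile (\ref{thnC3}) gives $n_{Ct,3}^* = 1/(1+(1-p)k) = 1 = n_{Ct,4}^*$. So the four regions collapse to at most two candidates, $n_{Ct,2}^*$ (interior) and $1$ (corner). Since we restrict attention to interior equilibria (in line with the statement $y_t\in(0,1)$ and the remark preceding the corollary on existence), I would invoke the $\mathrm{argmax}$ characterization in (\ref{fix}) to justify selecting the interior candidate, so that $n_{Ct}^\mathfrak{B} = n_{Ct,2}^*$, which is precisely (\ref{conc}). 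The bond holdings then follow immediately from the wealth decomposition (\ref{budgetB}) combined with $B_t = \exp(\int_0^t r_s^\mathfrak{B}\,ds)$ and the market-clearing condition (\ref{ec-2}).

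The step that needs the most care is verifying that $n_{Ct,1}^*$ and $n_{Ct,2}^*$ genuinely coincide at $p=1$ (not merely that the formulas share structure), and checking that the fixed-point equation (\ref{thnC1}) admits $n_{Ct,2}^*$ as its unique solution in $[0,1]$ under the benchmark; the argument in Remark \ref{zero} about $g(0) \ge 0 \ge g(1)$ degenerates when $p=1$ (since the entire $(1-p)$-weighted term drops), so the reduction to $n_{Ct,1}^* = n_{Ct,2}^*$ must be read directly off (\ref{thnC1}) rather than via the zero-point argument. Beyond this, the proof is essentially bookkeeping: substitute $x_t^\mathfrak{B} = 0$, collapse the regions, and read off the formulas.
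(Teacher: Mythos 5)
Your proposal is correct and follows essentially the same route as the paper's own proof: set $p=1$ to force $x_t^{\mathfrak{B}}=0$, substitute term-by-term into Theorem \ref{th2}, and observe that the fixed-point equation (\ref{thnC1}) collapses to $n_{Ct,1}^*=n_{Ct,2}^*$ (the paper phrases this as the whole range becoming Region 1 since $\tfrac{1}{1+(1-p)k}=1$), yielding (\ref{conc}). Your extra care about ruling out the corner candidate $n_{Ct,4}^*=1$ and about the degeneration of Remark \ref{zero} is a minor refinement the paper glosses over, not a different argument.
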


Conditions $y_t=0$ and $y_t=1$ implies that $W_{Mt}=0$ and $W_{Ct}=0$ respectively. Such related results are practically meaningful, but we can hardly apply those conditions to Theorem \ref{th1}. Noticing that all equilibrium results in Theorem \ref{th2} and Corollary \ref{co1} are continuous functions of $y_t\in(0,1)$ and $\Lambda_t$, we shall assume that they are continuous at $y_t=0$ and $y_t=1$, which means we take their values at $y_t=0$ and $y_t=1$ by considering $y_t\rightarrow 0^+$ and $y_t\rightarrow 1^-$ respectively. This method is implicitly used in \cite{Basak}. Hence, we extend all equilibrium results to the case $y_t\in[0,1]$ in Theorem \ref{th2} and Corollary \ref{co1}. As $y_t\rightarrow 0$ implies $W_{Mt}\rightarrow 0$, we obtain $n^*_{Mt}\rightarrow 0$ and then $n^*_{Ct}\rightarrow 1$ by $n^*_{Mt}+n^*_{Ct}=1$. Now we assume $p<1$ and  $y_t$ is sufficiently small in some interval $I=(0,\delta)$ with $\delta>0$. The fact that $n_{Ct}^*$ is a continuous function of $y_t$ implies that $n_{Ct}^*$ should locate in Region 2 or Region 4. If $n_{Ct}^*$ locates in Region 2 for $y_t\in I$ (which is the only case for $p=1$), $\kappa_t(y_t)$ converges to $\kappa_t(0)=\sqrt{2H}\varepsilon^h\delta_C[\sigma_D-\frac{h}{\varepsilon}
\frac{\varphi'(\Lambda_t)}{\varphi(\Lambda_t)}\theta_C]$ by substituting $n_{Ct,2}^*$. If $n_{Ct}^*$ locates in Region 4 for $y_t\in I$, $\kappa_t(y_t)$ converges to $\kappa_t(0)=0$. If $p=1$, then $n_{Ct}^*=n_{Ct,2}^*=n_{Ct}^{\mathfrak{B}}$.  Similar discussion could be done for the case $y_t=1$. Finally, by $y_t\rightarrow 0^+$ and $y_t\rightarrow 1^-$, the equilibrium results for the case $y_t=0$ and $y_t=1$ are explicitly given by
\begin{align}
x_t^*(0)=&0,\\
r_t(0)=&\mu_D+\sigma_D\Lambda_t-l_C\rho^{\frac{1}{\delta_C}}\varphi(\Lambda_t)^{\theta_C}-l_M\rho^{\frac{1}{\delta_M}}\varphi(\Lambda_t)^{\theta_M}
-\sigma_t(0)\left[\sqrt{2H}\varepsilon^h\kappa_t(0)+2Hh\varepsilon^{2h-1}\theta_C\frac{\varphi'(\Lambda_t)}{\varphi(\Lambda_t)}\right]\nonumber\\
&+\rho^{\frac{1}{\delta_C}}\varphi(\Lambda_t)^{\theta_C}-\theta_C\lambda_t\frac{\varphi'(\Lambda_t)}{\varphi(\Lambda_t)}
-Hh^2\varepsilon^{2h-2}\theta_C\left((\theta_C-1)\left(\frac{\varphi'(\Lambda_t)}{\varphi(\Lambda_t)}\right)^2
+\frac{\varphi''(\Lambda_t)}{\varphi(\Lambda_t)}\right),\label{rto}\\
\mu_t(0)=&r_t(0)-\sigma_t(0)\Lambda_t+\sqrt{2H}\varepsilon^h\kappa_t(0)\sigma_t(0)
-(1-l_C-l_M)\rho^{\frac{1}{\delta_C}}\varphi(\Lambda_t)^{\theta_C},\\
\sigma_t(0)=&\sigma_D-\frac{h}{\varepsilon}
\frac{\varphi'(\Lambda_t)}{\varphi(\Lambda_t)}\theta_C,\\
\kappa_t(0)=&\left\{
  \begin{array}{*{20}{l}}
{\sqrt{2H}\varepsilon^h\delta_C\left[\sigma_D-\frac{h}{\varepsilon}
\frac{\varphi'(\Lambda_t)}{\varphi(\Lambda_t)}\theta_C\right],}&{n_{Ct}^*=n_{Ct,2}^*\;\text{for $p<1$ and $y_t\in I$; or $p=1$}}\\
{0,}&{n_{Ct}^*=1\;\text{for $p<1$ and $y_t\in I$}}\\
\end{array}
\right.\\
\sigma_{yt}(0)=&0,\\
\mu_{yt}(0)=&l_M\rho^{\frac{1}{\delta_M}}\varphi(\Lambda_t)^{\theta_M},\\
n_{Ct}^*(0)=&1,\\
n_{Mt}^*(0)=&0,
\end{align}
and
\begin{align*}
x_t^*(1)=&0,\\
r_t(1)=&\mu_D+\sigma_D\Lambda_t-l_C\rho^{\frac{1}{\delta_C}}\varphi(\Lambda_t)^{\theta_C}-l_M\rho^{\frac{1}{\delta_M}}\varphi(\Lambda_t)^{\theta_M}
-\sigma_t(1)\left[\sqrt{2H}\varepsilon^h\kappa_t(1)+2Hh\varepsilon^{2h-1}\theta_M\frac{\varphi'(\Lambda_t)}{\varphi(\Lambda_t)}\right]\\
&+\rho^{\frac{1}{\delta_M}}\varphi(\Lambda_t)^{\theta_M}-\theta_M\lambda_t\frac{\varphi'(\Lambda_t)}{\varphi(\Lambda_t)}
-Hh^2\varepsilon^{2h-2}\theta_M\left((\theta_M-1)\left(\frac{\varphi'(\Lambda_t)}{\varphi(\Lambda_t)}\right)^2
+\frac{\varphi''(\Lambda_t)}{\varphi(\Lambda_t)}\right),\nonumber\\
\mu_t(1)=&r_t(1)-\sigma_t(1)\Lambda_t+\sqrt{2H}\varepsilon^h\kappa_t(1)\sigma_t(1)
-(1-l_C-l_M)\rho^{\frac{1}{\delta_M}}\varphi(\Lambda_t)^{\theta_M},\\
\sigma_t(1)=&\sigma_D-\frac{h}{\varepsilon}\frac{\varphi'(\Lambda_t)}{\varphi(\Lambda_t)}\theta_M,\\
\kappa_t(1)=&\sqrt{2H}\varepsilon^h\delta_M\left[\sigma_D-\frac{h}{\varepsilon}
\frac{\varphi'(\Lambda_t)}{\varphi(\Lambda_t)}\theta_M\right],\\
\sigma_{yt}(1)=&0,\\
\mu_{yt}(1)=&r_t(1)-\mu_D-\sigma_D\Lambda_t+\frac{\kappa_t(1)^2}{\delta_M}+l_M\rho^{\frac{1}{\delta_M}}\varphi(\Lambda_t)^{\theta_M}+\sqrt{2H}h\varepsilon^{h-1}\theta_M
\frac{\varphi'(\Lambda_t)}{\varphi(\Lambda_t)}\frac{\kappa_t(1)}{\delta_M}-\rho^{\frac{1}{\delta_M}}\varphi(\Lambda_t)^{\theta_M}\nonumber\\
&+\theta_M\lambda_t\frac{\varphi'(\Lambda_t)}{\varphi(\Lambda_t)}
+Hh^2\varepsilon^{2h-2}\theta_M\left((\theta_M-1)\left(\frac{\varphi'(\Lambda_t)}{\varphi(\Lambda_t)}\right)^2
+\frac{\varphi''(\Lambda_t)}{\varphi(\Lambda_t)}\right),\\
n_{Ct}^*(1)=&0,\\
n_{Mt}^*(1)=&1.
\end{align*}

\begin{remark}{\it
Compared with \cite{Basak}, equilibrium results in Theorems \ref{th1}-\ref{th2} and Corollary \ref{co1} additionally depend on the pathwise past information ($\Lambda_t$ and $\lambda_t$) and the sensitivity of investors to past information ($\alpha_C$ and $\alpha_M$). Hence, the pathwise past information indeed plays an important and different role in our approximate fractional economy.
}\end{remark}

To conclude this section, we give equilibrium results for (\ref{psi}) and (\ref{phi}) as a special example. Denote $\beta=\beta_1-\beta_2$, $\beta_M=\beta\theta_M$ and $\beta_C=\beta\theta_C$ for briefness. Making use of (\ref{psi}) and (\ref{phi}) and then applying
\[\varphi=e^{\beta\Lambda},\;\frac{\varphi'}{\varphi}=\beta,\;\frac{\varphi''}{\varphi}=\beta^2\]
to Theorem \ref{th2} and Corollary \ref{co1}, we can obtain the following corollaries.
\begin{corollary}\label{co2}{\it
Assume the conditions in Theorem \ref{th2} hold, and functions $\phi,\psi$ further satisfy (\ref{psi}) and (\ref{phi}). Then the shareholders' optimal consumptions $c_{it}^*$ ($i\in\{C,M\}$) are given by (\ref{ps_c}), and the interest rate $r_t$, the stock mean-return $\mu_t$, the stock volatility $\sigma_t$, the Sharpe ratio $\kappa_t$, the parameters $\mu_{yt}$ and $\sigma_{yt}$ are given by:
\begin{align}
r_t=&\mu_D+\sigma_D\Lambda_t
-n_{Ct}^*\sigma_t\left[\sqrt{2H}\varepsilon^h\kappa_t+2Hh\varepsilon^{2h-1}\beta_C\right]
\left[1-y_t+y_t\rho^{\frac{1}{\delta_C}-\frac{1}{\delta_M}}e^{(\beta_C-\beta_M)\Lambda_t}\right]
-l_C\rho^{\frac{1}{\delta_C}}e^{\beta_C\Lambda_t}\nonumber\\
&-(1-n_{Ct}^*)\sigma_t\left[\sqrt{2H}\varepsilon^h\kappa_t+2Hh\varepsilon^{2h-1}\beta_M\right]
\left[y_t+(1-y_t)\rho^{\frac{1}{\delta_M}-\frac{1}{\delta_C}}e^{(\beta_M-\beta_C)\Lambda_t}\right]
-l_M\rho^{\frac{1}{\delta_M}}e^{\beta_M\Lambda_t}\nonumber\\
&+(1-y_t)\left[\rho^{\frac{1}{\delta_C}}e^{\beta_C\Lambda_t}-\beta_C\lambda_t
-Hh^2\varepsilon^{2h-2}\beta_C^2\right]
+y_t\left[\rho^{\frac{1}{\delta_M}}e^{\beta_M\Lambda_t}-\beta_M\lambda_t
-Hh^2\varepsilon^{2h-2}\beta_M^2)\right]\nonumber\\
&-\left(x_t^*-\frac{kx_t^{*2}}{2}\right)(1-l_C-l_M)\rho^{\frac{1}{\delta_C}}e^{\beta_C\Lambda_t}
-\frac{kx_t^{*2}}{2}(1-l_C-l_M)\rho^{\frac{1}{\delta_M}}e^{\beta_M\Lambda_t},\label{r0}\\
\mu_t=&r_t-\sigma_t\Lambda_t+\sqrt{2H}\varepsilon^h\kappa_t\sigma_t
-\frac{(1-x_t^*)(1-l_C-l_M)}{(1-y_t)\rho^{-\frac{1}{\delta_C}}e^{-\beta_C\Lambda_t}
+y_t\rho^{-\frac{1}{\delta_M}}e^{-\beta_M\Lambda_t}},\label{miu0}\\
\sigma_t=&\frac{\sigma_D-\frac{h}{\varepsilon}
\left[\beta_C(1-y_t)+\beta_My_t\right]}
{\left[n_{Ct}^*\rho^{\frac{1}{\delta_C}}e^{\beta_C\Lambda_t}+(1-n_{Ct}^*)\rho^{\frac{1}{\delta_M}}e^{\beta_M\Lambda_t}\right]
\left[(1-y_t)\rho^{-\frac{1}{\delta_C}}e^{-\beta_C\Lambda_t}+y_t\rho^{-\frac{1}{\delta_M}}e^{-\beta_M\Lambda_t}\right]},\label{sigma0}\\
\kappa_t=&\frac{\sqrt{2H}\varepsilon^h\delta_M\rho^{\frac{1}{\delta_M}}e^{\beta_M\Lambda_t}(1-n_{Ct}^*)
\left[\sigma_D-\frac{h}{\varepsilon}
\left(\beta_C(1-y_t)+\beta_My_t\right)\right]}
{y_t\left[n_{Ct}^*\rho^{\frac{1}{\delta_C}}e^{\beta_C\Lambda_t}+(1-n_{Ct}^*)\rho^{\frac{1}{\delta_M}}e^{\beta_M\Lambda_t}\right]},\label{kappa0}\\
\sigma_{yt}=&y_t\left(\frac{\kappa_t}{\sqrt{2H}\varepsilon^h\delta_M}+\beta_M\frac{h}{\varepsilon}
-\sigma_D\right),\label{mus0}\\
\mu_{yt}=&-\sigma_{yt}(\Lambda_t+2H\varepsilon^{2h}\sigma_D)+l_M\rho^{\frac{1}{\delta_M}}e^{\beta_M\Lambda_t}
+\frac{kx_t^{*2}}{2}(1-l_C-l_M)\rho^{\frac{1}{\delta_M}}e^{\beta_M\Lambda_t}\nonumber\\
&+y_t\bigg[r_t-\mu_D-\sigma_D\Lambda_t+\frac{\kappa_t^2}{\delta_M}+\sqrt{2H}h\varepsilon^{h-1}\beta_M
\frac{\kappa_t}{\delta_M}-\rho^{\frac{1}{\delta_M}}e^{\beta_M\Lambda_t}
+\beta_M\lambda_t
+Hh^2\varepsilon^{2h-2}\beta_M^2\bigg].\label{muy0}
\end{align}
The shareholders' optimal bond holdings $b_{it}^*$ and stock holdings $n_{it}^*$ ($i\in\{C,M\}$) could be expressed as follows:
\begin{align*}
b_{Ct}^*=&\exp\left(-\int_0^tr_sds\right)(W_{Ct}-n_{Ct}^*S_t),\\
b_{Mt}^*=&-b_{Ct}^*,\\
n_{Ct}^*=&n_{Ct,j}^*,\quad \text{if}\, J_C^*=J_C(n_{Ct,j}^*;x_t^*)(\text{Region}\, j),\, j\in\{1,2,3,4\},\\
n_{Mt}^*=&1-n_{Ct}^*,
\end{align*}
where $n_{Ct,1}^*$ satisfies the following fixed point problem
\begin{align}
n_{Ct,1}^*=&n_{Ct,2}^*+n_{Ct,2}^*\left(1-\frac{n_{Ct,1}^*}{n_{Ct,3}^*}\right)\frac{(1-p)(1-l_C-l_M)}{2H\varepsilon^{2h}}
\frac{y_t}{{\delta_M}\rho^{\frac{1}{\delta_M}}e^{\beta_M\Lambda_t}}\nonumber\\
&\times\frac{\left[n_{Ct,1}^*\rho^{\frac{1}{\delta_C}}e^{\beta_C\Lambda_t}+(1-n_{Ct,1}^*)\rho^{\frac{1}{\delta_M}}e^{\beta_M\Lambda_t}\right]^2
}{\left[\sigma_D-\frac{h}{\varepsilon}\left(\beta_C(1-y_t)+\beta_My_t\right)\right]^2};\label{co2nc1}
\end{align}
$n_{Ct,i}^*$ ($i\in\{2,3,4\}$) satisfy that
\begin{align}
n_{Ct,2}^*=&\frac{\frac{1-y_t}{{\delta_C}\rho^{\frac{1}{\delta_C}}e^{\beta_C\Lambda_t}}}
{\frac{1-y_t}{{\delta_C}\rho^{\frac{1}{\delta_C}}e^{\beta_C\Lambda_t}}+
\frac{y_t}{{\delta_M}\rho^{\frac{1}{\delta_M}}e^{\beta_M\Lambda_t}}},\label{co2nc2}\\
n_{Ct,3}^*=&\frac{1}{1+(1-p)k},\label{co2nc3}\\
n_{Ct,4}^*=&1;\label{co2nc4}
\end{align}
and $J_C^*$ is given by (\ref{JCC}) and (\ref{J_C}) with $x_t^*,r_t,\mu_t,\sigma_t$ given by (\ref{xx}),(\ref{r0})-(\ref{sigma0}) and
\begin{align*}
\frac{{D}_t}{S_t}=&\frac{1-l_C-l_M}{(1-y_t)\rho^{-\frac{1}{\delta_C}}e^{-\beta_C\Lambda_t}+y_t\rho^{-\frac{1}{\delta_M}}e^{-\beta_M\Lambda_t}},\\
\frac{{D}_t}{W_{Ct}}=&\frac{1-l_C-l_M}{(1-y_t)\rho^{-\frac{1}{\delta_C}}e^{-\beta_C\Lambda_t}},\\
\frac{S_t}{W_{Ct}}=&\frac{(1-y_t)\rho^{-\frac{1}{\delta_C}}e^{-\beta_C\Lambda_t}+y_t\rho^{-\frac{1}{\delta_M}}e^{-\beta_M\Lambda_t}}{(1-y_t)\rho^{-\frac{1}{\delta_C}}e^{-\beta_C\Lambda_t}}.
\end{align*}
}
\end{corollary}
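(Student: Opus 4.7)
The plan is to derive Corollary \ref{co2} by direct specialization of Theorem \ref{th2} to the concrete adjustment functions (\ref{psi}) and (\ref{phi}). The argument is a mechanical substitution of exponential forms into the general equilibrium formulas; no new analytical machinery is needed, because the fixed-point problem, the region analysis, and the $y_t$-dynamics have already been settled in Theorem \ref{th2}.

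First, I would record the three structural identities that arise from the exponential choices $\phi(\Lambda)=e^{\beta_2\Lambda}$ and $\psi(\Lambda)=e^{\beta_1\Lambda}$. With the notation $\beta=\beta_1-\beta_2$, $\beta_i=\beta\theta_i$ for $i\in\{C,M\}$ as fixed in the paper, one has
\[
\frac{\varphi'(\Lambda)}{\varphi(\Lambda)}=\beta,\qquad \frac{\varphi''(\Lambda)}{\varphi(\Lambda)}=\beta^2,\qquad \varphi(\Lambda)^{\theta_i}=e^{\beta_i\Lambda},\qquad \varphi(\Lambda)^{-\theta_i}=e^{-\beta_i\Lambda}.
\]
These are the only facts about $\phi$ and $\psi$ that will be used.

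Second, I would substitute these identities term-by-term into the equilibrium formulas (\ref{rt})--(\ref{sy}) for $r_t,\mu_t,\sigma_t,\kappa_t,\sigma_{yt},\mu_{yt}$, into the fixed-point equation (\ref{thnC1}) and the region-specific formulas (\ref{thnC2})--(\ref{thnC4}) for $n_{Ct,i}^*$, into the bond/stock holdings (\ref{bcc})--(\ref{nmt}), and into the ratios (\ref{rato4})--(\ref{rato6}). Each occurrence of $\varphi(\Lambda_t)^{\pm\theta_i}$ becomes $e^{\pm\beta_i\Lambda_t}$, each $\varphi'/\varphi$ becomes $\beta$, and each $\varphi''/\varphi$ becomes $\beta^2$. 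Collecting and relabeling reproduces the expressions (\ref{r0})--(\ref{muy0}) and (\ref{co2nc1})--(\ref{co2nc4}) verbatim.

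The one nontrivial bookkeeping step worth singling out is the collapse of the $Hh^2\varepsilon^{2h-2}$-terms that appear inside the brackets multiplying $(1-y_t)$ and $y_t$ in (\ref{rt}), and inside the square bracket of (\ref{sy}). Under the exponential specialization these terms reduce cleanly:
\[
-Hh^2\varepsilon^{2h-2}\theta_i\left((\theta_i-1)\left(\frac{\varphi'}{\varphi}\right)^2+\frac{\varphi''}{\varphi}\right)
=-Hh^2\varepsilon^{2h-2}\theta_i\left((\theta_i-1)\beta^2+\beta^2\right)
=-Hh^2\varepsilon^{2h-2}\beta_i^2,
\]
which is exactly the compact form displayed in (\ref{r0}) and (\ref{muy0}). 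There is no conceptual obstacle; the only care required is to track the subscripts $C$ and $M$ carefully and to avoid sign errors when applying $\varphi(\Lambda_t)^{-\theta_i}=e^{-\beta_i\Lambda_t}$ inside the denominators of $D_t/S_t$, $D_t/W_{Ct}$, and $S_t/W_{Ct}$. Once these routine substitutions are performed, Corollary \ref{co2} follows immediately from Theorem \ref{th2}.
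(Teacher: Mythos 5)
Your proposal is correct and is exactly the route the paper takes: the paper gives no separate proof of Corollary \ref{co2} beyond the remark that one applies $\varphi=e^{\beta\Lambda}$, $\varphi'/\varphi=\beta$, $\varphi''/\varphi=\beta^2$ to Theorem \ref{th2}, which is precisely your term-by-term substitution, and your simplification $-Hh^2\varepsilon^{2h-2}\theta_i\bigl((\theta_i-1)\beta^2+\beta^2\bigr)=-Hh^2\varepsilon^{2h-2}\beta_i^2$ is the right bookkeeping. Nothing further is needed.
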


\begin{corollary}\label{co3}{\it
Assume the conditions in Corollary \ref{co1} hold, and functions $\phi,\psi$ further satisfy (\ref{psi}) and (\ref{phi}). Then the shareholders' optimal consumptions $c_{it}^\mathfrak{B}$ ($i\in\{C,M\}$) are given by (\ref{ps_c}), and the interest rate $r_t^\mathfrak{B}$, the stock mean-return $\mu_t^\mathfrak{B}$, the stock volatility $\sigma_t^\mathfrak{B}$, the Sharpe ratio $\kappa_t^\mathfrak{B}$, the parameters $\mu_{yt}^\mathfrak{B}$ and $\sigma_{yt}^\mathfrak{B}$ are given by:
\begin{align*}
r_t^\mathfrak{B}=&\mu_D+\sigma_D\Lambda_t
-n_{Ct}^\mathfrak{B}\sigma_t^\mathfrak{B}\left[\sqrt{2H}\varepsilon^h\kappa_t^\mathfrak{B}+2Hh\varepsilon^{2h-1}\beta_C\right]
\left[1-y_t+y_t\rho^{\frac{1}{\delta_C}-\frac{1}{\delta_M}}e^{(\beta_C-\beta_M)\Lambda_t}\right]
-l_C\rho^{\frac{1}{\delta_C}}e^{\beta_C\Lambda_t}\nonumber\\
&-(1-n_{Ct}^\mathfrak{B})\sigma_t^\mathfrak{B}\left[\sqrt{2H}\varepsilon^h\kappa_t+2Hh\varepsilon^{2h-1}\beta_M\right]
\left[y_t+(1-y_t)\rho^{\frac{1}{\delta_M}-\frac{1}{\delta_C}}e^{(\beta_M-\beta_C)\Lambda_t}\right]
-l_M\rho^{\frac{1}{\delta_M}}e^{\beta_M\Lambda_t}\nonumber\\
&+(1-y_t)\left[\rho^{\frac{1}{\delta_C}}e^{\beta_C\Lambda_t}-\beta_C\lambda_t
-Hh^2\varepsilon^{2h-2}\beta_C^2\right]
+y_t\left[\rho^{\frac{1}{\delta_M}}e^{\beta_M\Lambda_t}-\beta_M\lambda_t
-Hh^2\varepsilon^{2h-2}\beta_M^2)\right],\nonumber\\
\mu_t^\mathfrak{B}=&r_t^\mathfrak{B}-\sigma_t^\mathfrak{B}\Lambda_t+\sqrt{2H}\varepsilon^h\kappa_t^\mathfrak{B}\sigma_t^\mathfrak{B}
-\frac{1-l_C-l_M}{(1-y_t)\rho^{-\frac{1}{\delta_C}}e^{-\beta_C\Lambda_t}
+y_t\rho^{-\frac{1}{\delta_M}}e^{-\beta_M\Lambda_t}},\nonumber\\
\sigma_t^\mathfrak{B}=&\frac{\sigma_D-\frac{h}{\varepsilon}
\left[\beta_C(1-y_t)+\beta_My_t\right]}
{\left[n_{Ct}^\mathfrak{B}\rho^{\frac{1}{\delta_C}}e^{\beta_C\Lambda_t}+(1-n_{Ct}^\mathfrak{B})\rho^{\frac{1}{\delta_M}}e^{\beta_M\Lambda_t}\right]
\left[(1-y_t)\rho^{-\frac{1}{\delta_C}}e^{-\beta_C\Lambda_t}+y_t\rho^{-\frac{1}{\delta_M}}e^{-\beta_M\Lambda_t}\right]},\label{sigma00}\\
\kappa_t^\mathfrak{B}=&\frac{\sqrt{2H}\varepsilon^h\delta_M\rho^{\frac{1}{\delta_M}}e^{\beta_M\Lambda_t}(1-n_{Ct}^\mathfrak{B})
\left[\sigma_D-\frac{h}{\varepsilon}
\left(\beta_C(1-y_t)+\beta_My_t\right)\right]}
{y_t\left[n_{Ct}^\mathfrak{B}\rho^{\frac{1}{\delta_C}}e^{\beta_C\Lambda_t}+(1-n_{Ct}^\mathfrak{B})\rho^{\frac{1}{\delta_M}}e^{\beta_M\Lambda_t}\right]},\nonumber\\
\sigma_{yt}^\mathfrak{B}=&y_t\left(\frac{\kappa_t^\mathfrak{B}}{\sqrt{2H}\varepsilon^h\delta_M}+\beta_M\frac{h}{\varepsilon}
-\sigma_D\right),\nonumber\\
\mu_{yt}^\mathfrak{B}=&-\sigma_{yt}^\mathfrak{B}(\Lambda_t+2H\varepsilon^{2h}\sigma_D)+l_M\rho^{\frac{1}{\delta_M}}e^{\beta_M\Lambda_t}\nonumber\\
&+y_t\bigg[r_t^\mathfrak{B}-\mu_D-\sigma_D\Lambda_t+\frac{(\kappa_t^\mathfrak{B})^2}{\delta_M}+\sqrt{2H}h\varepsilon^{h-1}\beta_M
\frac{\kappa_t^\mathfrak{B}}{\delta_M}-\rho^{\frac{1}{\delta_M}}e^{\beta_M\Lambda_t}
+\beta_M\lambda_t
+Hh^2\varepsilon^{2h-2}\beta_M^2\bigg].\nonumber
\end{align*}
The fraction of diverted output $x_t^\mathfrak{B}$, the shareholders' optimal bond holdings $b_{it}^\mathfrak{B}$ and stock holdings $n_{it}^\mathfrak{B}$ ($i\in\{C,M\}$) could be expressed as follows:
\begin{align*}
x_t^\mathfrak{B}=&0,\\
b_{Ct}^\mathfrak{B}=&\exp\left(-\int_0^tr_sds\right)(W_{Ct}-n_{Ct}^\mathfrak{B}S_t),\nonumber\\
b_{Mt}^\mathfrak{B}=&-b_{Ct}^\mathfrak{B},\\
n_{Ct}^\mathfrak{B}=&\frac{\frac{1-y_t}{{\delta_C}\rho^{\frac{1}{\delta_C}}e^{\beta_C\Lambda_t}}}
{\frac{1-y_t}{{\delta_C}\rho^{\frac{1}{\delta_C}}e^{\beta_C\Lambda_t}}+
\frac{y_t}{{\delta_M}\rho^{\frac{1}{\delta_M}}e^{\beta_M\Lambda_t}}},\\
n_{Mt}^\mathfrak{B}=&1-n_{Ct}^\mathfrak{B}.
\end{align*}
}
\end{corollary}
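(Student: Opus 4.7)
The plan is to derive Corollary \ref{co3} as a direct specialization of Corollary \ref{co1} to the exponential adjustment functions (\ref{psi}) and (\ref{phi}), exactly in parallel to the way Corollary \ref{co2} specializes Theorem \ref{th2}. Since the benchmark case $p=1$ was already handled once and for all in Corollary \ref{co1} (the only change there from Theorem \ref{th2} was that $x_t^{\mathfrak{B}}=0$ forces the divertion/cost terms to vanish and pins $n_{Ct}^{\mathfrak{B}}$ down to the Region 2 formula (\ref{conc})), I do not need to redo any optimization; all of the equilibrium identities are already in hand and my task reduces to plugging in.

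The key computation is to evaluate the three composite quantities in which $\phi$ and $\psi$ enter the formulas of Corollary \ref{co1}, namely $\varphi$, $\varphi'/\varphi$, and $\varphi''/\varphi$. Under (\ref{psi})--(\ref{phi}) and with $\beta=\beta_1-\beta_2$, one immediately gets
\[
\varphi(\Lambda)=e^{\beta\Lambda},\qquad \frac{\varphi'(\Lambda)}{\varphi(\Lambda)}=\beta,\qquad \frac{\varphi''(\Lambda)}{\varphi(\Lambda)}=\beta^{2},
\]
so that every occurrence of $\varphi'/\varphi$ becomes the constant $\beta$, every occurrence of $\varphi''/\varphi$ becomes $\beta^{2}$, and $\varphi(\Lambda)^{\theta_i}=e^{\beta\theta_i\Lambda}=e^{\beta_i\Lambda}$ for $i\in\{C,M\}$ after invoking the abbreviations $\beta_C=\beta\theta_C$ and $\beta_M=\beta\theta_M$ introduced just before the statement.

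With these substitutions in place, I would simply transcribe each line of Corollary \ref{co1} into its specialized form: $r_t^{\mathfrak{B}}$ from the expression for $r_t^{\mathfrak{B}}$ there (replacing each $\varphi'(\Lambda_t)/\varphi(\Lambda_t)$ by $\beta$, each $\theta_C\varphi'/\varphi$ by $\beta_C$, etc., and collapsing the bracketed terms involving $(\theta_C-1)(\varphi'/\varphi)^2+\varphi''/\varphi$ to $(\theta_C-1)\beta^2+\beta^2=\theta_C\beta\cdot\beta=\beta_C^{\,2}$ and analogously for $M$); $\sigma_t^{\mathfrak{B}}$ and $\mu_t^{\mathfrak{B}}$ from (\ref{sigmaB}) and its neighbour; $\kappa_t^{\mathfrak{B}}$, $\sigma_{yt}^{\mathfrak{B}}$, $\mu_{yt}^{\mathfrak{B}}$ likewise. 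The bond and stock holdings, together with $x_t^{\mathfrak{B}}=0$ and (\ref{conc}) rewritten with $e^{\beta_C\Lambda_t}$ and $e^{\beta_M\Lambda_t}$, carry over unchanged apart from the same cosmetic substitution. The optimal consumption formula (\ref{ps_c}) is unaffected in form because it is expressed through $\varphi(\Lambda_t)^{\theta_i}=e^{\beta_i\Lambda_t}$.

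There is no genuine obstacle here; the statement is a straight corollary, and the only thing to be careful about is bookkeeping. In particular, I must track the factor $\theta_i$ whenever it multiplies a $\varphi'/\varphi$ or $\varphi''/\varphi$ term so that the resulting coefficient is correctly identified with $\beta_i$ (rather than $\beta$) or $\beta_i^{\,2}$, and I must verify that the cancellation $(\theta_i-1)\beta^2+\beta^2=\theta_i\beta^2=\beta_i\beta$ indeed reduces the curly brackets containing the second derivative of $\varphi$ to the simple $-Hh^2\varepsilon^{2h-2}\beta_i^{\,2}$ form appearing in the statement. Once these substitutions are done line by line, the display block of Corollary \ref{co3} is obtained verbatim, completing the proof.
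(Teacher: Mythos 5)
Your proposal is correct and matches the paper exactly: the paper obtains Corollary \ref{co3} by the same direct substitution $\varphi=e^{\beta\Lambda}$, $\varphi'/\varphi=\beta$, $\varphi''/\varphi=\beta^2$ into Corollary \ref{co1}, with no further argument. The only caveat is a small bookkeeping slip in your first pass where you equate the inner bracket $(\theta_C-1)\beta^2+\beta^2=\theta_C\beta^2$ with $\beta_C^2$ before multiplying by the outer factor $\theta_C$; your later restatement handles this correctly.
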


\begin{remark}\label{diff}{\it
Making use of Corollaries \ref{co2} and \ref{co3}, it is not hard to obtain
\begin{align*}
\sigma_t-\sigma_t^\mathfrak{B}=&\frac{\sigma_t^\mathfrak{B}\left(\rho^{\frac{1}{\delta_M}}e^{\beta_M\Lambda_t}-\rho^{\frac{1}{\delta_C}}e^{\beta_C\Lambda_t}\right)
(n_{Ct}^*-n_{Ct}^\mathfrak{B})}{n_{Ct}^*\rho^{\frac{1}{\delta_C}}e^{\beta_C\Lambda_t}+(1-n_{Ct}^*)\rho^{\frac{1}{\delta_M}}e^{\beta_M\Lambda_t}},\\
\kappa_t-\kappa^\mathfrak{B}=&-\frac{\sqrt{2H}\varepsilon^h\rho^{\frac{1}{\delta_C}}e^{\beta_C\Lambda_t}[\sigma_D-\frac{h}{\varepsilon}
(\beta_C(1-y_t)+\beta_My_t)](n_{Ct}^*-n_{Ct}^\mathfrak{B})}
{\left[\frac{1-y_t}{\delta_C}+\frac{y_t}{\delta_M}\right]\left[n_{Ct}^*\rho^{\frac{1}{\delta_C}}e^{\beta_C\Lambda_t}+(1-n_{Ct}^*)\rho^{\frac{1}{\delta_M}}e^{\beta_M\Lambda_t}\right]
n_{Mt}^\mathfrak{B}},\\
\sigma_{yt}-\sigma_{yt}^\mathfrak{B}=&-\frac{y_t\rho^{\frac{1}{\delta_C}}e^{\beta_C\Lambda_t}[\sigma_D-\frac{h}{\varepsilon}
(\beta_C(1-y_t)+\beta_My_t)](n_{Ct}^*-n_{Ct}^\mathfrak{B})}
{\delta_M\left[\frac{1-y_t}{\delta_C}+\frac{y_t}{\delta_M}\right]\left[n_{Ct}^*\rho^{\frac{1}{\delta_C}}e^{\beta_C\Lambda_t}+(1-n_{Ct}^*)\rho^{\frac{1}{\delta_M}}e^{\beta_M\Lambda_t}\right]
n_{Mt}^\mathfrak{B}}.
\end{align*}
Consequently, the equilibrium processes (\ref{r0})-(\ref{muy0}) can be expressed as functions of minority shareholder’s consumption share $y_t$, their counterparts in the benchmark economy, the fraction of diverted output $x_t^*$, the past information $\Lambda_t$ (but independent of $\lambda_t$) and the excess ownership concentration $n_{Ct}^*-n_{Ct}^\mathfrak{B}$. Furthermore, it can be verified that higher ownership concentration relative to the full protection benchmark leads to higher stock volatility if $\rho^{\frac{1}{\delta_M}}\beta^{\theta_M}-\rho^{\frac{1}{\delta_C}}\beta^{\theta_C}>0$ (lower stock volatility if $\rho^{\frac{1}{\delta_M}}\beta^{\theta_M}-\rho^{\frac{1}{\delta_C}}\beta^{\theta_C}<0$), and lower Sharpe ratio and lower volatility of the minority shareholder’s consumption share if $\sigma_D-\frac{h}{\varepsilon}
(\beta_C(1-y_t)+\beta_My_t)>0$ (higher Sharpe ratio and higher volatility of the minority shareholder’s consumption share if $\sigma_D-\frac{h}{\varepsilon}
(\beta_C(1-y_t)+\beta_My_t)<0$) .
}\end{remark}

\section{Numerical results}\label{section5}\noindent
\setcounter{equation}{0}
In this section, we first construct an approximation scheme with pathwise convergence for past information, and then by numerical results, we analyse how investor protection and past information affects dynamics in equilibrium.

Following the work of \cite{Basak}, we set $\mu_D=0.015,\sigma_D=0.13,\gamma_C=3,\gamma_M=3.5,k=3,l_C=0.1,l_M=0.5$
as basic parameters for numerical analysis based on Corollaries \ref{co2} and \ref{co3}. Moreover, we set $\varepsilon=0.1$, $\rho=0.05$, $\beta=0.1$, $\alpha_C=0.5$, $\alpha_M=0.75$, $\lambda_t=0$ and vary $y_t$ from 0 to 1 with the step length 0.01.

\subsection{Past information}\label{subsection1}
The past information $\Lambda_t$ is a remarkable characteristic in our approximate fractional economy and determines how investors treat the economy at the present time $t$ by using the historical realized  data. As we noted in Remark \ref{memory}, ``no memory" ($\Lambda_t=0$), ``good memory" ($\Lambda_t>0$) and  ``bad memory" ($\Lambda_t<0$) of the economy could lead to different equilibria.

For a fixed $T>0$ we use the information $\{\widehat{D}_t\}_{0\leq s \leq T}$ to estimate $\Lambda$ and $\lambda$ by constructing an approximation scheme with pathwise convergence.

Setting $Z_t=\ln \widehat{D}_t$, it is seen by It\^{o}'s Lemma that
\begin{equation}\label{Z}
dZ_t=\left(\mu_D+\sigma_D\Lambda_t-H\varepsilon^{2h}\sigma_D^2\right)dt+\sqrt{2H}\varepsilon^h\sigma_Ddw_t.
\end{equation}
Define the approximation scheme $\widehat{\Lambda}_{t_{n+1}}$ and $\widehat{\lambda}_{t_{n+1}}$ for $\Lambda$ and $\lambda$ at $t=t_{n+1}$ as
\begin{align}
\Delta \widehat{w}_{t_{n+1}}=&\frac{Z_{t_{n+1}}-Z_{t_{n}}-\left(\mu_D+\sigma_D\widehat{\Lambda}_{t_{n}}-H\varepsilon^{2h}\sigma_D^2\right)\Delta t}{\sqrt{2H}\varepsilon^h\sigma_D};\label{s1}\\
\widehat{\Lambda}_{t_{n+1}}=&\sum_{k=0}^n\sqrt{2H}h(t_{n+1}-t_k+\varepsilon)^{h-1}\Delta \widehat{w}_{t_{k+1}},\label{s2}\\
\widehat{\lambda}_{t_{n+1}}=&\sum_{k=0}^n\sqrt{2H}h(h-1)(t_{n+1}-t_k+\varepsilon)^{h-2}\Delta \widehat{w}_{t_{k+1}},\quad \widehat{\Lambda}_{t_0}=0,\label{s0}
\end{align}
where $n=0,1,\cdots,N-1$ with $N\in\mathbb{N}$ and $0=t_0<t_1<\cdots<t_n<\cdots<t_{N-1}<t_N=T$ are the time nodes of the discretization. For simplicity we also assume that $\Delta t=t_{n+1}-t_{n}=T/N\leq 1$ for each $n$.

In order to prove pathwise convergence of the approximation scheme, we assume that $T,\Delta t,\varepsilon,H,\mu_D,\sigma_D$ are all fixed constants. Then we can obtain the following theorem of pathwise convergence.

\begin{theorem}\label{th4.1}{\it
For all $\varsigma>0$, there exists a nonnegative random variable $K_\varsigma$ (depending only on $\varsigma$ with $\mathbb{E}K_\varsigma^q<+\infty$ for all $q\geq 1$) such that
\begin{align}
\max\limits_{n=0,1,\cdots,N-1}\left\{\left|\Lambda_{t_{n+1}}(\omega)-\widehat{\Lambda}_{t_{n+1}}(\omega)\right|,
\left|\lambda_{t_{n+1}}(\omega)-\widehat{\lambda}_{t_{n+1}}(\omega)\right|\right\}
\leq K_\varsigma(\omega)(\Delta t)^{\frac{1}{6}-\varsigma}
\end{align}
for almost all $\omega\in\Omega$.
}
\end{theorem}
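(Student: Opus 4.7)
The plan is to decouple the two sources of error---the inability to observe the true Brownian increments $\Delta w_{t_{k+1}}$ and the Riemann-sum discretization of the Wiener integrals defining $\Lambda$ and $\lambda$---and then close the loop via a discrete Grönwall argument. First I would invert (\ref{Z}) on each subinterval $[t_n,t_{n+1}]$ to obtain the identity
\[
\Delta\widehat{w}_{t_{n+1}}-\Delta w_{t_{n+1}}
=\frac{1}{\sqrt{2H}\varepsilon^{h}}\int_{t_{n}}^{t_{n+1}}(\Lambda_{s}-\widehat{\Lambda}_{t_{n}})\,ds,
\]
and split the right-hand side into the Hölder oscillation $\int_{t_n}^{t_{n+1}}(\Lambda_s-\Lambda_{t_n})\,ds$ of $\Lambda$ on $[t_n,t_{n+1}]$ and the accumulated past error $(\Lambda_{t_n}-\widehat{\Lambda}_{t_n})\Delta t$. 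This makes completely explicit how the noise driving the scheme deviates from the true Brownian noise.

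Next I would establish pathwise regularity for $\Lambda$ and $\lambda$. Since $\varepsilon>0$ the kernels $(t-s+\varepsilon)^{h-1}$ and $(t-s+\varepsilon)^{h-2}$ are bounded and smooth on $[0,T]^{2}$, so Itô isometry gives $\mathbb{E}|\Lambda_t-\Lambda_s|^{2q}+\mathbb{E}|\lambda_t-\lambda_s|^{2q}\le C_q|t-s|^{q}$ for every $q\ge 1$. Kolmogorov's continuity criterion then produces random constants with all moments governing the Hölder moduli of these processes, which control the oscillation term above. In parallel, I would introduce the intermediate quantity $\widetilde{\Lambda}_{t_{n+1}}=\sum_{k=0}^{n}\sqrt{2H}\,h(t_{n+1}-t_k+\varepsilon)^{h-1}\Delta w_{t_{k+1}}$ (and its $\lambda$-analogue) built from the \emph{true} Brownian increments, and bound the Riemann-sum error $R_n:=\Lambda_{t_{n+1}}-\widetilde{\Lambda}_{t_{n+1}}$ pathwise using Itô isometry and a union bound over the $N=T/\Delta t$ time-nodes.

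With these pieces in place I would decompose
\[
\Lambda_{t_{n+1}}-\widehat{\Lambda}_{t_{n+1}}
=R_n+\sqrt{2H}\,h\sum_{k=0}^{n}(t_{n+1}-t_k+\varepsilon)^{h-1}\bigl(\Delta w_{t_{k+1}}-\Delta\widehat{w}_{t_{k+1}}\bigr),
\]
substitute the first-step identity, and use boundedness of the kernel together with $\sum_{k=0}^{n}(t_{n+1}-t_k+\varepsilon)^{h-1}\Delta t=O(1)$ to obtain an inequality of the form
\[
e_{n+1}\le A_{\Delta t}(\omega)+C\,\Delta t\sum_{k=0}^{n}e_{k},\qquad e_{n}:=|\Lambda_{t_n}-\widehat{\Lambda}_{t_n}|,
\]
where $A_{\Delta t}(\omega)$ collects $\max_n|R_n|$ and the integrated Hölder oscillation, both bounded by $K_\varsigma(\omega)(\Delta t)^{\frac{1}{6}-\varsigma}$ after absorbing the log-terms coming from the maximum over $n$. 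Discrete Grönwall then yields the claimed rate uniformly in $n$. The same argument, using the analogous (also bounded and smooth) kernel $(t-s+\varepsilon)^{h-2}$, handles $|\lambda_{t_{n+1}}-\widehat{\lambda}_{t_{n+1}}|$.

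The main obstacle is the coupling: $\Delta\widehat{w}_{t_{k+1}}$ itself depends on $\widehat{\Lambda}_{t_k}$, so the two errors feed each other, and one must carefully balance (i) the Hölder exponent of $\Lambda$, (ii) the moment bound on the Riemann-sum error $R_n$, (iii) the $L^{2q}\to$ a.s.\ passage via Kolmogorov (which costs an arbitrarily small $\varsigma$ in the exponent), and (iv) the Grönwall factor generated by the accumulated kernel sums, so that the final pathwise rate $(\Delta t)^{\frac{1}{6}-\varsigma}$ survives uniformly over $n=0,\ldots,N-1$. Producing a single random constant $K_\varsigma(\omega)$ with all polynomial moments---rather than constants depending on $n$ or on $\Delta t$---is the technically delicate point.
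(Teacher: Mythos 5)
Your proposal follows essentially the same route as the paper's proof: the same intermediate quantities $\widetilde{\Lambda},\widetilde{\lambda}$ built from the true Brownian increments, the same identity expressing $\Delta\widehat{w}_{t_{n+1}}-\Delta w_{t_{n+1}}$ as the oscillation of $\Lambda$ on $[t_n,t_{n+1}]$ plus the accumulated error $\sigma_D\Delta t\,(\Lambda_{t_n}-\widehat{\Lambda}_{t_n})$, and the same closing step via the discrete Gr\"onwall lemma. The only (minor) difference is the device for passing from Gaussian moment bounds to a single pathwise random constant with all moments --- you invoke Kolmogorov's continuity criterion plus a union bound, whereas the paper computes the variances directly and applies the lemma of Jentzen--Kloeden--Neuenkirch; both cost the same arbitrarily small $\varsigma$ in the exponent and your version is sound.
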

With Theorem \ref{th4.1} and the approximation scheme (\ref{s1})-(\ref{s0}), we can estimate $\Lambda_t$ and $\lambda_t$ pathwise at the time nodes $0=t_0<t_1<\cdots<t_n<\cdots<t_{N-1}<t_N=T$.
\begin{remark}{\it
We note here that the approximation schemes (\ref{s1})-(\ref{s0}) works in the case $H=\frac{1}{2}$ though we do not have to use it to approximate $\Lambda=\lambda=0$. Indeed, since $h=H-\frac{1}{2}=0$ in the case $H=\frac{1}{2}$, it is obtained from (\ref{s2}) and (\ref{s0}) that $\widehat{\Lambda}_{t_{n+1}}=\widehat{\lambda}_{t_{n+1}}=0$, which clearly implies $\widehat{\Lambda}_{t_{n+1}}=\Lambda_{t_{n+1}}$ and $\widehat{\lambda}_{t_{n+1}}=\lambda_{t_{n+1}}$ for $n=0,1,\cdots,N-1.$
}
\end{remark}

Observing the approximation scheme (\ref{s1})-(\ref{s2}), there are multiple complex factors to determine the memory state among ``no memory", ``good memory" and ``bad memory". So we just roughly analyze the past information $\Lambda_t$ by numerical methods. We set specially  $\varepsilon=0.1$ (the case of $\varepsilon=10^{-5}$ is only used for comparison), the time interval $[0,T]=[0,1]$ with
$\Delta t=10^{-3}$ and some given historical realized data $Z$ as
\begin{align*}
Z_1(t)=&0.015+0.02(t-0.5),\\
Z_2(t)=&0.015-0.02(t-0.5),\\
Z_3(t)=&0.015.
\end{align*}
\begin{figure}[!htbp]
\centering
\subfigure[$H=0.35$]{
\includegraphics[ width = 0.31\textwidth]{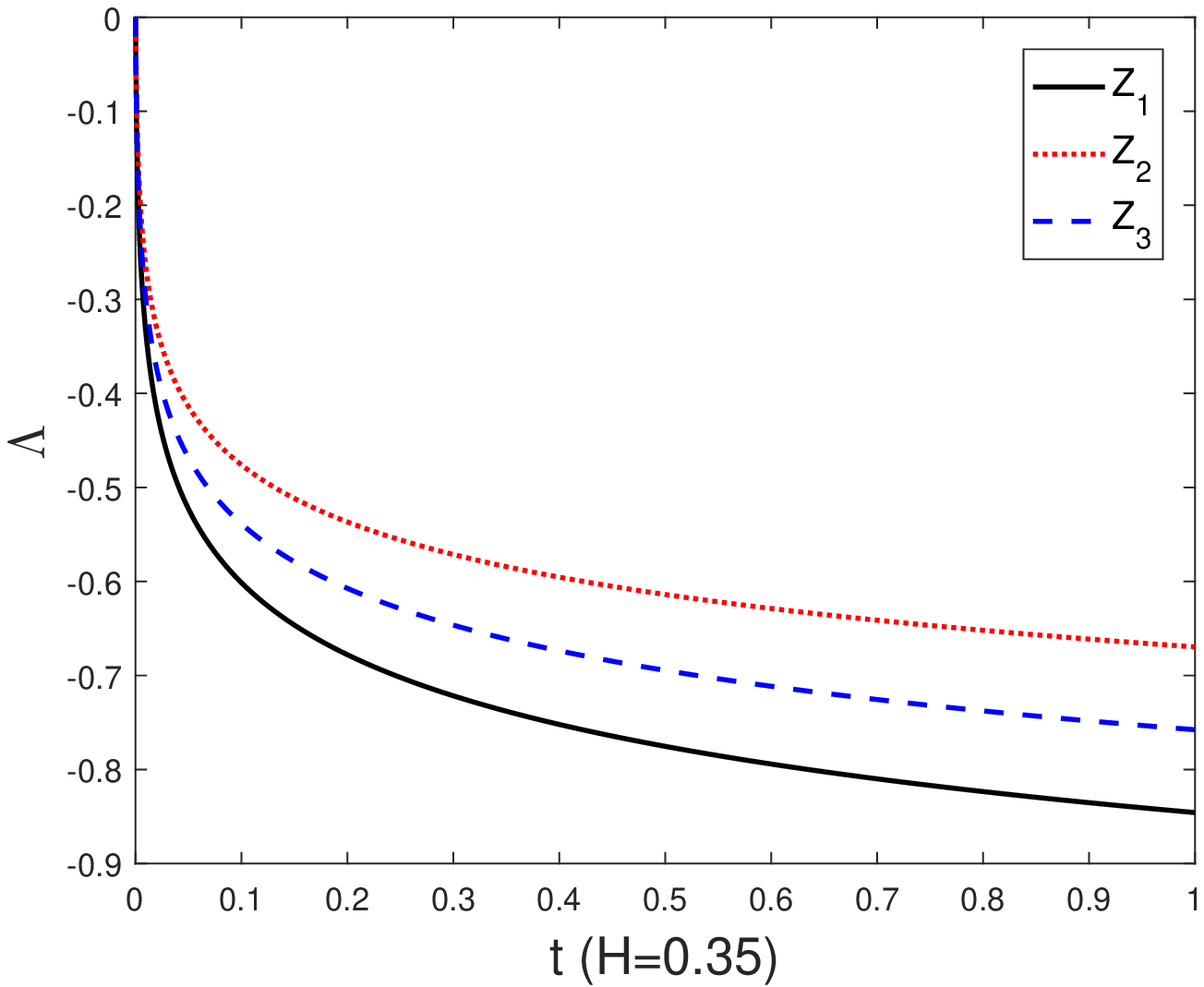}
}
\subfigure[$H=0.5$]{
\includegraphics[ width = 0.31\textwidth]{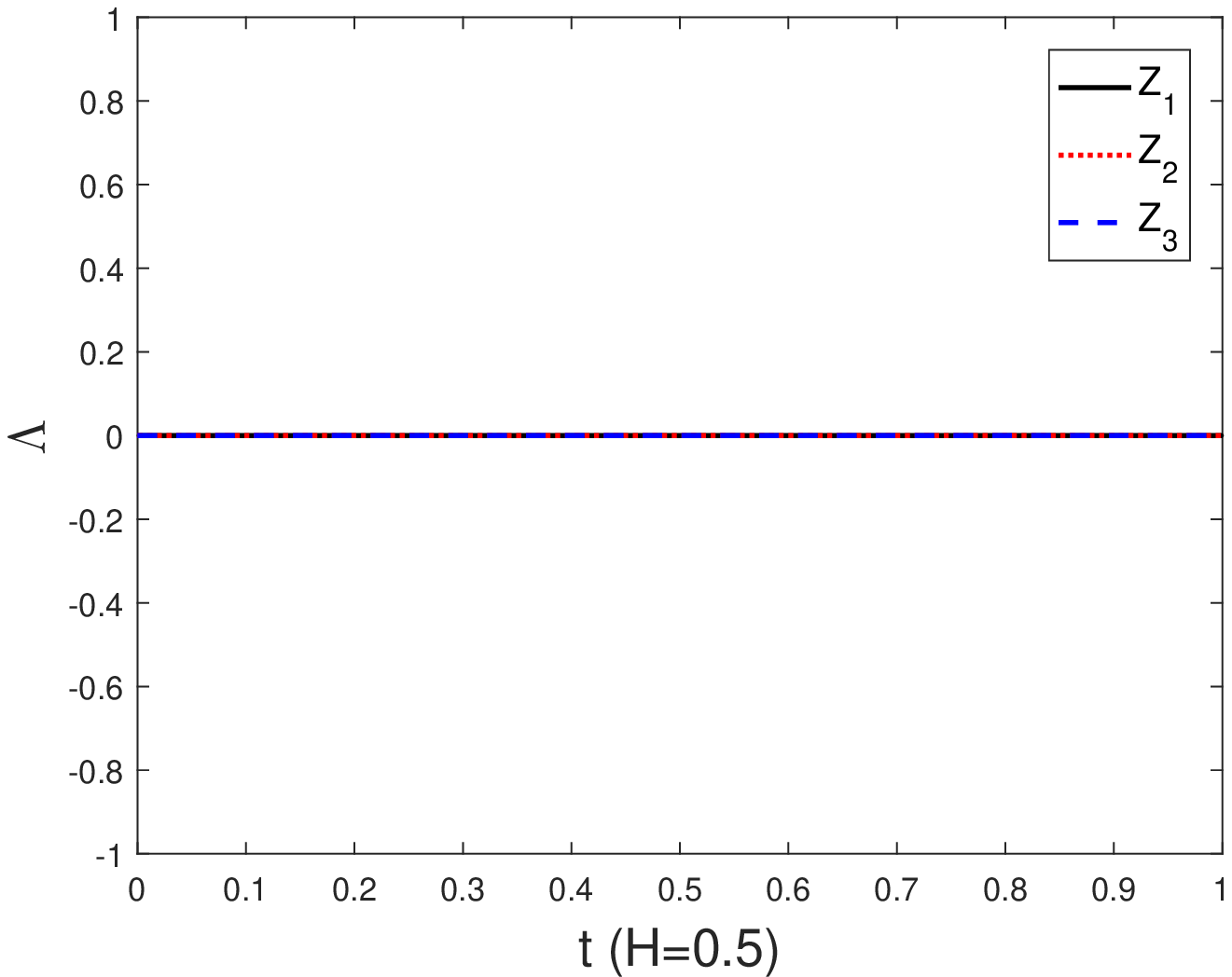}
}
\subfigure[$H=0.65$]{
\includegraphics[ width = 0.31\textwidth]{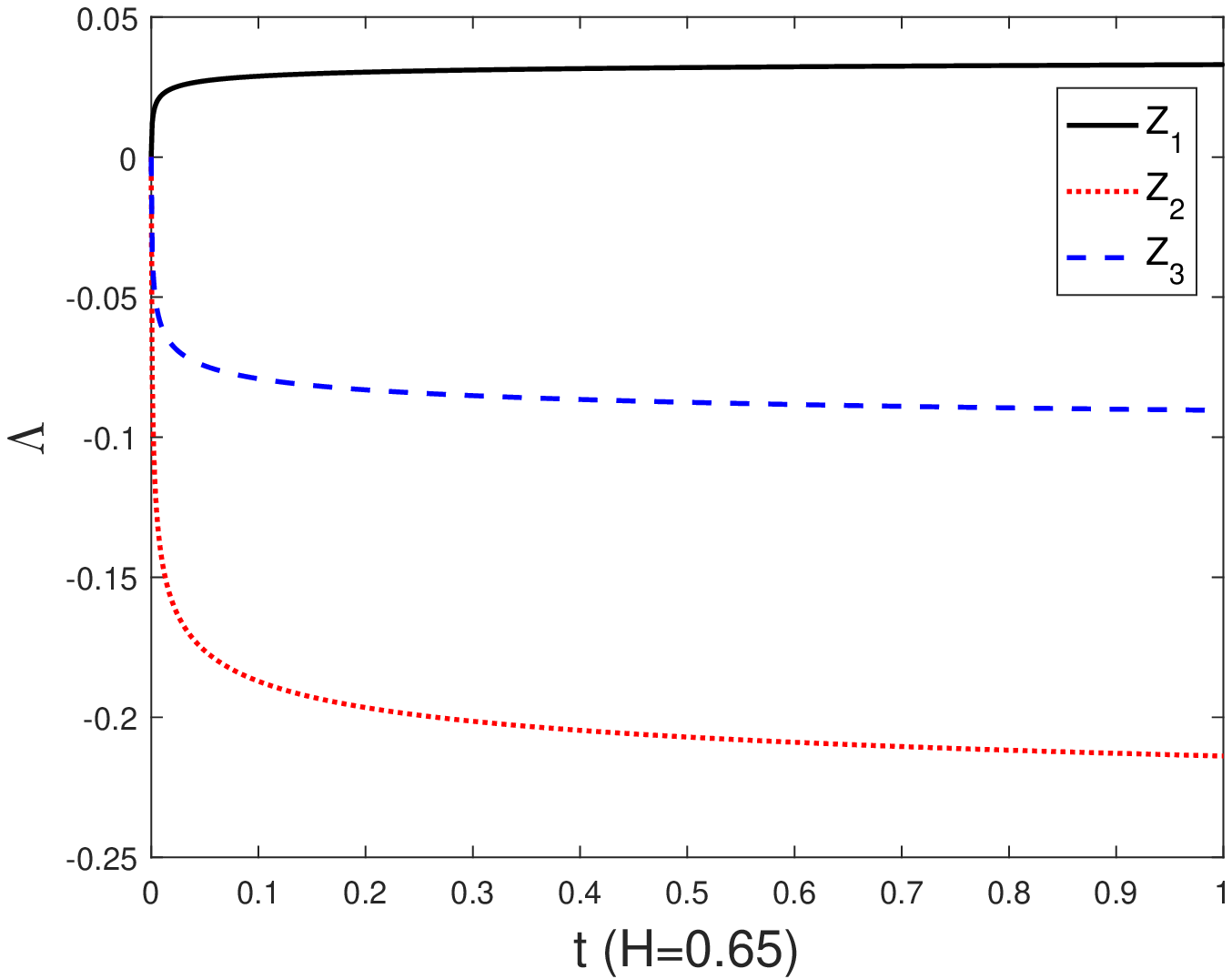}
}
\caption{Past information $\Lambda_t$ with $\varepsilon=10^{-5}$}
\label{fig_W}
\end{figure}

\begin{figure}[!htbp]
\centering
\subfigure[$H=0.35$]{
\includegraphics[ width = 0.31\textwidth]{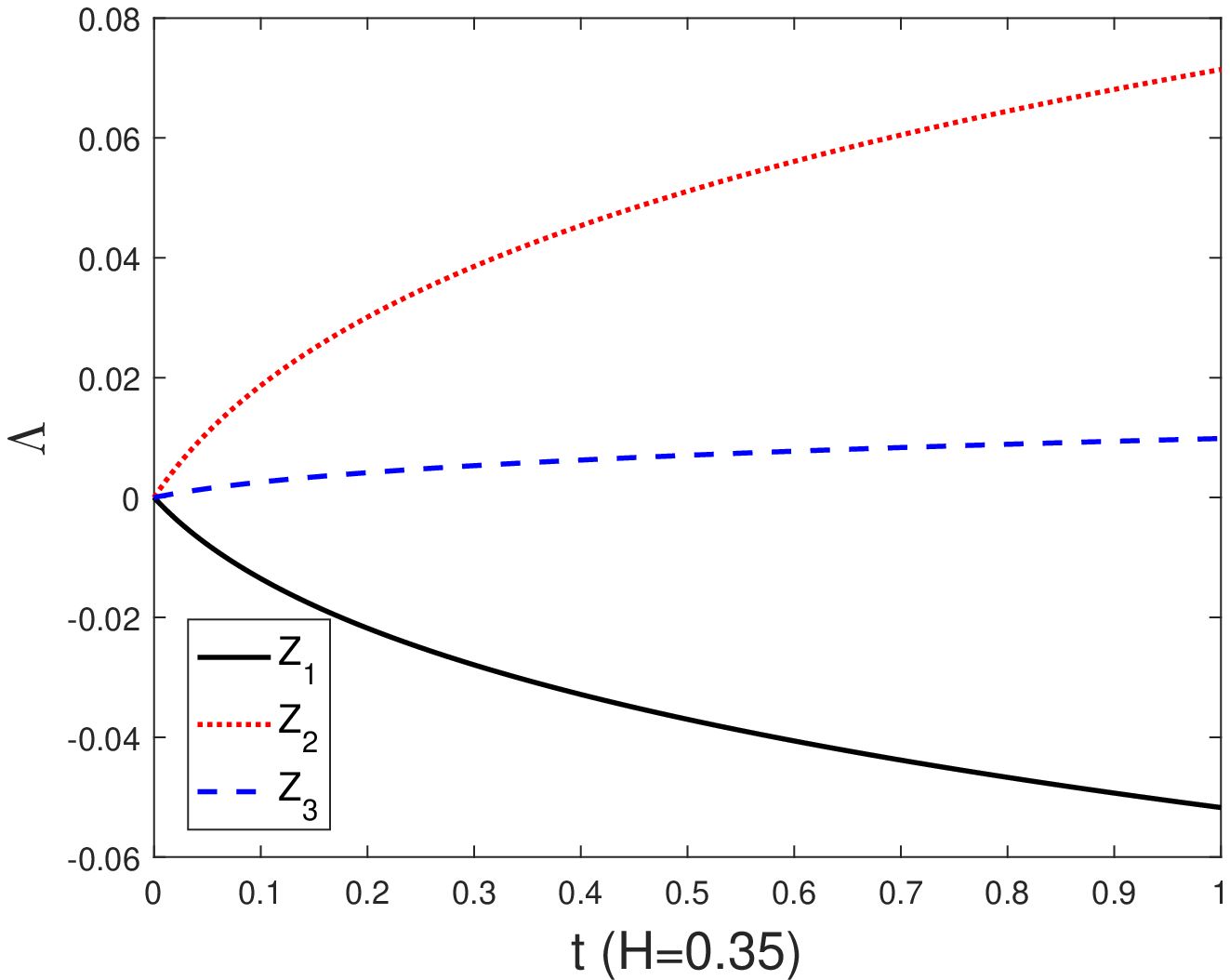}
}
\subfigure[$H=0.5$]{
\includegraphics[ width = 0.31\textwidth]{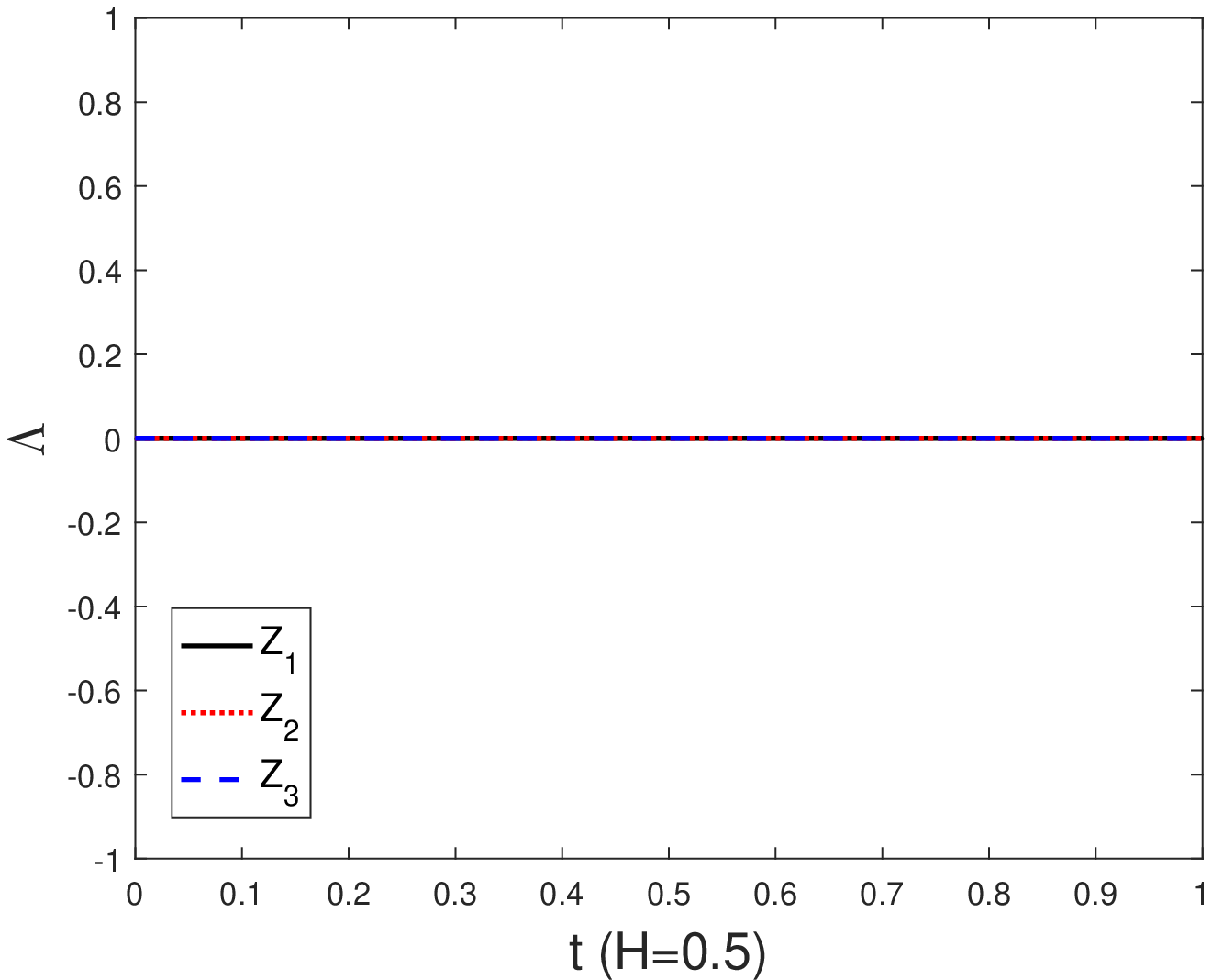}
}
\subfigure[$H=0.65$]{
\includegraphics[ width = 0.31\textwidth]{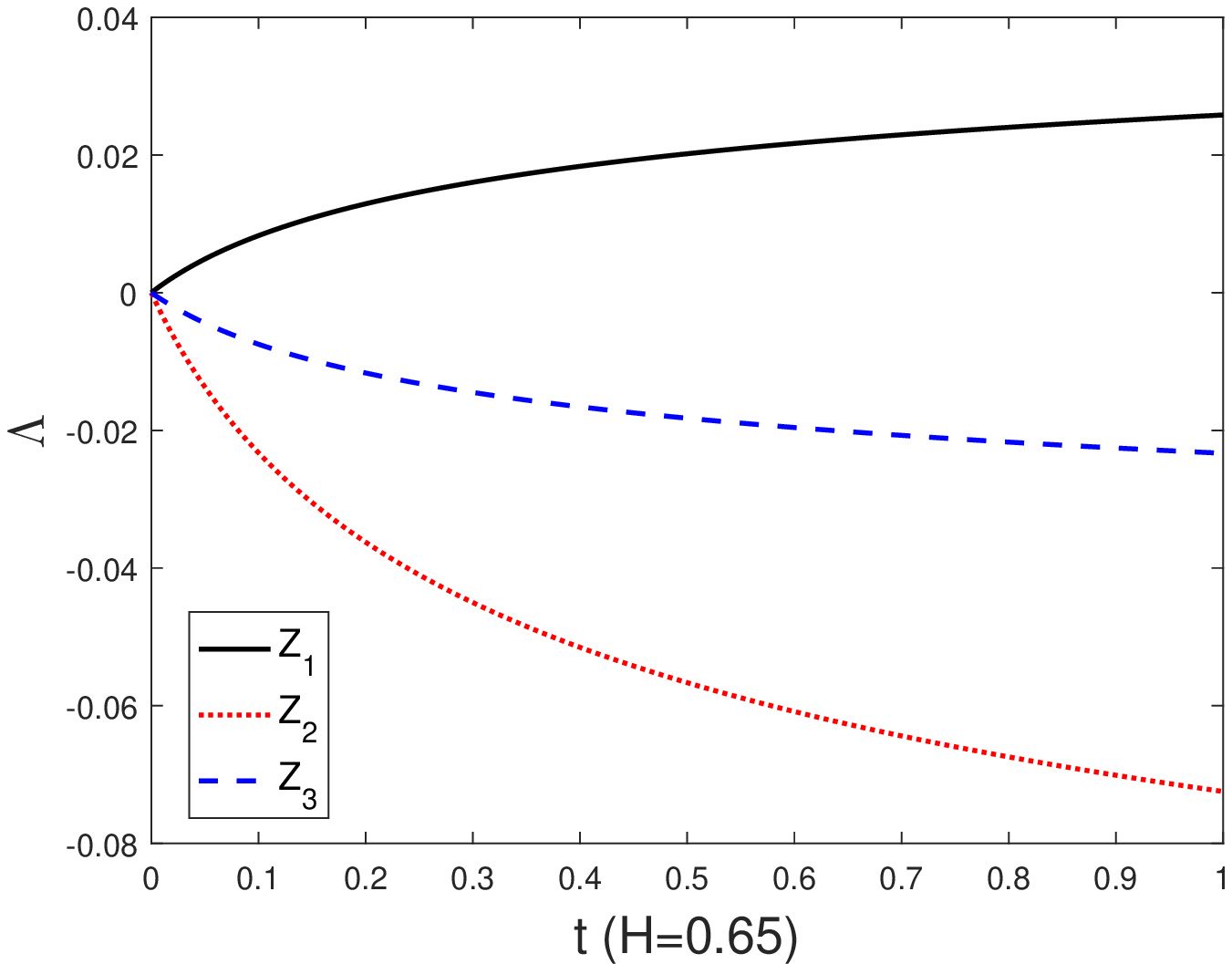}
}
\caption{Past information $\Lambda_t$ with $\varepsilon=0.1$}
\label{fig_WW}
\end{figure}

It is seen by (\ref{s1})-(\ref{s2}) and Figures \ref{fig_W} and \ref{fig_WW} that the parameter $\varepsilon$ plays an important role to determine the state of investors' memory.

When $\varepsilon$ is sufficiently small, the modified volatility $\sqrt{2H}\varepsilon^h\sigma_D$ of the output depends on Hurst index $H$. For $H<0.5$, the modified volatility of the output is extremely high. In such an economy, no matter how the historical returns of the output behave, investors can not trust the extremely volatile market. Panel (a) in Figure \ref{fig_W} (with parameters $H=0.35$ and $\varepsilon=10^{-5}$) shows that the past information is a kind of ``bad memory" (i.e. $\Lambda_t<0$) for shareholders in such an economy. For $H>0.5$, the modified volatility of the output is extremely low. In such an economy, investors would focus on wether the realized returns of the output exceed the mean returns of the output. Panel (c) in Figure \ref{fig_W} (with parameters $H=0.65$ and $\varepsilon=10^{-5}$) shows that the past information is a kind of ``good memory" (i.e. $\Lambda_t>0$) for shareholders when the returns of the output behave better than their expectations (i.e. $Z=Z_1$), while the past information is a kind of ``bad memory" for shareholders when the returns of the output behave worse than their expectations (i.e. $Z=Z_2$ and $Z=Z_3$).

When $\varepsilon$ is not sufficiently small, the modified volatility $\sqrt{2H}\varepsilon^h\sigma_D$ of the output is neither extremely high nor extremely low. In such an economy, investors should take both returns and volatilities into account by making use of the historical data. Both panel (a) (with parameters $H=0.35$, $\varepsilon=0.1$, $Z=Z_2$ and $Z=Z_3$) and panel (c) (with parameters $H=0.65$, $\varepsilon=0.1$, $Z=Z_1$ ) in Figure \ref{fig_WW} show that it is worth for investors risking volatilities to benefit from returns, which indicates that the past information is a kind of ``good memory" for shareholders. Both panel (a) (with parameters $H=0.35$, $\varepsilon=0.1$, $Z=Z_1$) and panel (c) (with parameters $H=0.65$, $\varepsilon=0.1$, $Z=Z_2$ and $Z=Z_3$) in Figure \ref{fig_WW} also show that it is unworthy for investors enduring risks to gain returns, which indicates that the past information is a kind of ``bad memory" for shareholders.

Specially, when $\varepsilon$ disappears in the economy (i.e. the case of $H=0.5$), our economy is driven by Brownian motion with ``no memory'' properties. In such an economy, whatever the historical realized  data $Z$ behave, investors can not get any useful information from the historical realized data.  Panel (b) in Figure \ref{fig_W} (with parameters $H=0.5$ and $\varepsilon=10^{-5}$) and panel (b) in Figure \ref{fig_WW} (with parameters $H=0.5$ and $\varepsilon=0.1$) present that the past information is a kind of ``no memory" for shareholders.

From the numerical analysis of past information, we conclude that the process $\Lambda_t$ is the key character in our approximate fractional economies with different Hurst indices $H$. Furthermore, the market with $\Lambda_t=0$ at time $t>0$ is almost equivalent to a classic economy with instantaneous return $\mu_D$ and volatility $\sqrt{2H}\varepsilon^h\sigma_D$ of the output. Then by considering different scenarios of $\Lambda_t$, not only
can we depict different past information, but a similar classic economy is also included as a comparative group. Thus, we set $H=0.65$ and consider different scenarios of $\Lambda_t$ with $\Lambda_t=-5,\Lambda_t=0,\Lambda_t=5$ for following numerical analysis, where the extreme cases of good memory and bad memory are set on purpose to ensure that different lines in the following figures can be distinguished by the naked eye.

\subsection{Stock holdings and diverted output}
Controlling shareholder's stock holding $n_{Ct}^*$ is the basic and crucial result in equilibrium. So we start our numerical analysis in equilibrium with Figures \ref{fig_stockL} and \ref{fig_stockp} demonstrating the effects of investor protection and past information on controlling shareholder's equilibrium stock holding.

\begin{figure}[!htbp]
\centering
\subfigure[$\Lambda_t=0$]{
\includegraphics[ width = 0.31\textwidth]{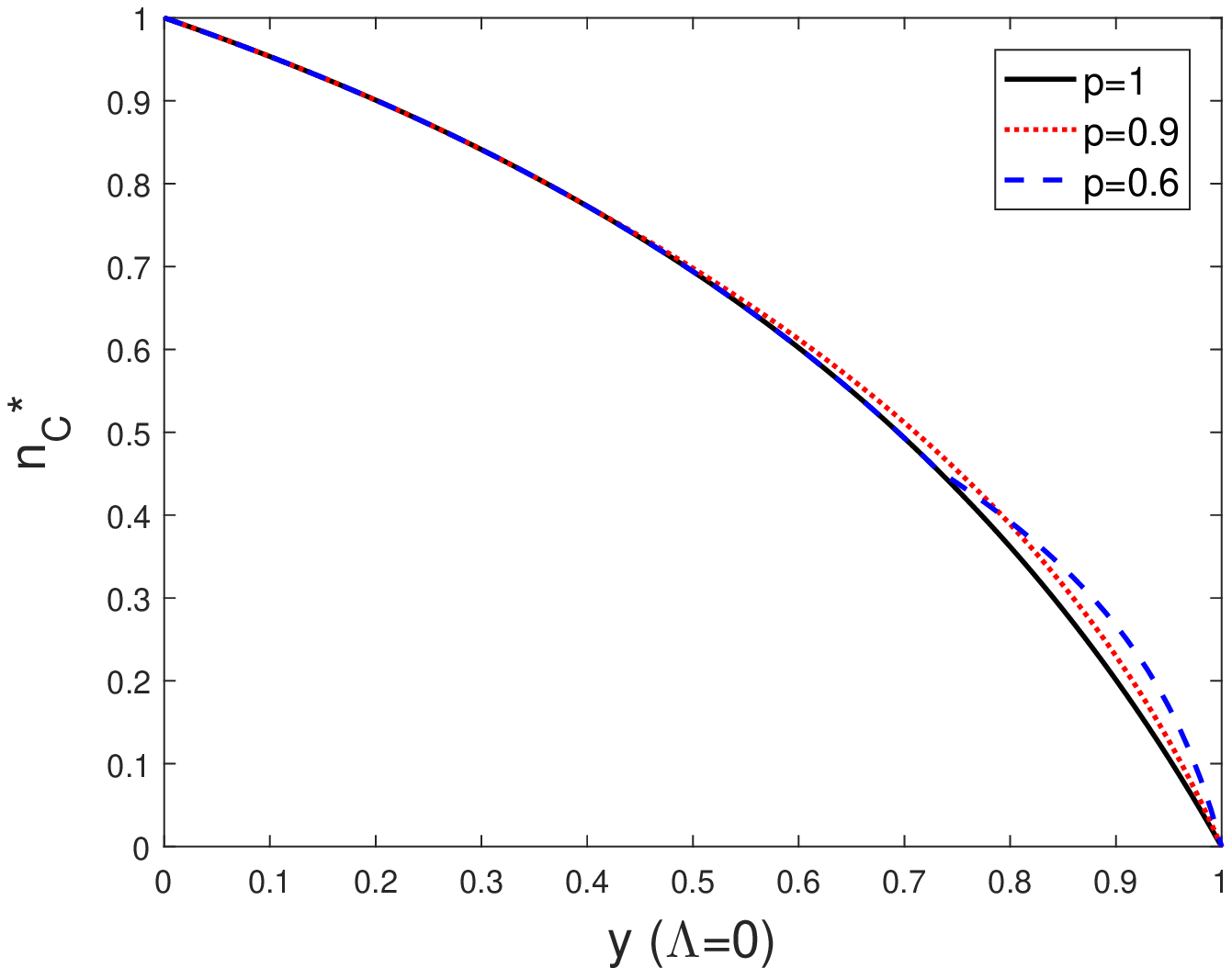}
}
\subfigure[$\Lambda_t=-5$]{
\includegraphics[ width = 0.31\textwidth]{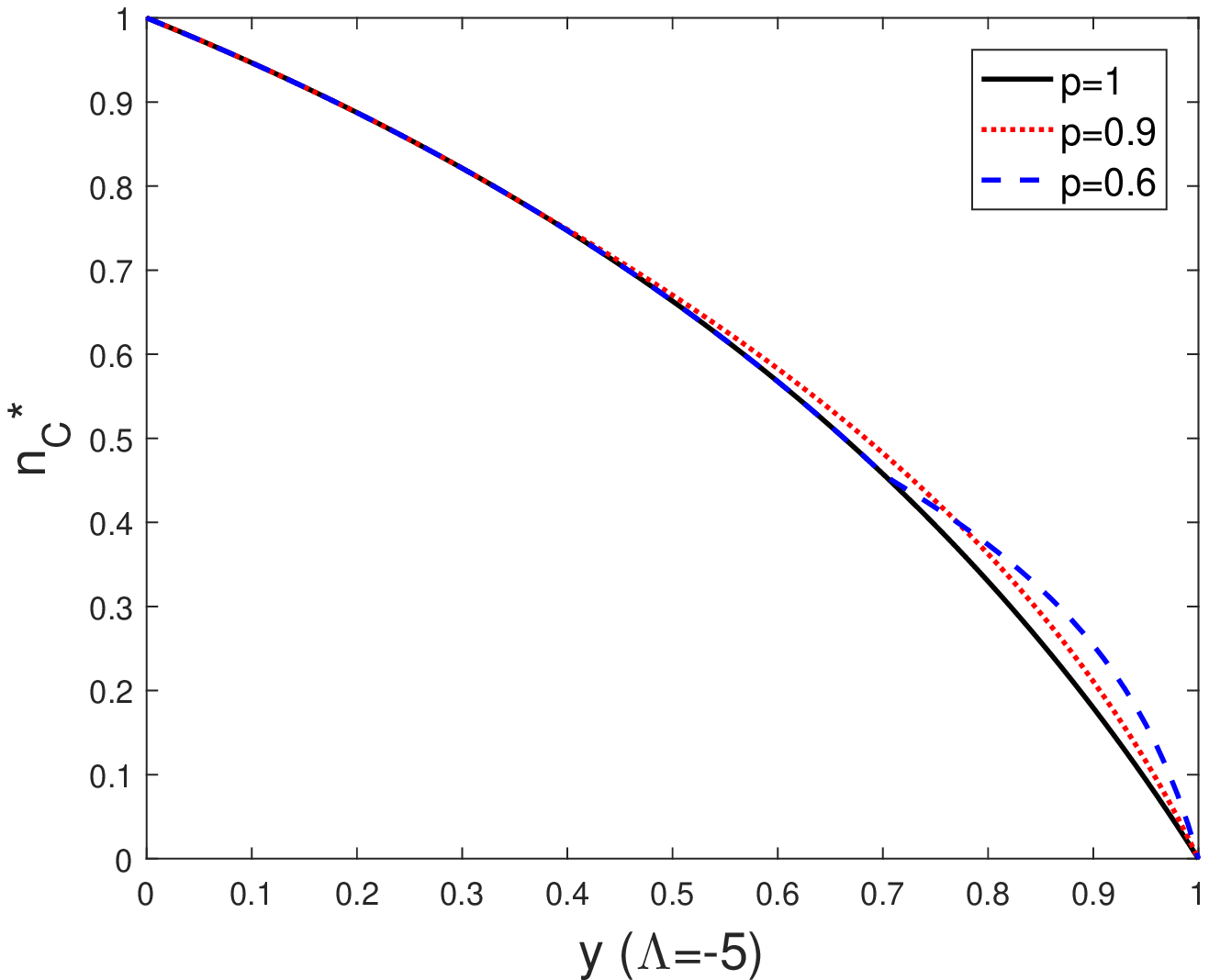}
}
\subfigure[$\Lambda_t=5$]{
\includegraphics[ width = 0.31\textwidth]{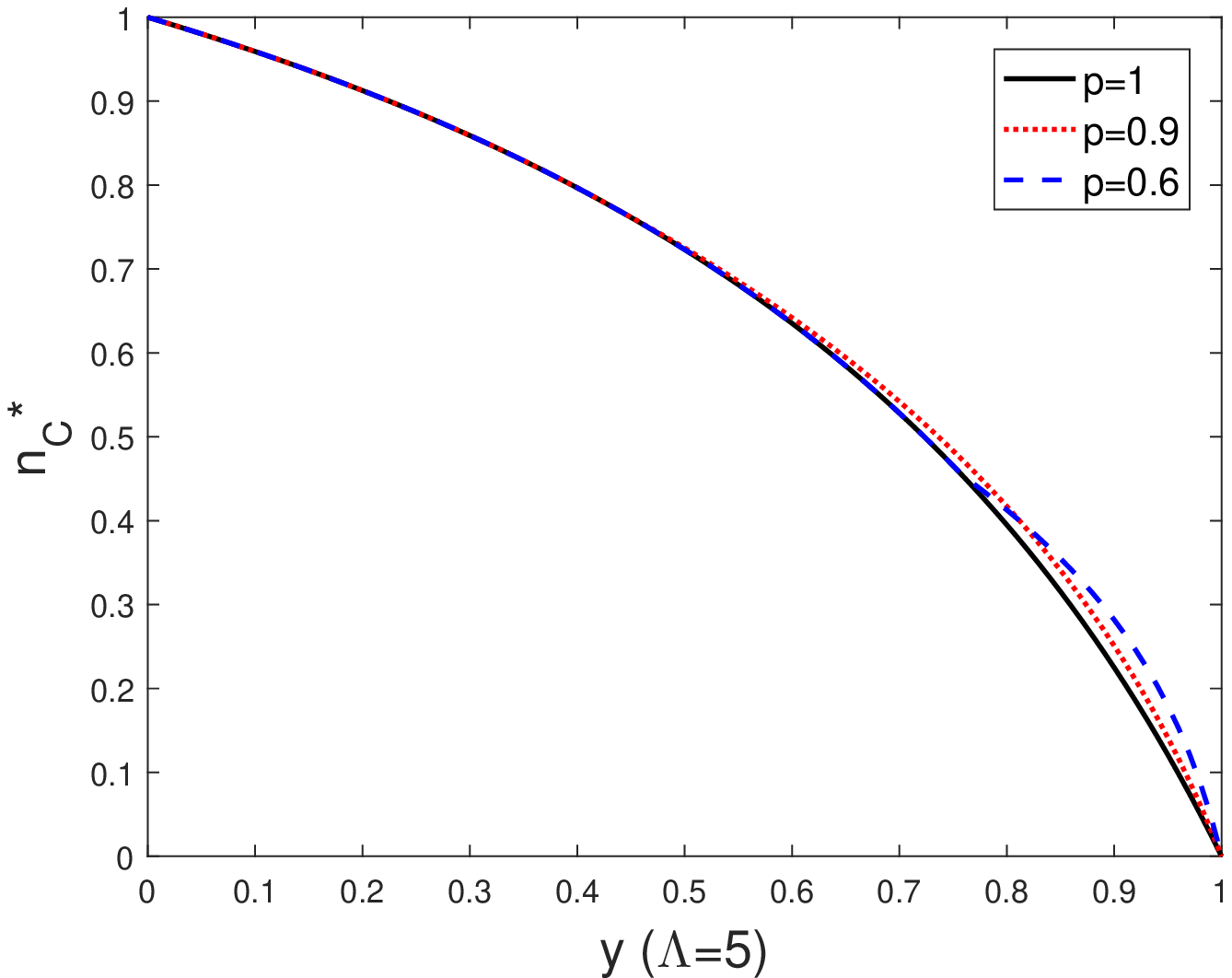}
}
\caption{Stock holdings $n_{Ct}^*$ with different  investor protect}
\label{fig_stockL}
\end{figure}
\begin{figure}[!htbp]
\centering
\subfigure[$p=1$]{
\includegraphics[ width = 0.31\textwidth]{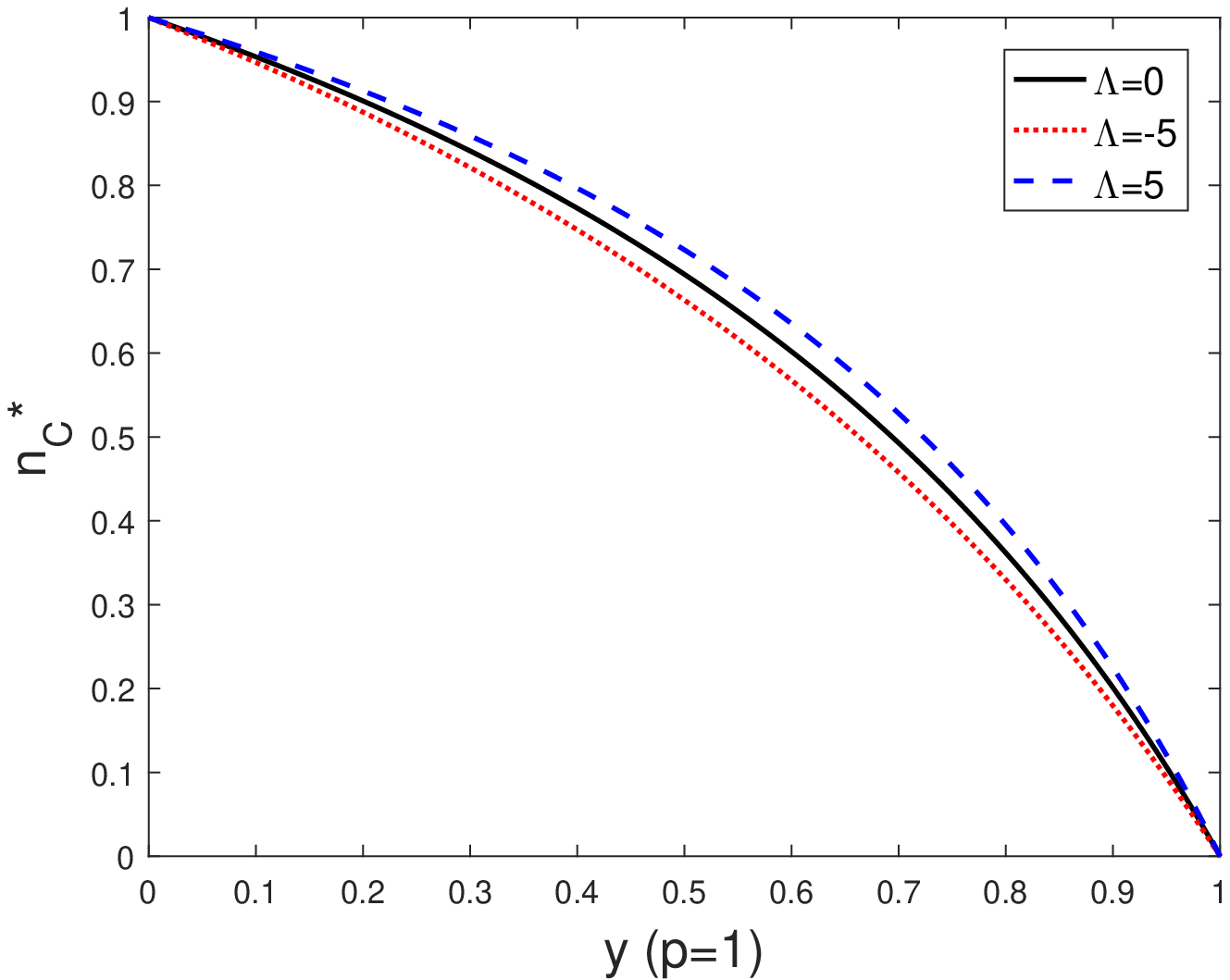}
}
\subfigure[$p=0.9$]{
\includegraphics[ width = 0.31\textwidth]{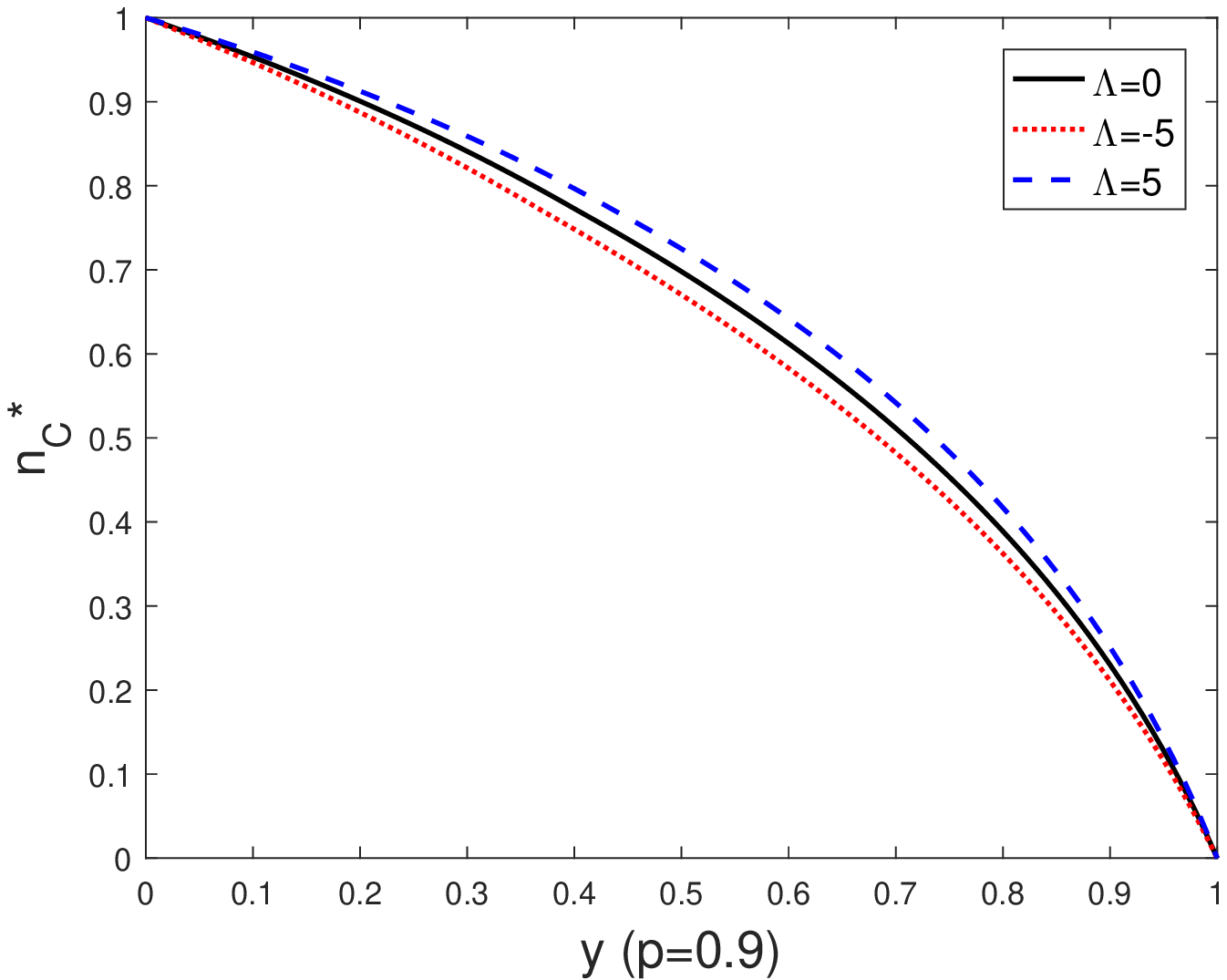}
}
\subfigure[$p=0.6$]{
\includegraphics[ width = 0.31\textwidth]{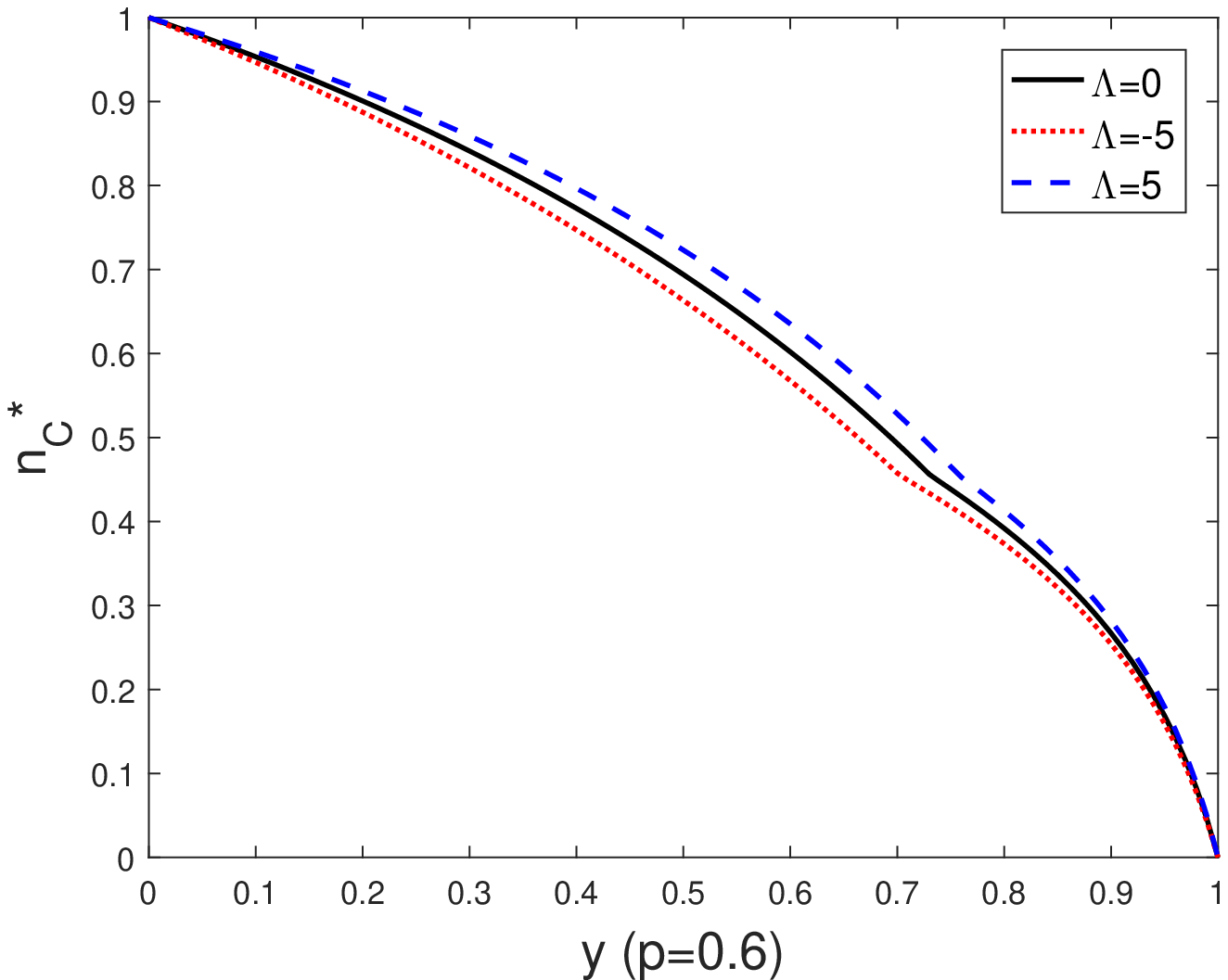}
}
\caption{Stock holdings $n_{Ct}^*$ with different  past information}
\label{fig_stockp}
\end{figure}

Figure \ref{fig_stockL} shows that no matter which kind of memory the past information brings to investors, poorer investor protection leads to higher stock holding of controlling shareholder relative to the full protection benchmark, which is consistent with panel (a) of Figure 2 in \cite{Basak}. Furthermore, Figure \ref{fig_stockL} also indicates more specifically that, with $y_t$ increasing, $n_C^*$ stays in Region 2 when the controlling shareholder's consumption share is high ($y_t$ is low), while $n_C^*$ switches to Region 1 when the controlling shareholder's consumption share is low ($y_t$ is high). In the case of Region 2, holding lots of shares reduces the benefits of diverting the output. It is from (\ref{co2nc2}) that $p$ does not involve, and then the investor protection constraint does not bind and the diversion of output is tempered only by the cost of stealing. Independent of the stock holding and past information, the cost parameter $k$ makes it helpless for the controlling shareholder to divert more output though buying shares beyond the benchmark level. Consequently, $n_C^*=n_C^\mathfrak{B}$ in Region 2. In the case of Region 1, the investor protection constraint binds, and then the controlling shareholder can divert a larger fraction of output when he owns more shares in the case of imperfect protection, which gives the controlling shareholder an incentive to acquire more shares. Hence, benefits of diverting output, rather than the inconsistency of investors' sensitivity to past information, dominates the change of controlling shareholder's stock holding, leading to $n_C^*\geq n_C^\mathfrak{B}$ in Region 1.

Figure \ref{fig_stockp} depicts that compared with classic economies $\Lambda_t=0$, the controlling shareholder's ``good memory" $\Lambda_t=5$ (``bad memory"  $\Lambda_t=-5$) urges the controlling shareholder to acquire more (fewer) shares whether or not the investor protection is perfect. The result coincides with empirical evidence in \cite{Guidolin}
and the reason is two-fold. Advantageous (disadvantageous) past information would prompt (reduce) the expectation of shareholders' future investment as we have explained in Remark \ref{memory}. Besides, since the condition $\alpha_M\geq\alpha_C$ implies the controlling shareholder is more sensitive to the past information than the minority shareholder, the increment (decrement) of shares caused by the controlling shareholder's ``good memory" (``bad memory") is greater than the one caused by minority shareholder's ``good memory" (``bad memory").

It is from the proof of Theorem \ref{th1} that the fraction of diverted output $x_t^*$ is directly related to controlling shareholder's stock holding, which contributes to understanding controlling shareholder's stock holding more precisely.  Figures \ref{fig_divertL} and \ref{fig_divertp} depict effects of investor protection and past information on the equilibrium fraction of diverted output.

\begin{figure}[!htbp]
\centering
\subfigure[$\Lambda_t=0$]{
\includegraphics[ width = 0.31\textwidth]{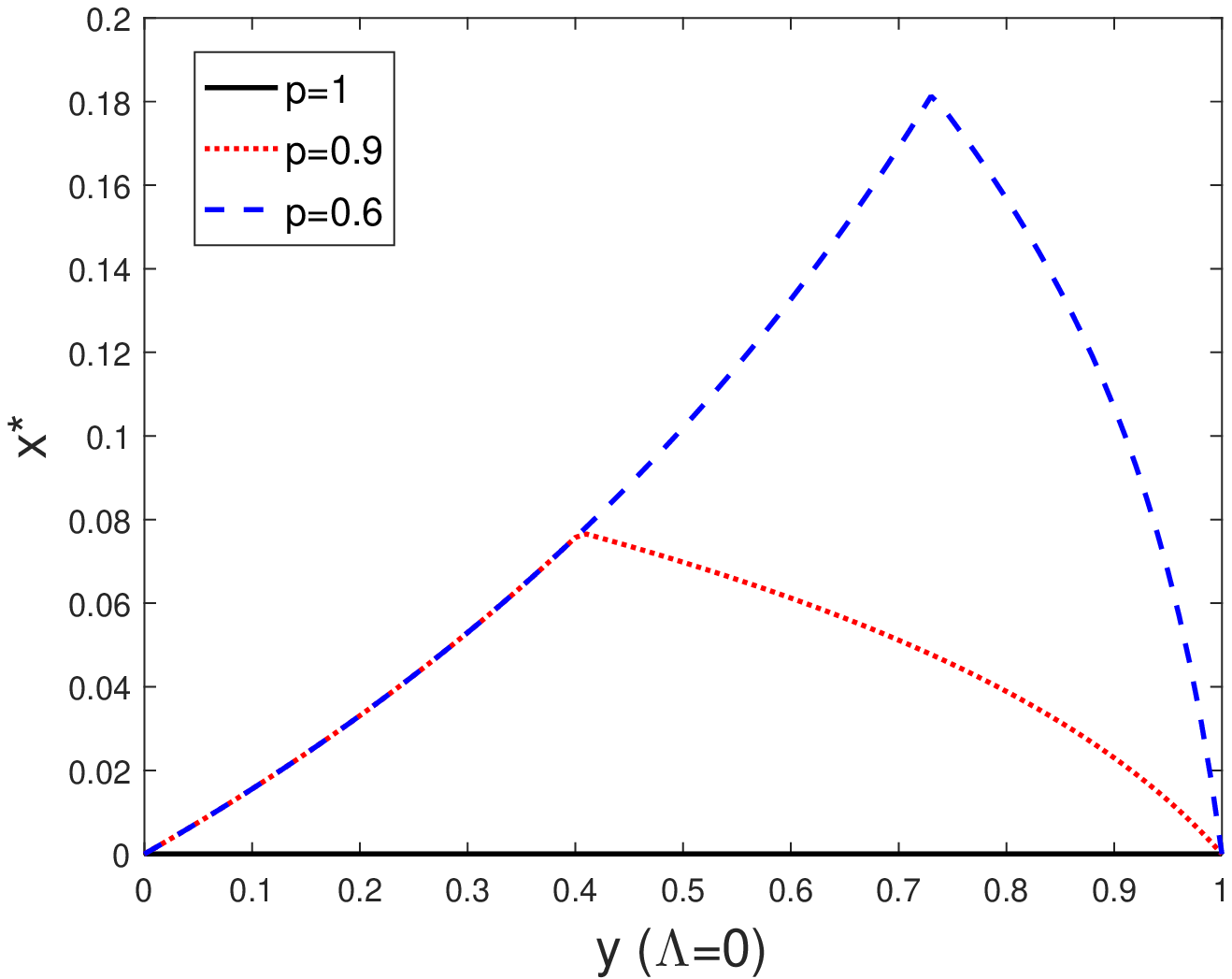}
}
\subfigure[$\Lambda_t=-5$]{
\includegraphics[ width = 0.31\textwidth]{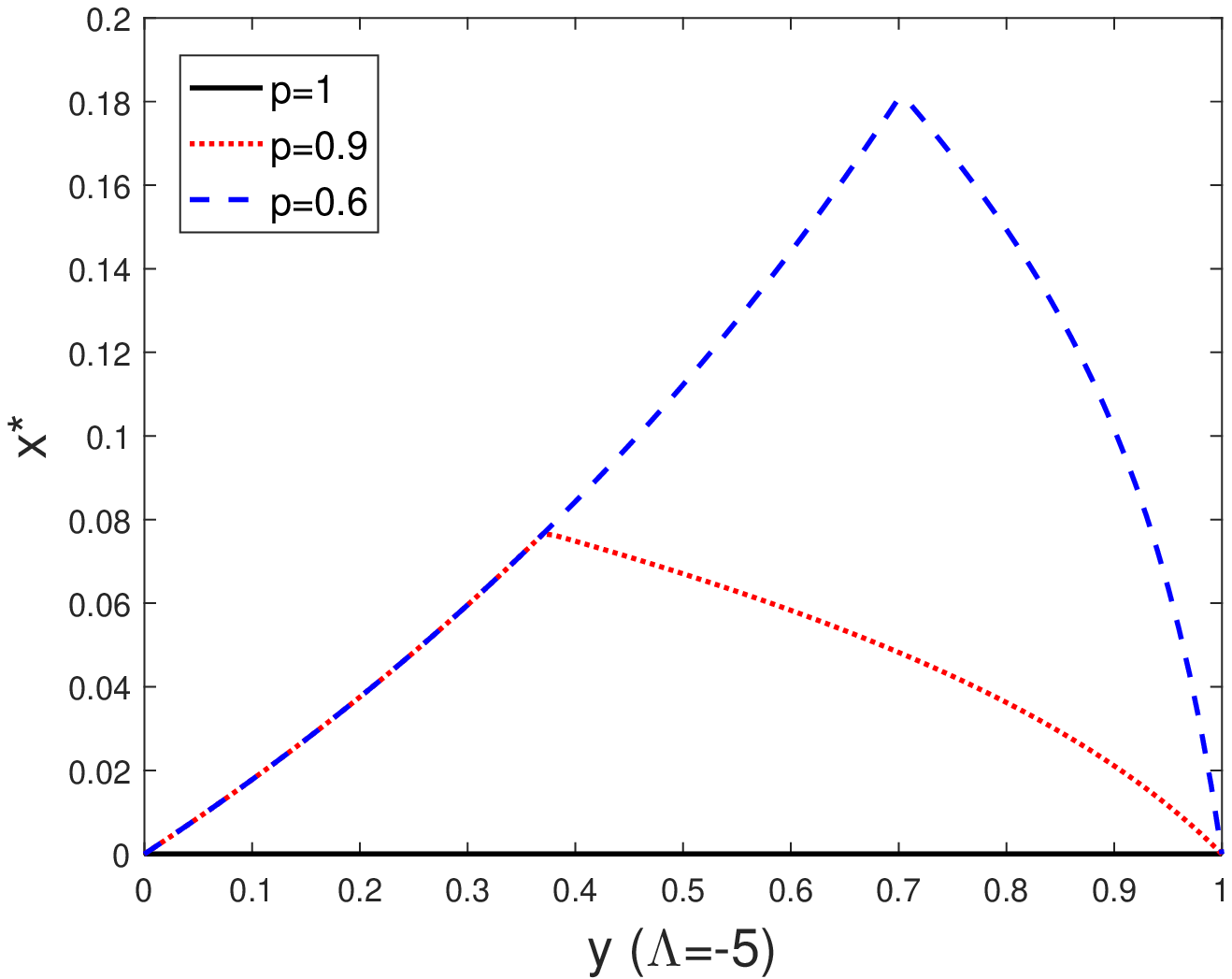}
}
\subfigure[$\Lambda_t=5$]{
\includegraphics[ width = 0.31\textwidth]{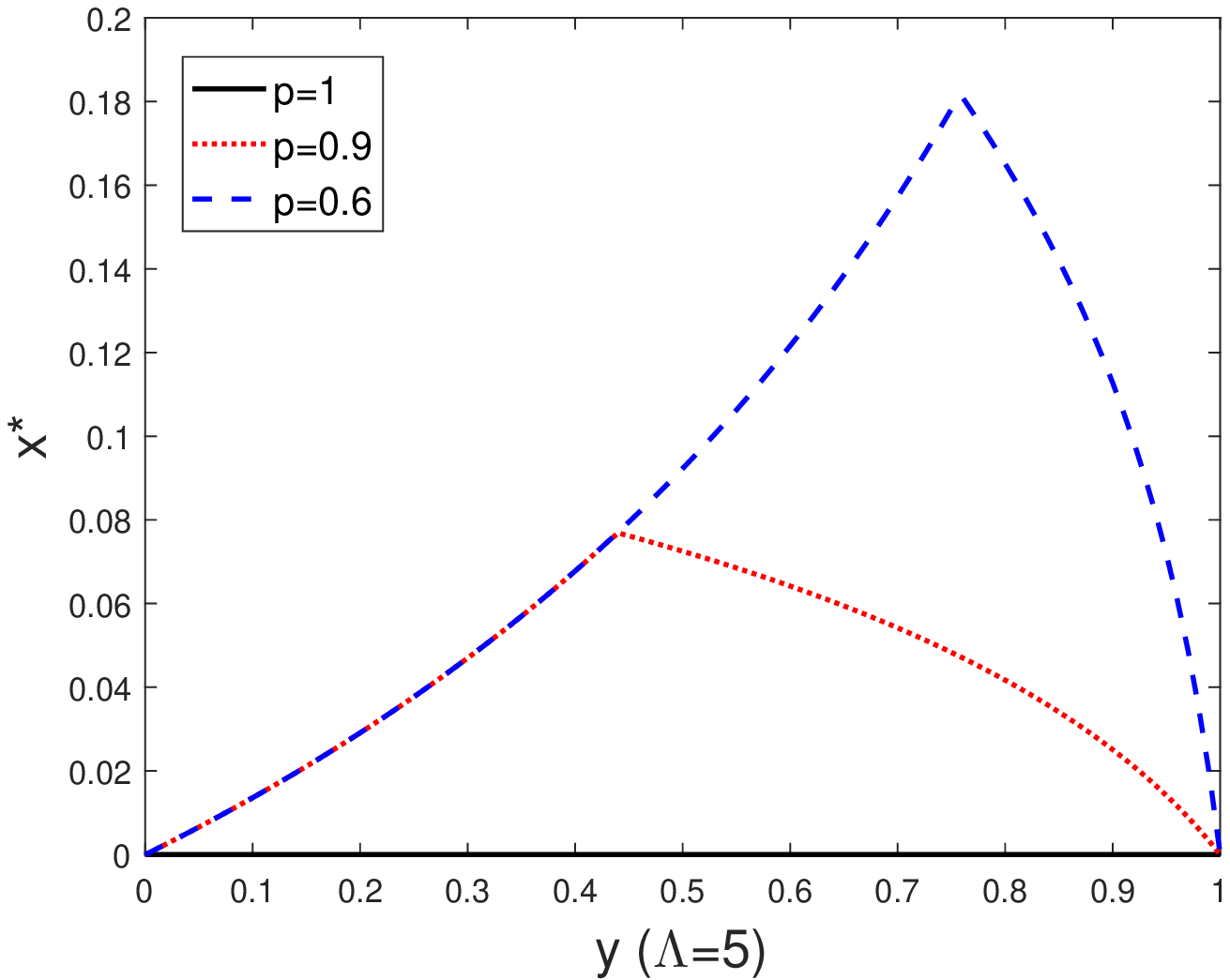}
}
\caption{The fraction of diverted output $x_t^*$ with different investor protect}
\label{fig_divertL}
\end{figure}
\begin{figure}[!htbp]
\centering
\subfigure[$p=1$]{
\includegraphics[ width = 0.31\textwidth]{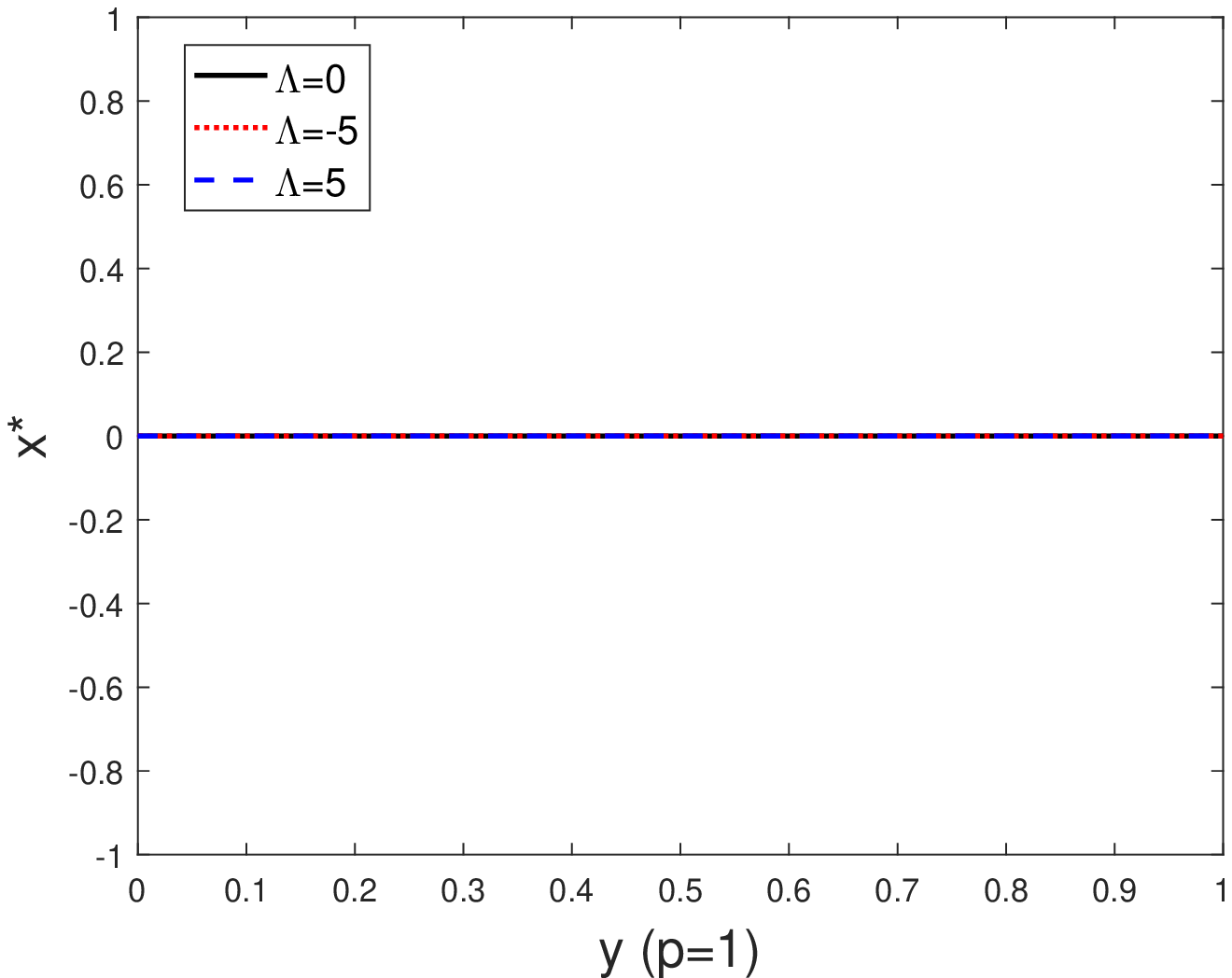}
}
\subfigure[$p=0.9$]{
\includegraphics[ width = 0.31\textwidth]{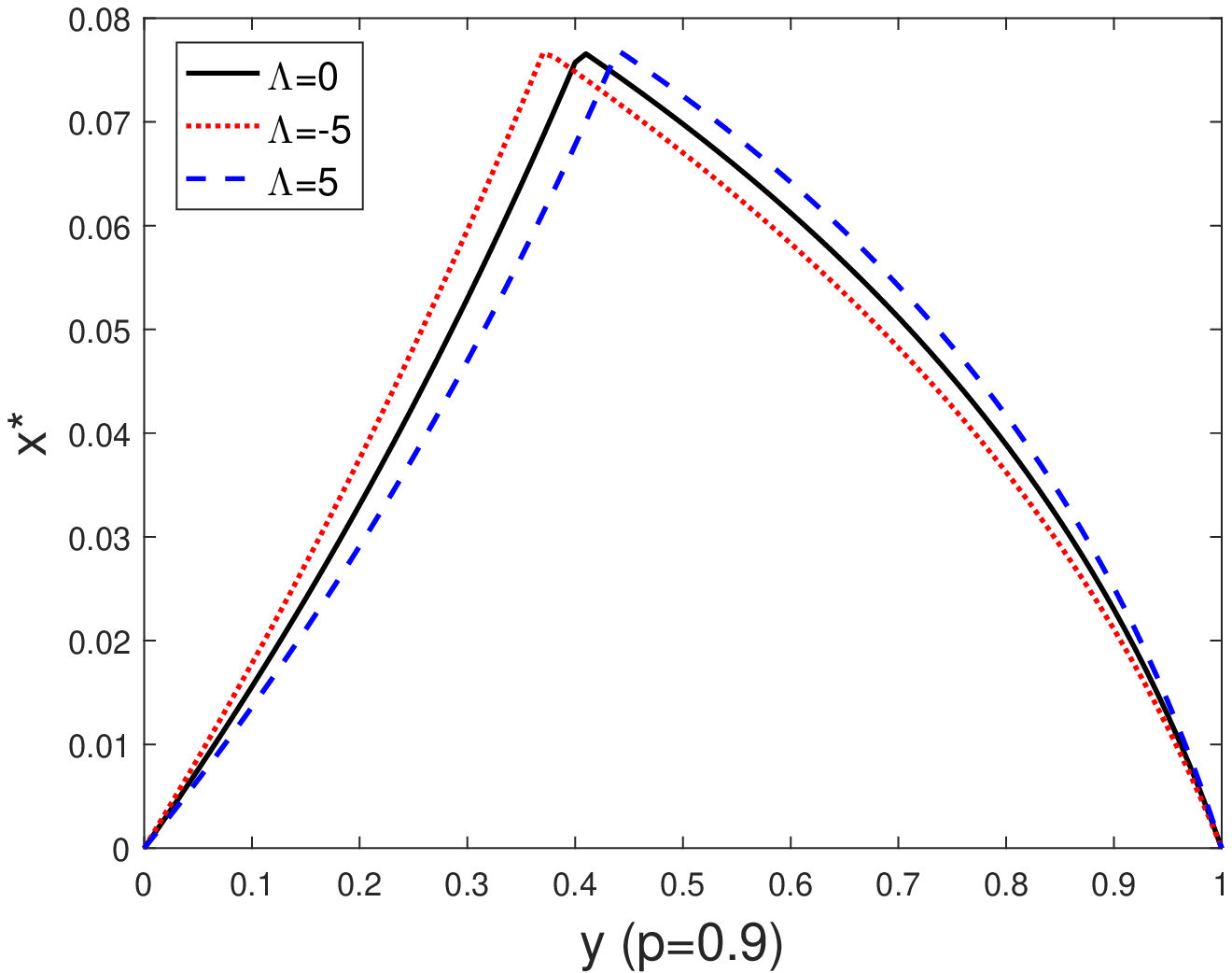}
}
\subfigure[$p=0.6$]{
\includegraphics[ width = 0.31\textwidth]{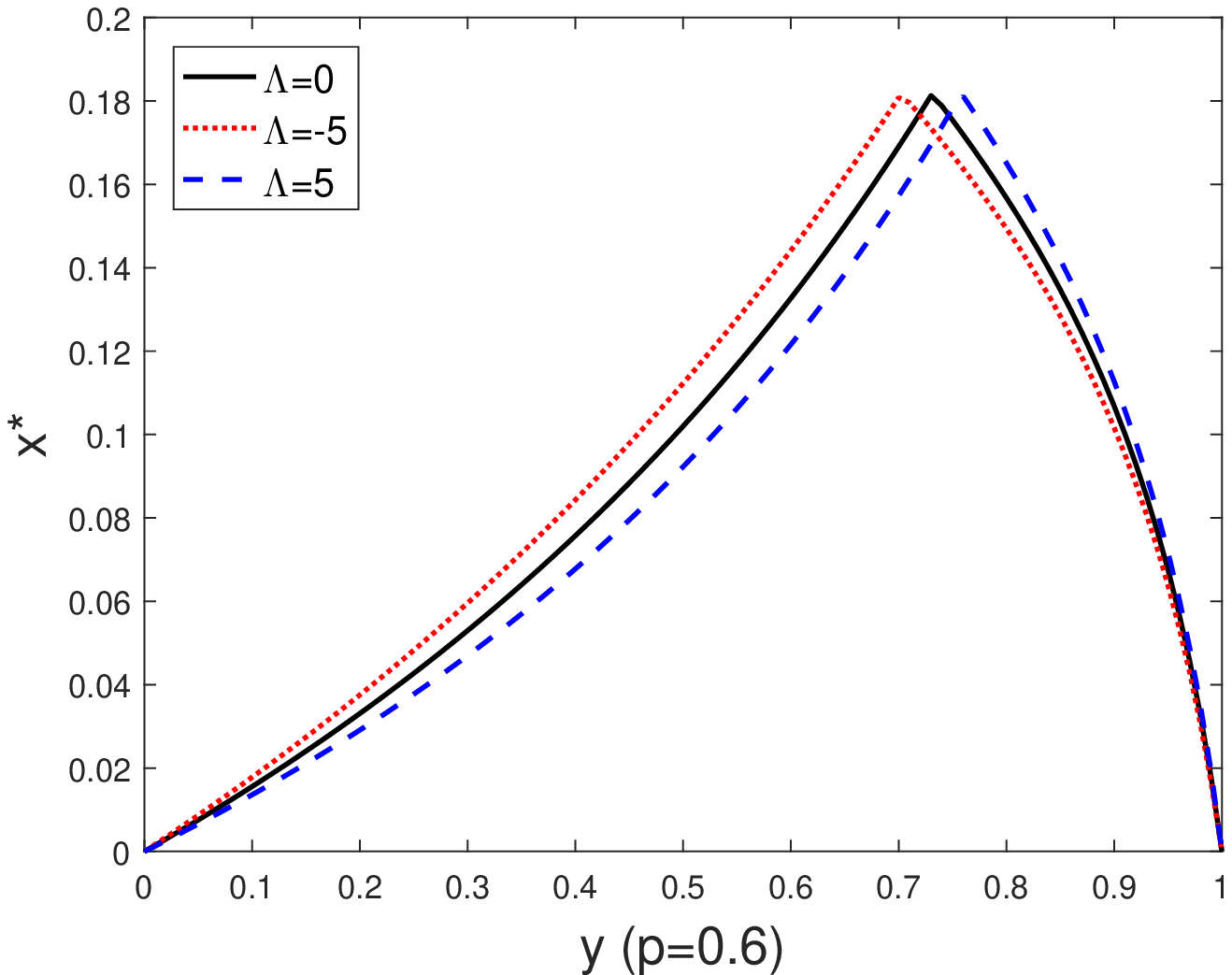}
}
\caption{The fraction of diverted output $x_t^*$ with different  past information}
\label{fig_divertp}
\end{figure}

Figure \ref{fig_divertL} presents that no matter which kind of memory investors have, poorer investor protection leads to a higher fraction of diverted output, which is consistent with our common sense. Compared with Figures \ref{fig_stockL} and \ref{fig_stockp}, the kinks in Figures \ref{fig_divertL} and \ref{fig_divertp} make it much easier to verify when $n_{Ct}^*$ lies in Region 1 or Region 2. As discussed in \cite{Basak}, the kink is the separation of point of Region 1 and Region 2: for sufficiently small $n_{Ct}$, the constraint $x_t\leq (1-p)n_{Ct}$ is binding and the controlling shareholder would divert more output by acquiring more shares, while for sufficiently large $n_{Ct}$, the constraint is not binding and the controlling shareholder can only divert the output through his stock holding at the benchmark level. Hence, if $y_t$ lies to the left-side (right-side) of the kink, then $n_{Ct}^*$ stays in Region 2 (Region 1) for imperfect protection.

Compared with classic economies, the effect of past information on the fraction of diverted output for imperfect protection could be considered by Figure \ref{fig_divertp} in two ways.  In Region 2, good memory (bad memory) leads to a lower (higher) fraction of diverted output. This is because controlling holder's stock holding increases (decreases) due to their good memory (bad memory) as we claimed in Figure \ref{fig_stockp}, and then the fraction of diverted output $x_t^*=\frac{1-n_{Ct}^*}{k}$ decreases (increase). In Region 1, good memory (bad memory) on the contrast leads to a higher (lower) fraction of diverted output, which is because good memory (bad memory) urges controlling shareholder to acquire more (fewer) shares so that the fraction of diverted output $x_t^*=(1-p)n_{Ct}^*$ increases (decreases).
Hence, Figure \ref{fig_divertp} also states that good memory (bad memory) would strengthen (weaken) investor protection for minority shareholder when the ownership concentration is sufficiently high, while good memory (bad memory) would inversely weaken (strengthen) investor protection for minority shareholder when the ownership concentration is sufficiently low.

Observing (\ref{thnC1})-(\ref{thnC4}), it is clear that controlling shareholder's stock holding $n_{Ct}^*$ critically depends on investor protect $p$ and past information $\Lambda_t$. However, $n_{Ct}^*$ in general is non-monotone in $p$ or $\Lambda_t$.  The investor protection parameter $p$ has two opposing influences on  stock holding $n_{Ct}^*$. Poor investor protection urges the controlling shareholder to divert the output, while poor investor protection extends Region 2 for imperfect protection (Figure \ref{fig_divertL}), loosing the role of the cost of stealing. Similarly, past information $\Lambda_t$ affects stock holding $n_{Ct}^*$ in different ways. Firstly, good memory (bad memory) urges the controlling shareholder to acquire more (fewer) shares. Secondly, good memory (bad memory) also urges minority shareholder to acquire more (fewer) shares, which on the contrary decreases (increases) controlling shareholder's shares. Last, it is from Figure \ref{fig_divertL} that good memory (bad memory) would extend (reduce) Region 2 for imperfect protection.

\subsection{Stock return and volatility}
Similar to discussion of  the output $\widehat{D}_t$ in Section \ref{section2} , it is easy to see that the modified stock return and volatility are $\mu^H_t=\mu_t+\sigma_t\Lambda_t$ and $\sigma_t^H=\sqrt{2H}\varepsilon^h\sigma_{t}$. For the purpose of comparison, we set the gross stock return as $\mu^G_t=\mu_t+\sigma_t\Lambda_t+(1-x_t^*)\frac{D_t}{S_t}$

\begin{figure}[!htbp]
\centering
\subfigure[$\Lambda_t=0$]{
\includegraphics[ width = 0.31\textwidth]{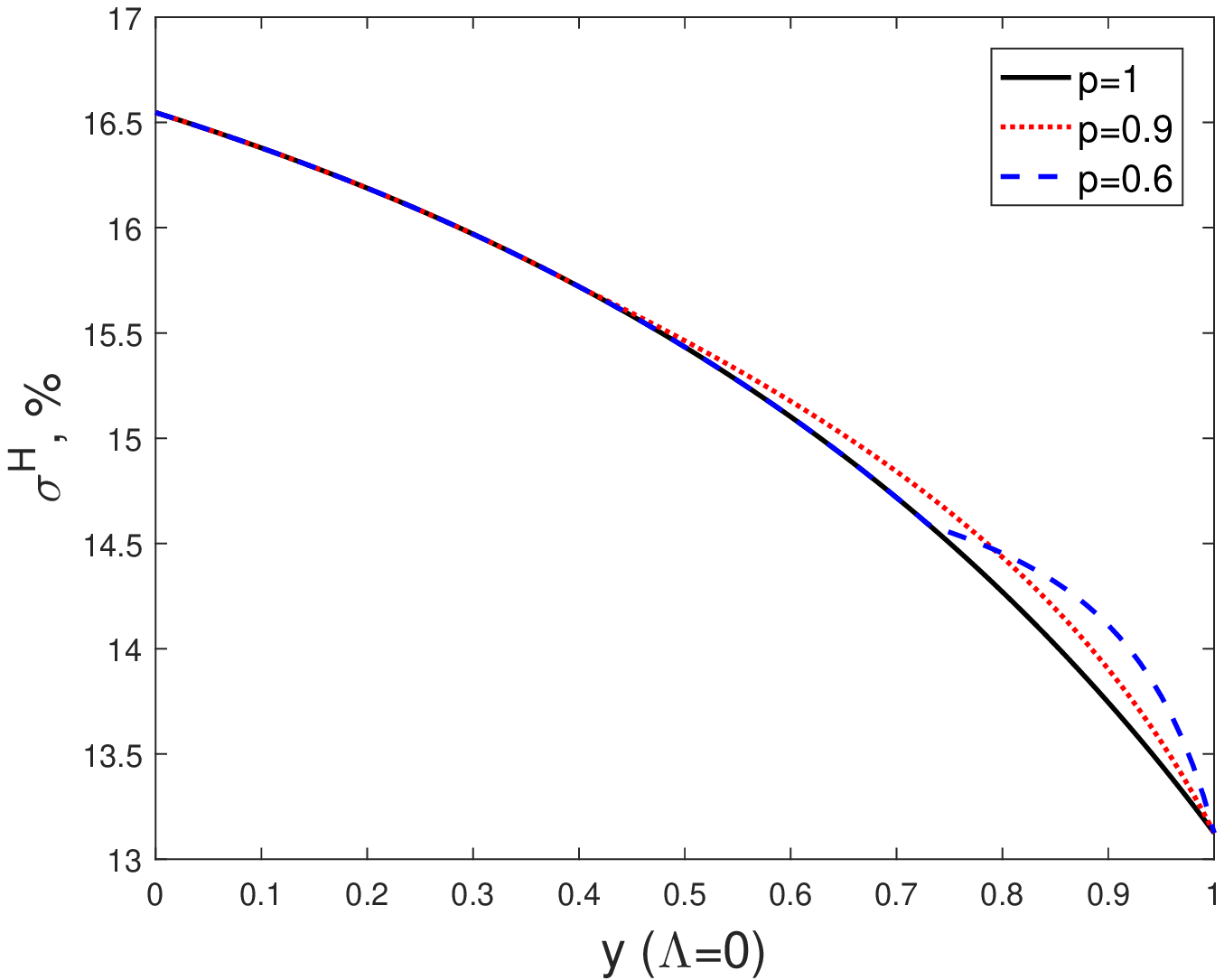}
}
\subfigure[$\Lambda_t=-5$]{
\includegraphics[ width = 0.31\textwidth]{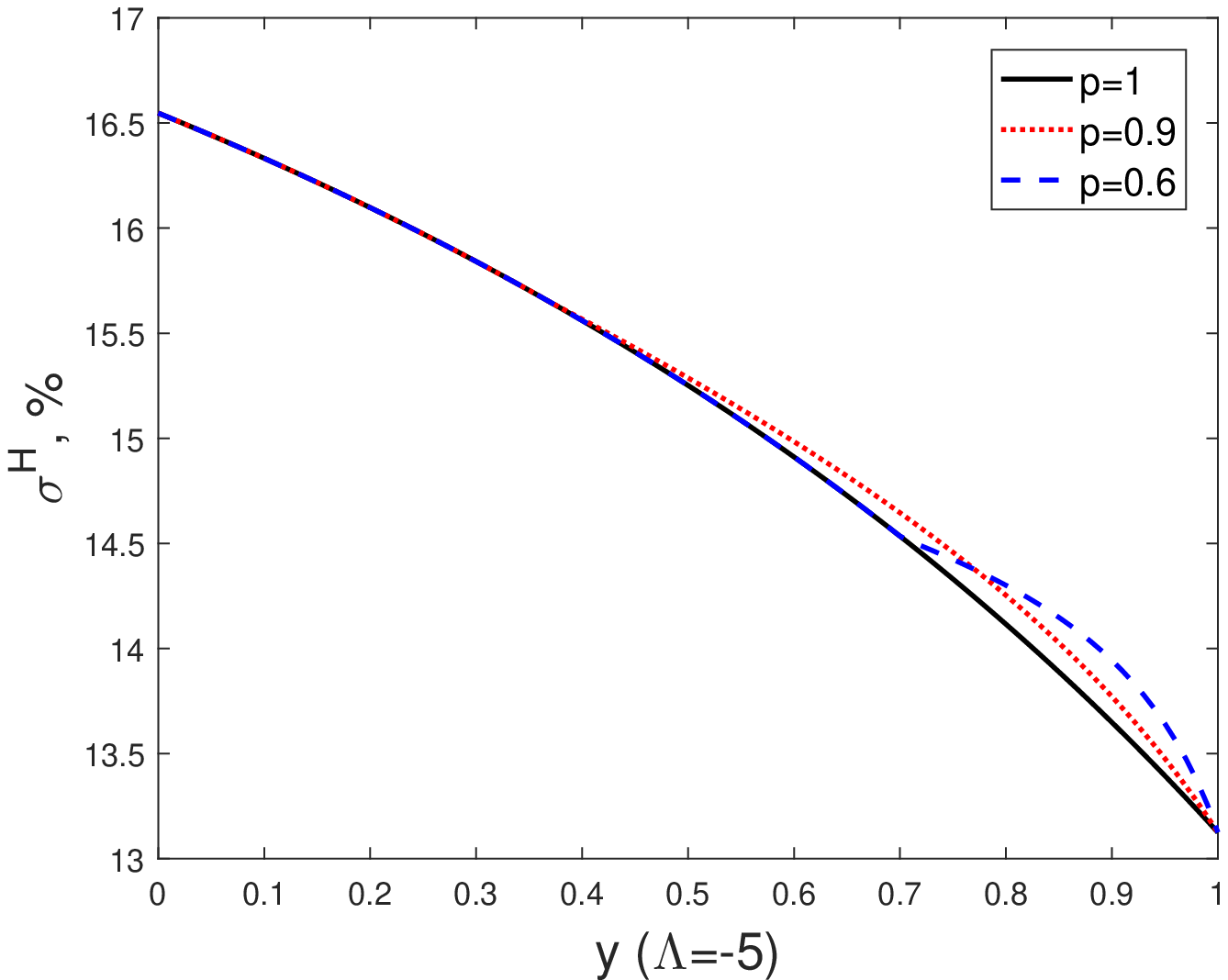}
}
\subfigure[$\Lambda_t=5$]{
\includegraphics[ width = 0.31\textwidth]{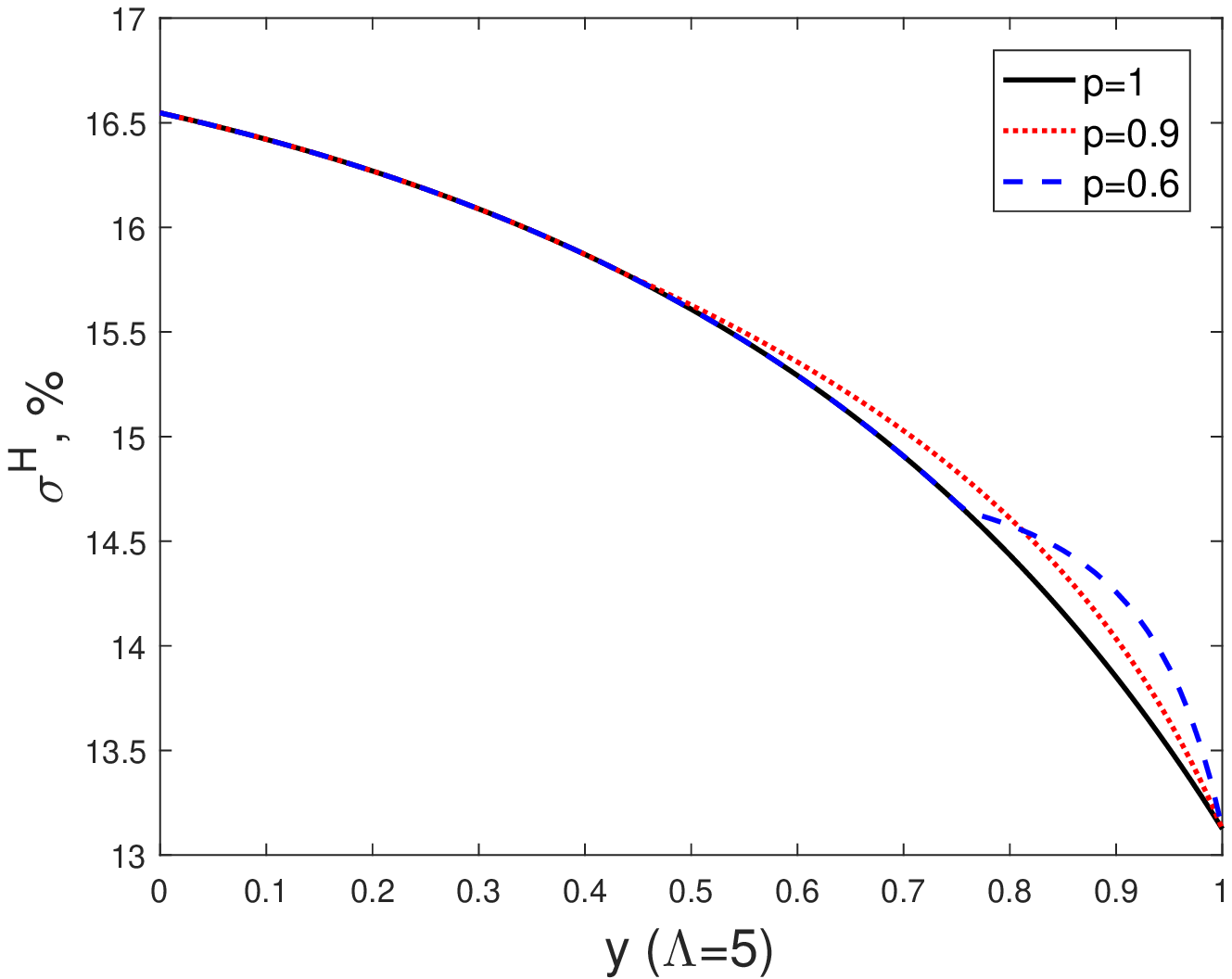}
}
\caption{Modified stock volatilities $\sigma^H$ with different investor protection}
\label{fig_volL}
\end{figure}
\begin{figure}[!htbp]
\centering
\subfigure[$p=1$]{
\includegraphics[ width = 0.31\textwidth]{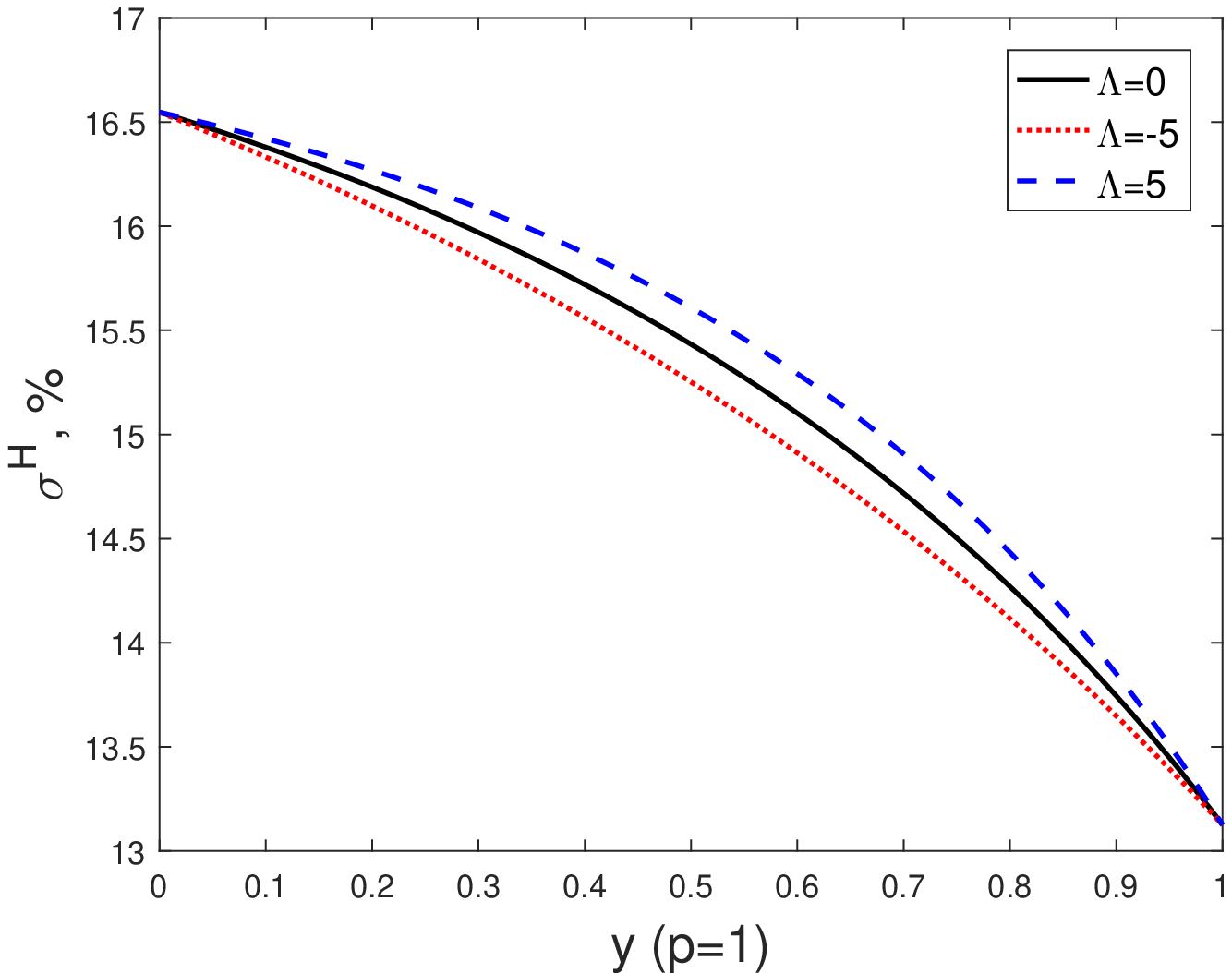}
}
\subfigure[$p=0.9$]{
\includegraphics[ width = 0.31\textwidth]{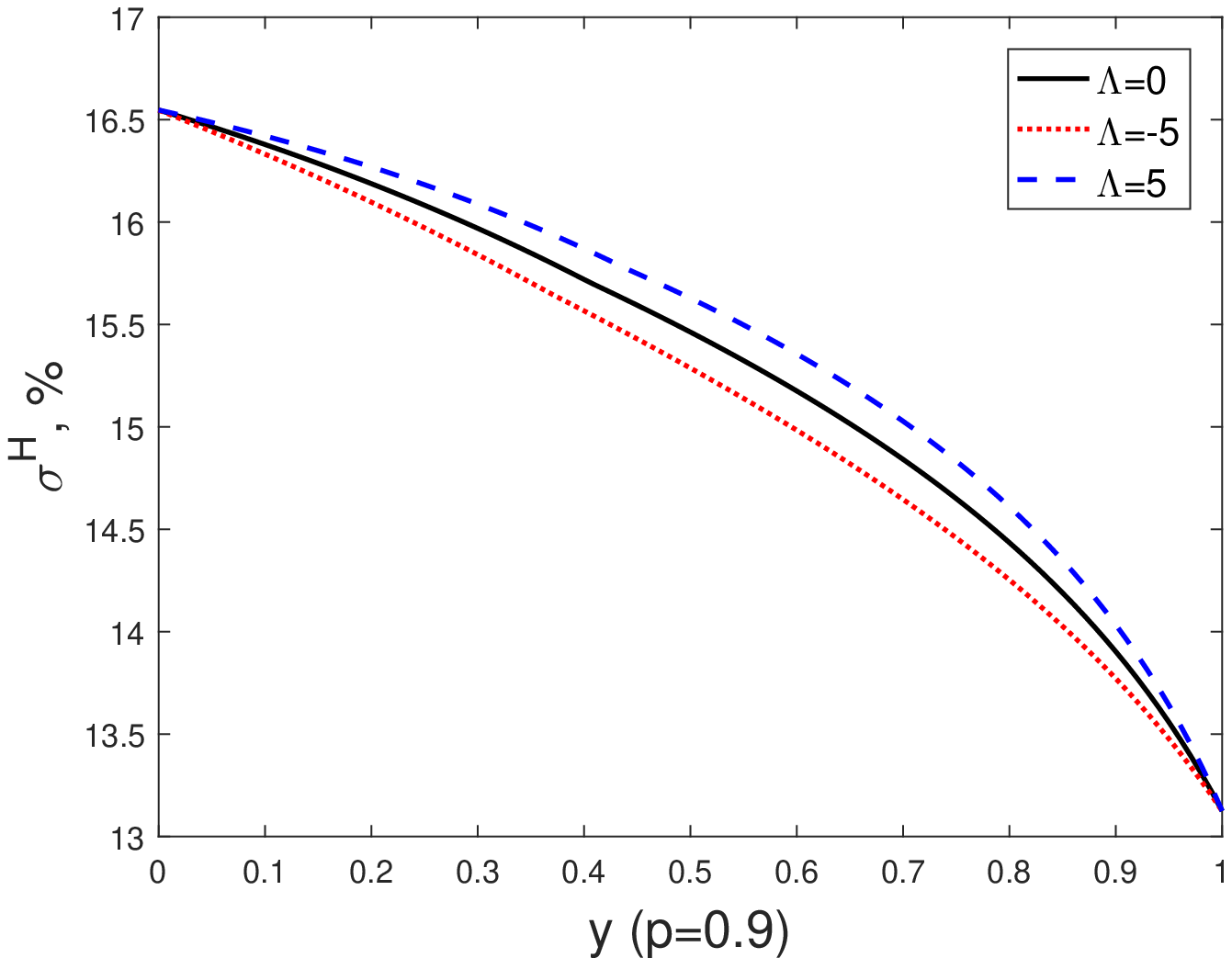}
}
\subfigure[$p=0.6$]{
\includegraphics[ width = 0.31\textwidth]{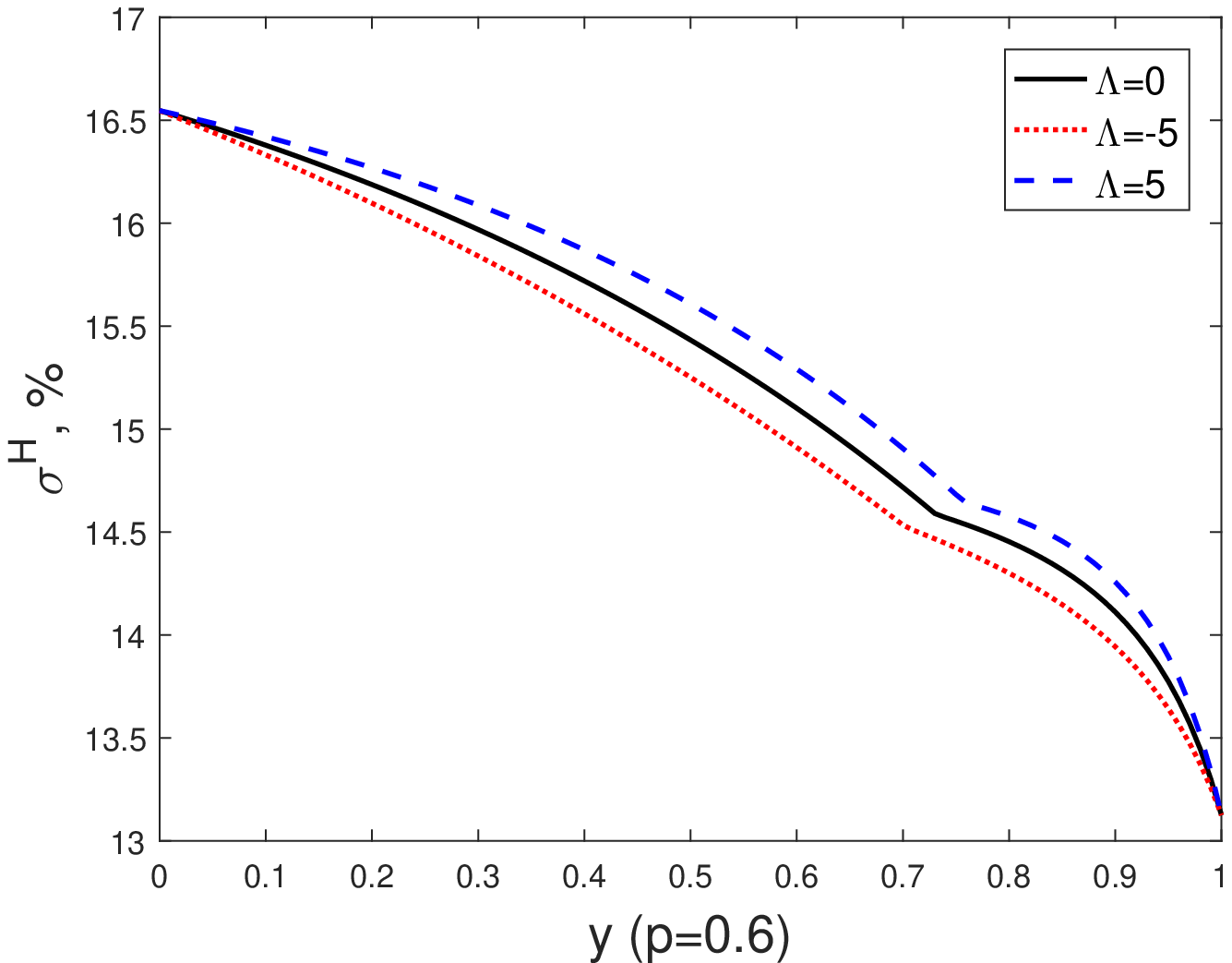}
}
\caption{Modified stock volatilities $\sigma^H$ with different past information}
\label{fig_volp}
\end{figure}

Figure \ref{fig_volL} shows that no matter which kind of memory investors have, the modified stock volatility $\sigma^H$ in the economy with imperfect protection is higher than $\sigma^{H,\mathfrak{B}}$ in the benchmark economy with perfect protection, i.e. $\sigma^H\geq\sigma^{H,\mathfrak{B}}$. In Region 2, $n_{Ct}^*=n_{Ct}^\mathfrak{B}$ from Figure \ref{fig_stockL} clearly leads to $\sigma^H=\sigma^{H,\mathfrak{B}}$ by Remark \ref{diff}. This is because the diversion of output has been tempered by the cost of stealing such that the modified stock volatility does not depend on diversion of the output and then is consistent with that in the benchmark economy. In Region 1, it is seen $\sigma^H>\sigma^{H,\mathfrak{B}}$ in general, which means that the stock is more volatile in the economy with imperfect protection. The reason may be explained in two aspects: ``with imperfect protection the controlling shareholder holds more shares than in the full protection benchmark, and hence is under-diversified'' (see \cite{Basak}), plus the controlling shareholder is less risk-averse than the minority shareholder. The under-diversified and less risk-averse properties make controlling shareholder's wealth and consumption more volatile, which then translates into the stock market.

Moreover, by panel (a) in Figure \ref{fig_volL}, there is a major difference between our classic economy discussed here and the classic economy studied in \cite{Basak} with $H=\frac{1}{2}$: $\sigma^H(0)<\sigma^H(1)$ in the former economy while $\sigma(0)=\sigma(1)=\sigma_D$ in the latter economy. In corporations, the number of controlling shareholders is much smaller than that of minority shareholders, and then the stock diversity in the case that controlling shareholders own all shares is much worse than the one in the case that minority shareholders own all shares. Hence, the difference of diversity leads to $\sigma^H(0)<\sigma^H(1)$ and the case in our classic economy seems reasonable.

Figure \ref{fig_volp} describes the effect of past information on the modified stock volatility showing that compared with classic economies, investors' good memory (bad memory) results in higher (lower) modified stock volatility in the economy with perfect protection, as well as in the economy with imperfect protection.  The reason can be explained similarly. As we have obtained that controlling shareholder's good memory (bad memory) makes him acquire more (fewer) shares, the under-diversified (diversified) property leads to higher (lower) modified stock volatility.

It is of importance to notice that some empirical evidence (see, \cite{Guidolin,Jones,Maheu}) indicates that volatilities are lower (higher) during the bull (bear) market period, which seems to contrast our results. We can explain such a contradiction in the following way. The volatility during the financial crisis does not remain high for long as showed by \cite{Schwert}, and then aforementioned empirical evidence may be not suitable for our stable equilibrium. Thus, past information has few efforts on stock volatility directly and under-diversified (or diversified) property affects stock volatility primarily.

\begin{figure}[!htbp]
\centering
\subfigure[$\Lambda_t=0$]{
\includegraphics[ width = 0.31\textwidth]{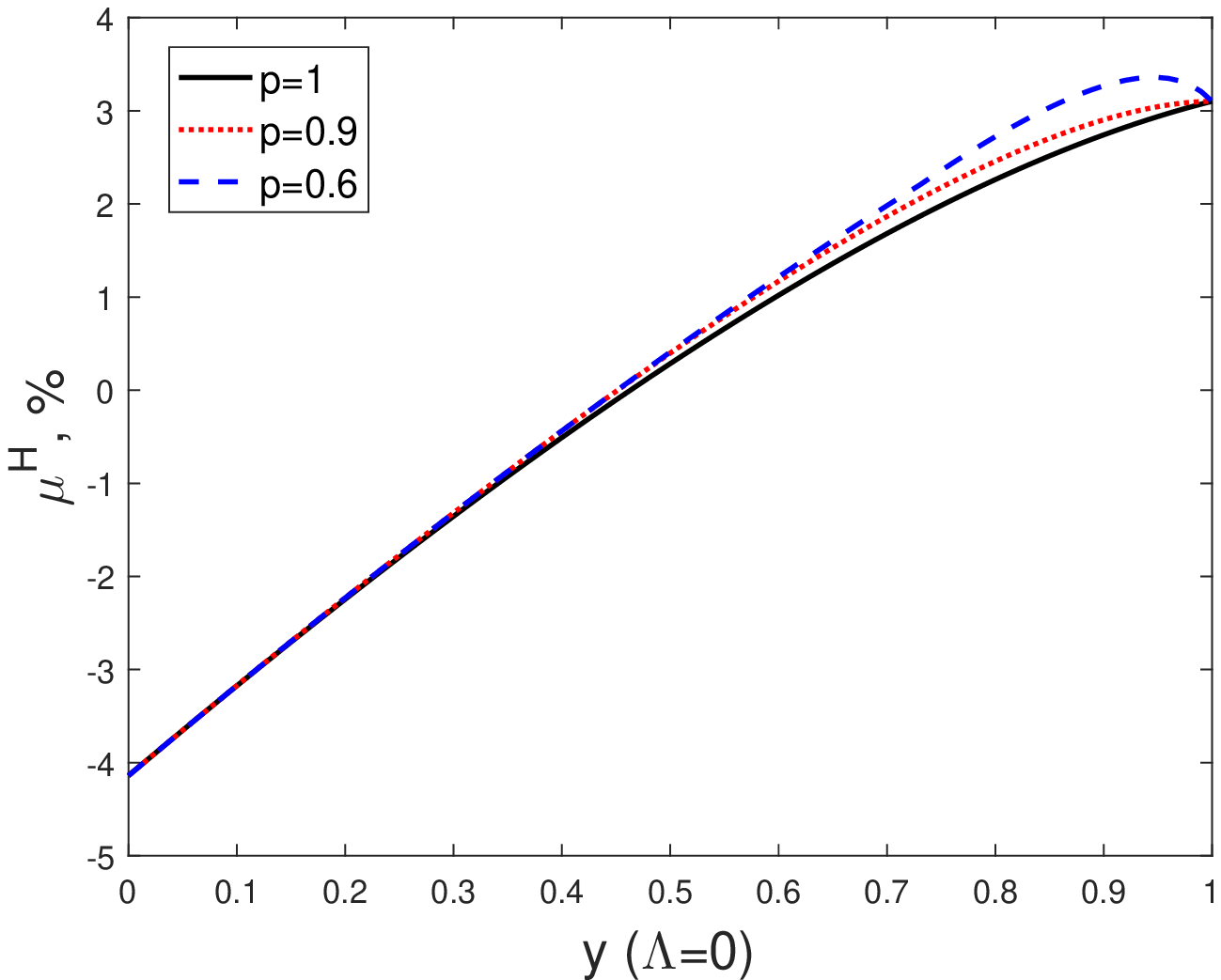}
}
\subfigure[$\Lambda_t=-5$]{
\includegraphics[ width = 0.31\textwidth]{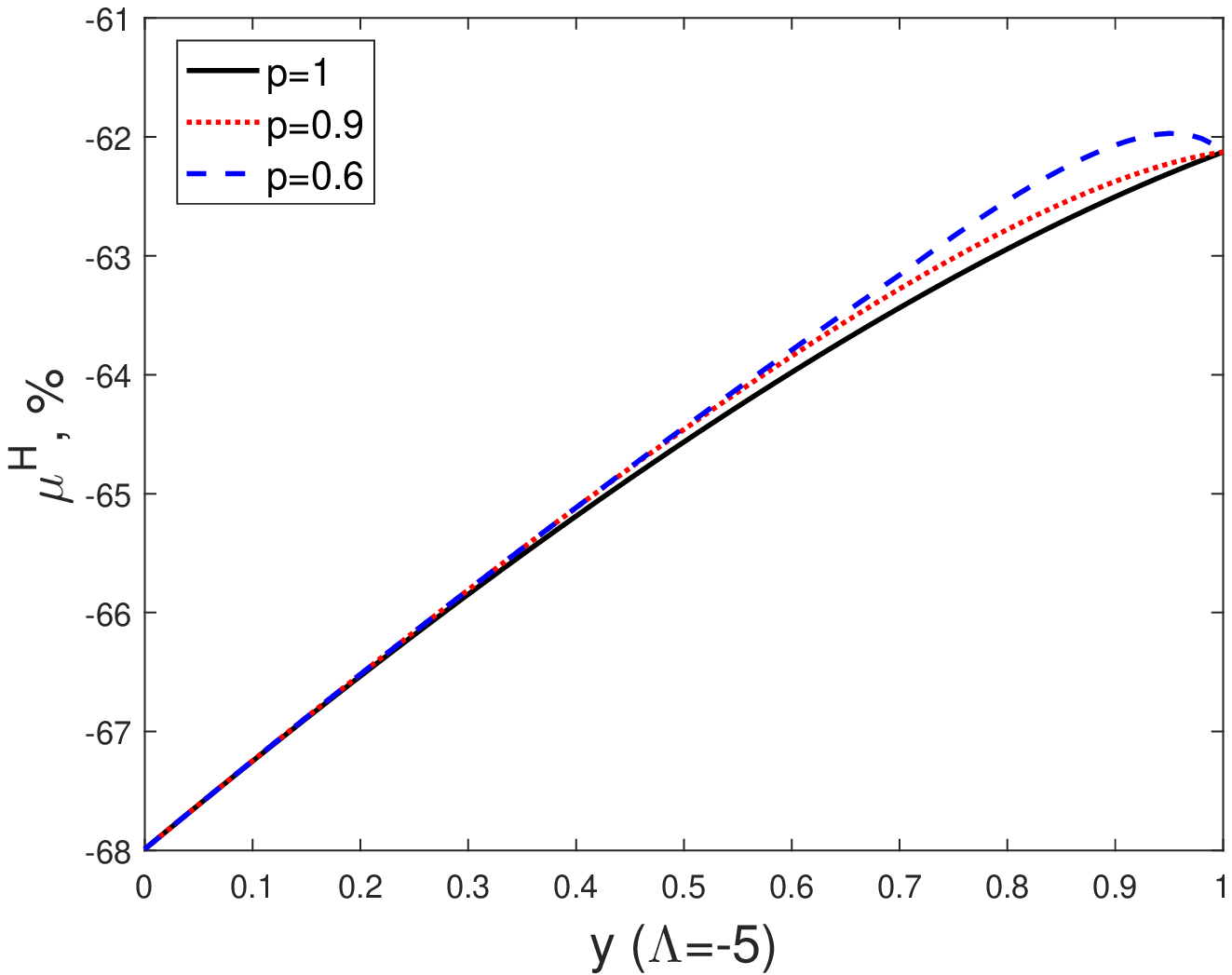}
}
\subfigure[$\Lambda_t=5$]{
\includegraphics[ width = 0.31\textwidth]{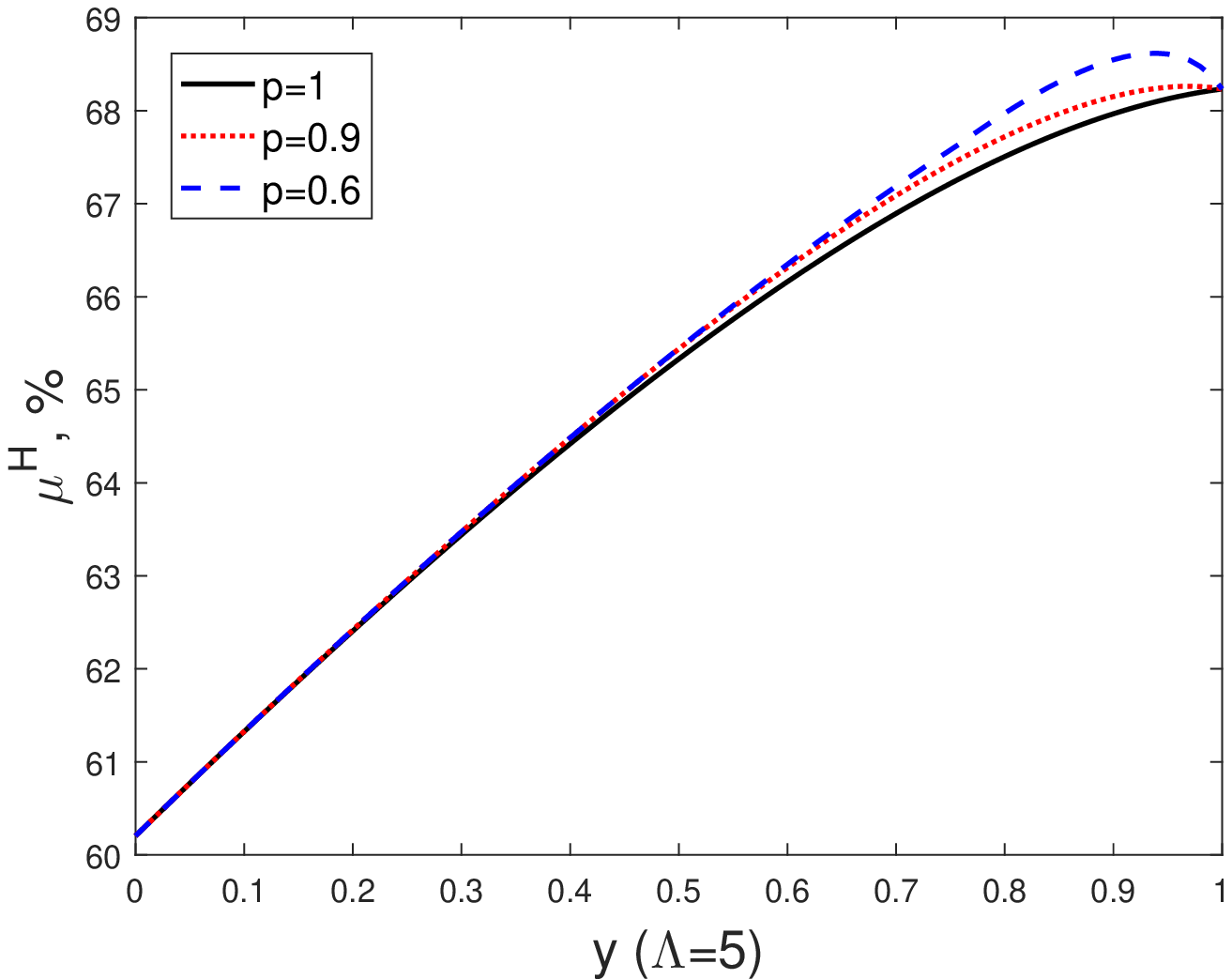}
}
\caption{Modified stock returns $\mu^H$ with different investor protection}
\label{fig_returnL}
\end{figure}
\begin{figure}[!htbp]
\centering
\subfigure[$p=1$]{
\includegraphics[ width = 0.31\textwidth]{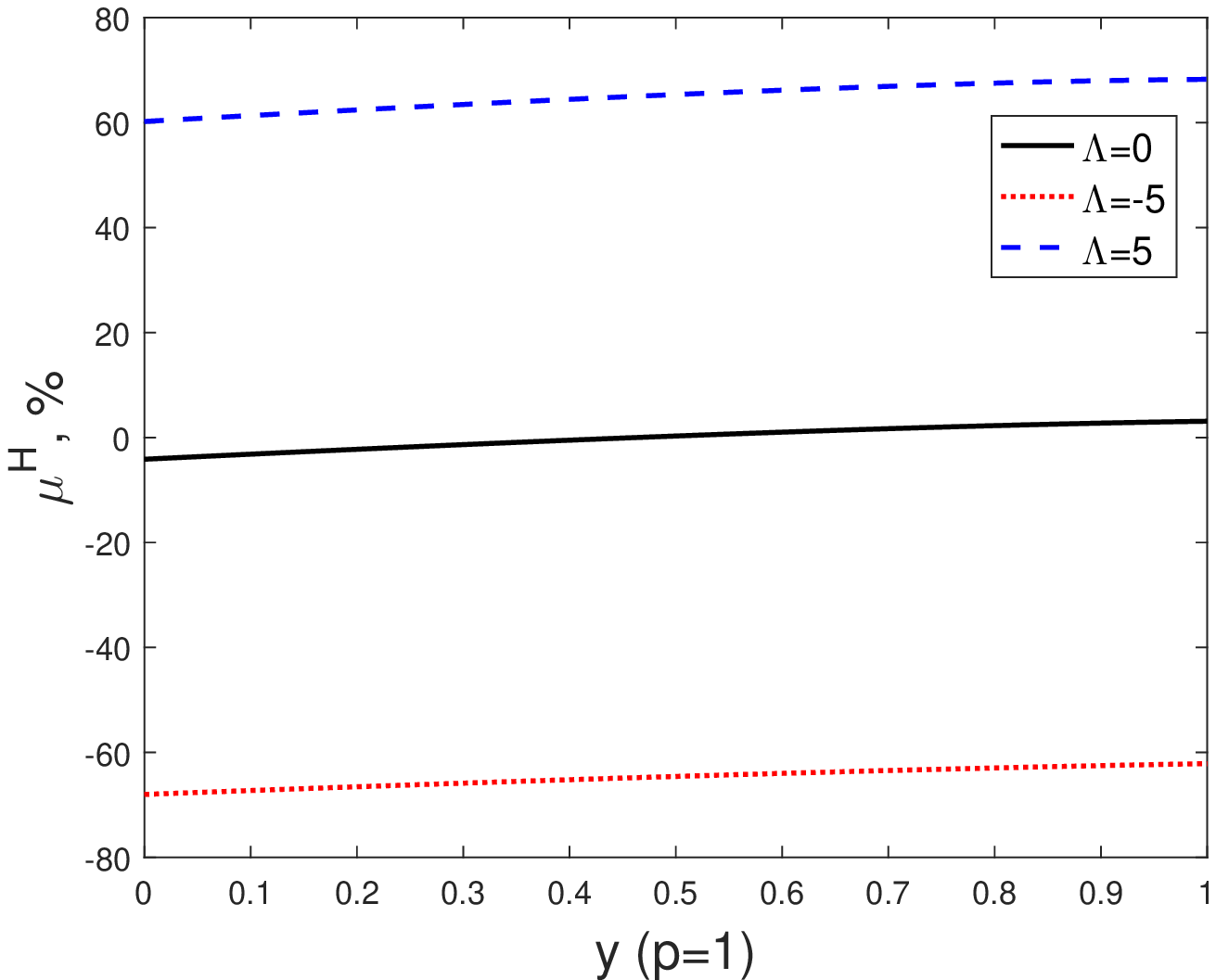}
}
\subfigure[$p=0.9$]{
\includegraphics[ width = 0.31\textwidth]{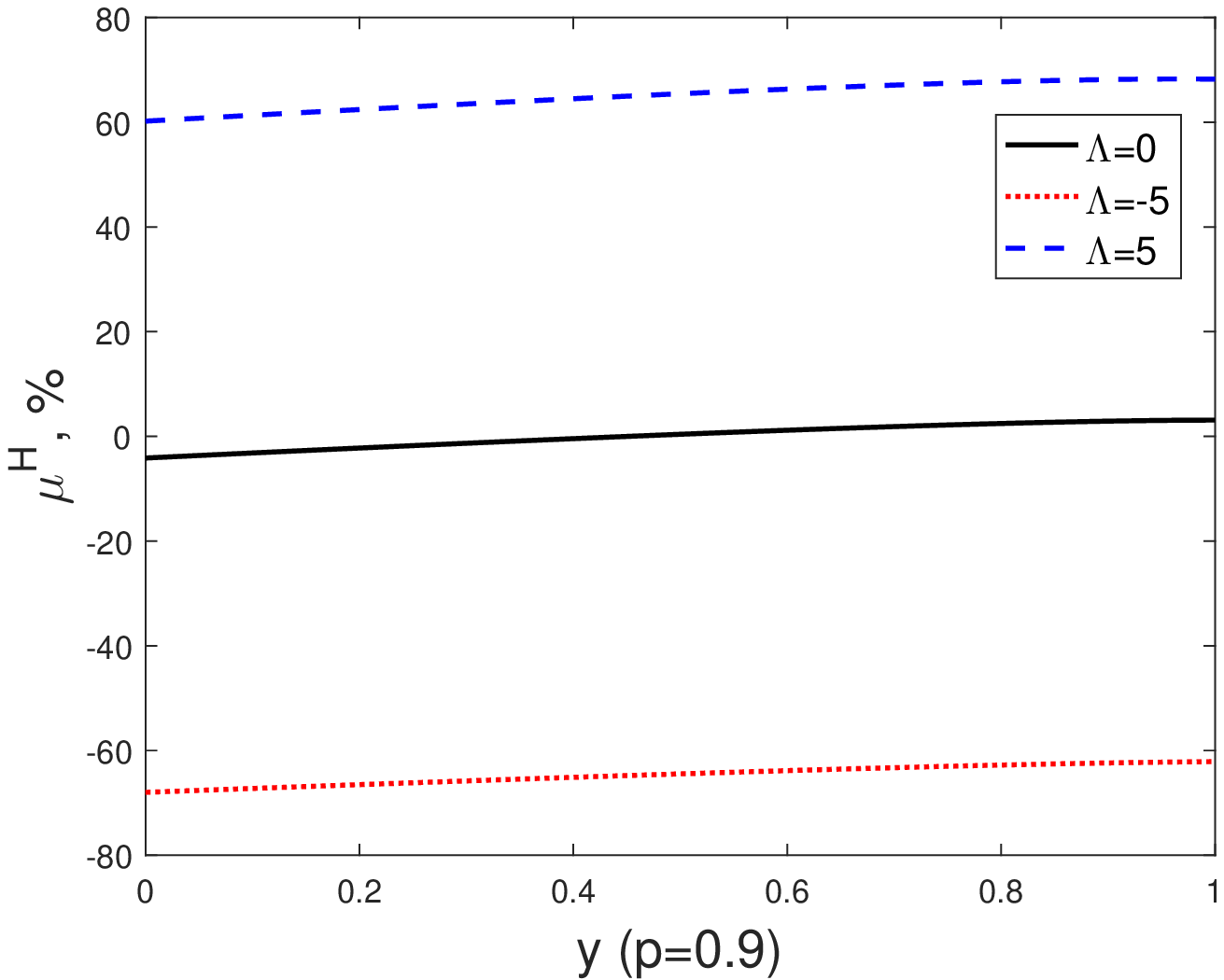}
}
\subfigure[$p=0.6$]{
\includegraphics[ width = 0.31\textwidth]{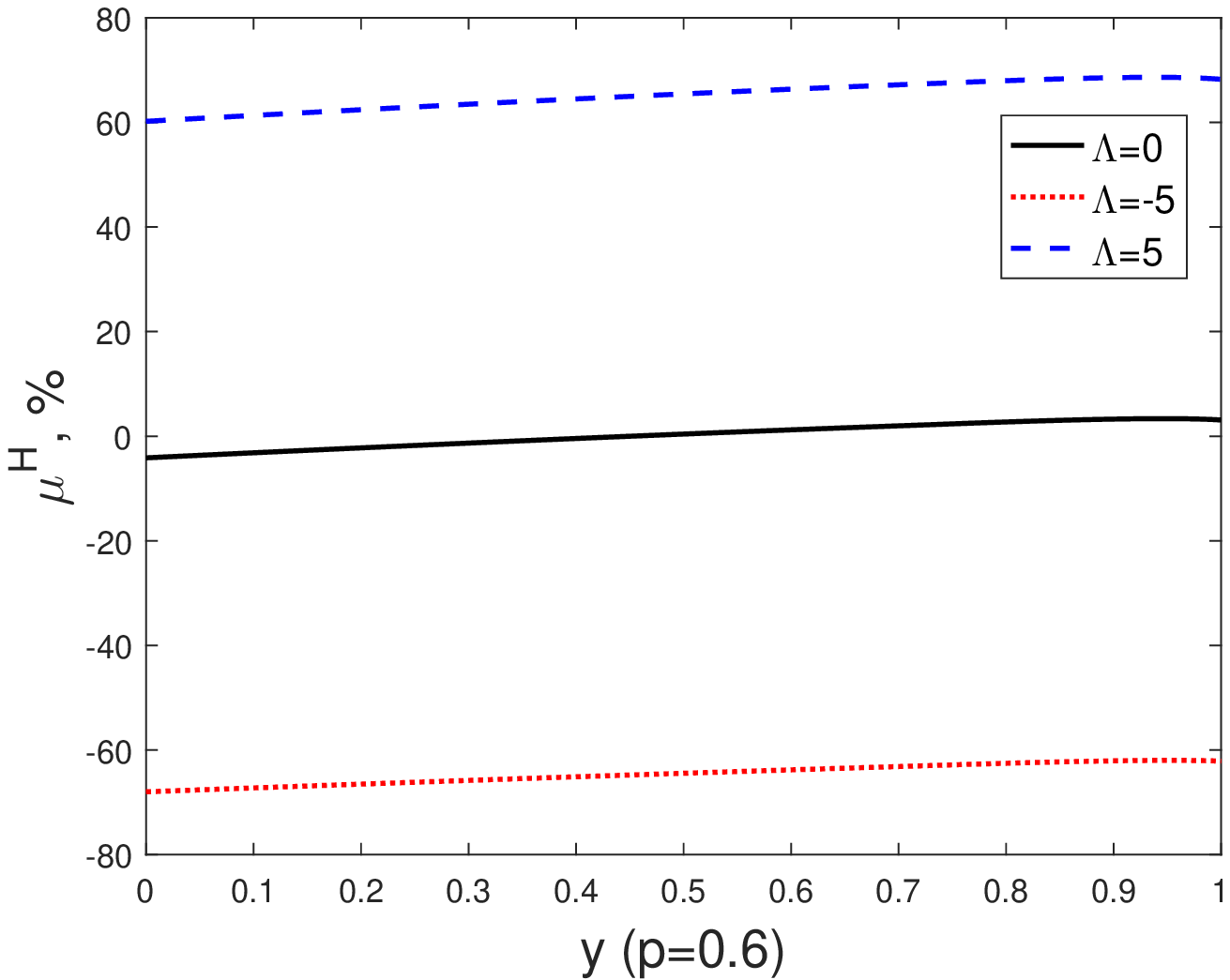}
}
\caption{Modified stock returns $\mu^H$ with different past information }
\label{fig_returnp}
\end{figure}

Figure \ref{fig_returnL} claims that whatever the past information presents, the modified stock return $\mu^H$ in the economy with imperfect protection is higher than $\mu^{H,\mathfrak{B}}$ in the benchmark economy with perfect protection. It is from the arbitrage-free theory (``there is no such thing as a free lunch'', see, e.g. \cite{Kwok} ) that higher risk leads to higher return, and by analysis in Figure \ref{fig_volL}, the fact $\sigma^H\geq\sigma^{H,\mathfrak{B}}$ naturally results in $\mu^H\geq\mu^{H,\mathfrak{B}}$.

Figure \ref{fig_returnp} states that compared with classic economies, investors' good memory (bad memory) gives rise to higher (lower) modified stock return wether or not investor protection in the economy is perfect, which is consistent with the general empirical evidence in \cite{Guidolin,Maheu}. This is because good memory (bad memory) certainly increases (decreases) the modified return of the output and by (\ref{ec-3}) investors' consumption increases (decreases), which then increases (decreases) stock value  through (\ref{st}) and (\ref{wit}) leading to higher (lower) modified stock return.

Due to the existence of diverting the output, the modified stock return may not reflect its true situation and then it is meaningful to additionally study the effects of investor protection and past information on the gross stock return.

\begin{figure}[!htbp]
\centering
\subfigure[$\Lambda_t=0$]{
\includegraphics[ width = 0.31\textwidth]{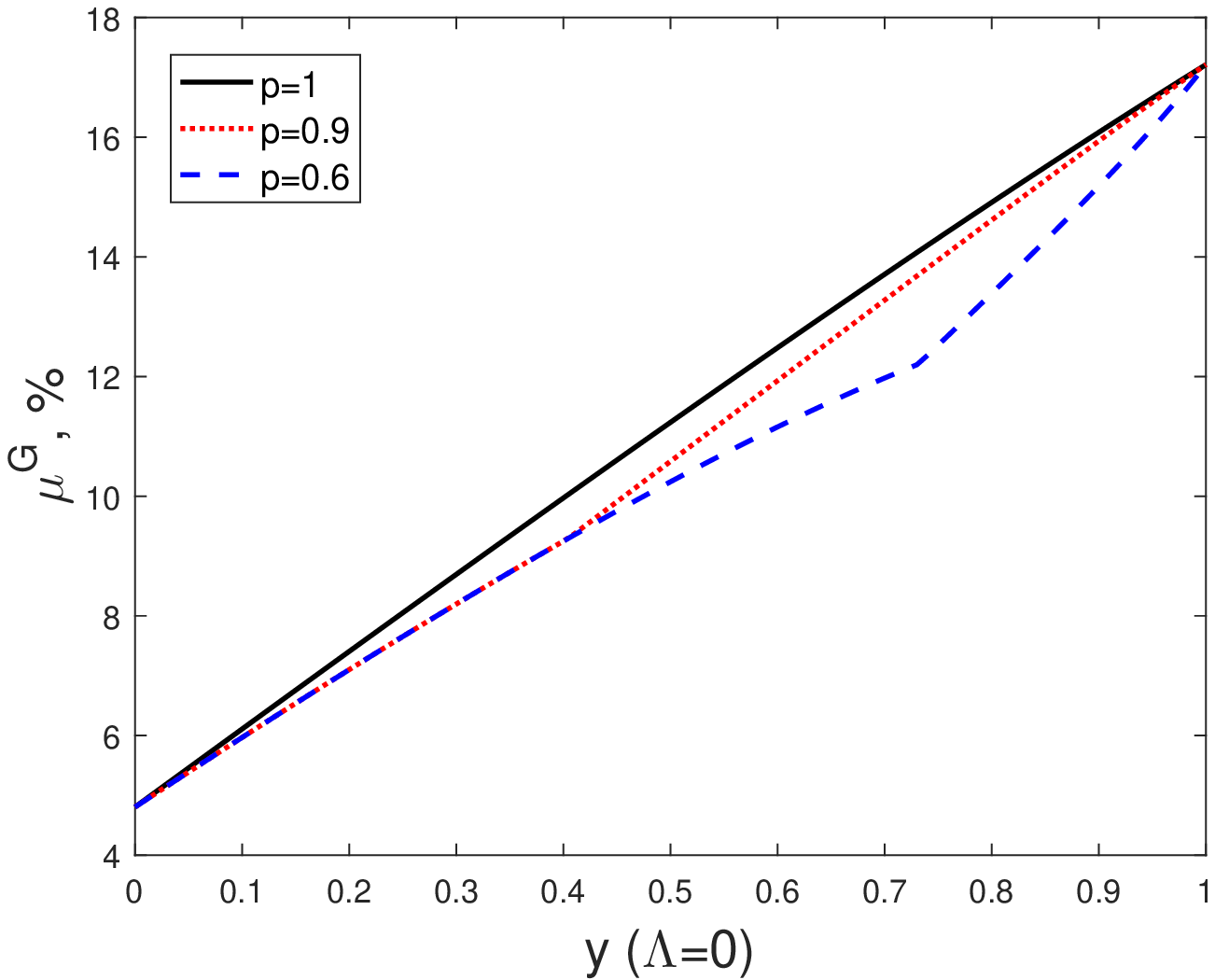}
}
\subfigure[$\Lambda_t=-5$]{
\includegraphics[ width = 0.31\textwidth]{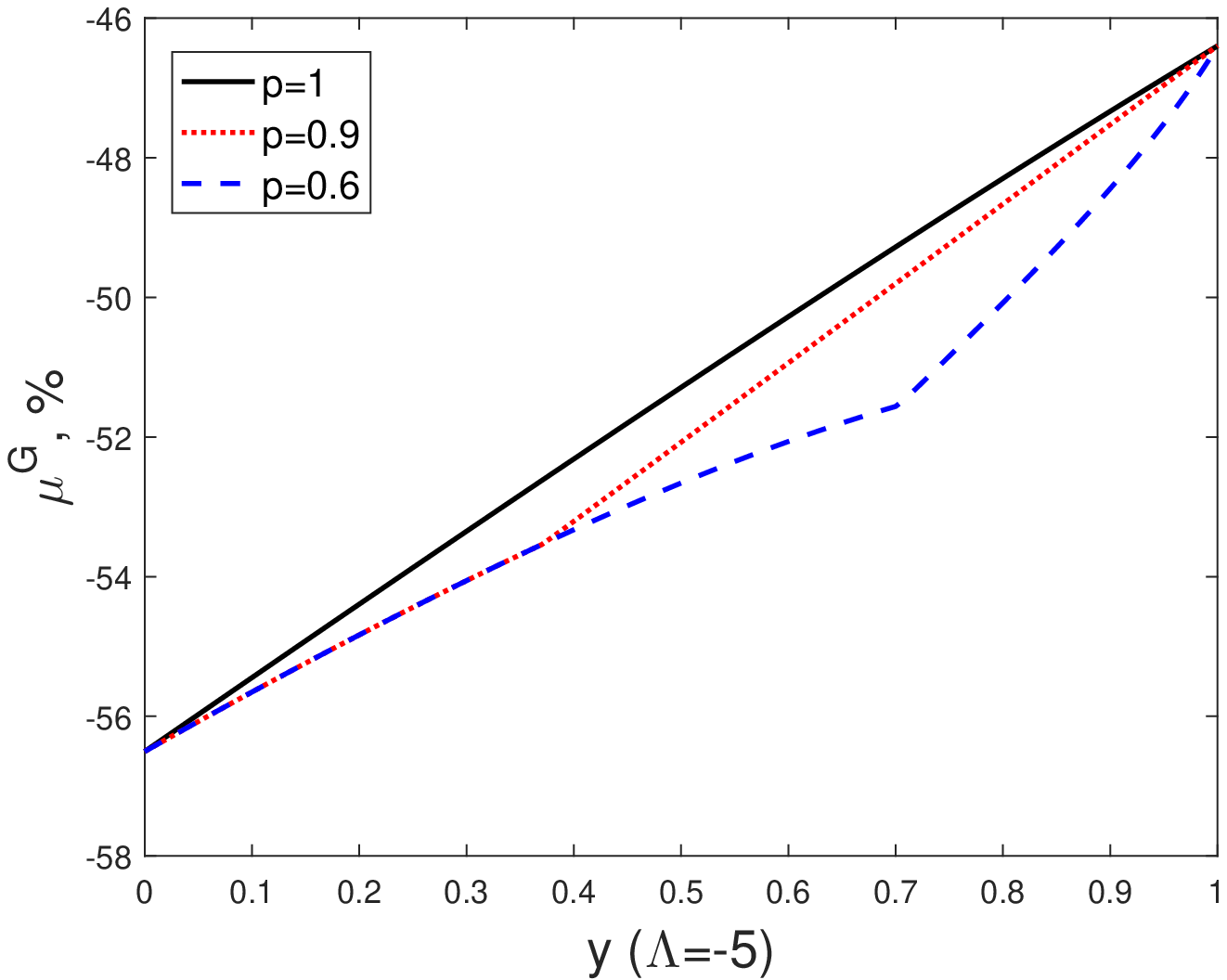}
}
\subfigure[$\Lambda_t=5$]{
\includegraphics[ width = 0.31\textwidth]{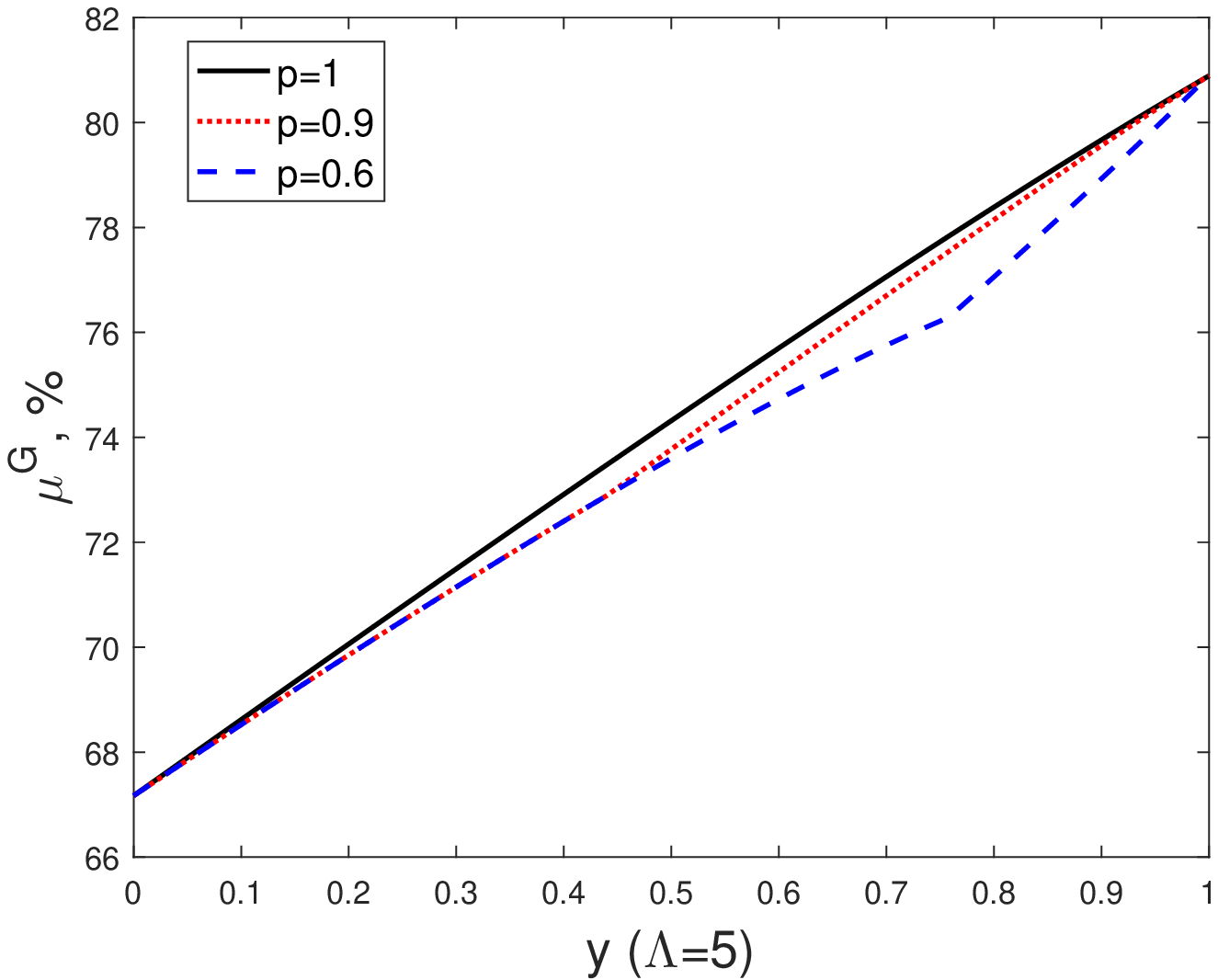}
}
\caption{Gross stock returns $\mu^G$ with different investor protection }
\label{fig_GreturnL}
\end{figure}
\begin{figure}[!htbp]
\centering
\subfigure[$p=1$]{
\includegraphics[ width = 0.31\textwidth]{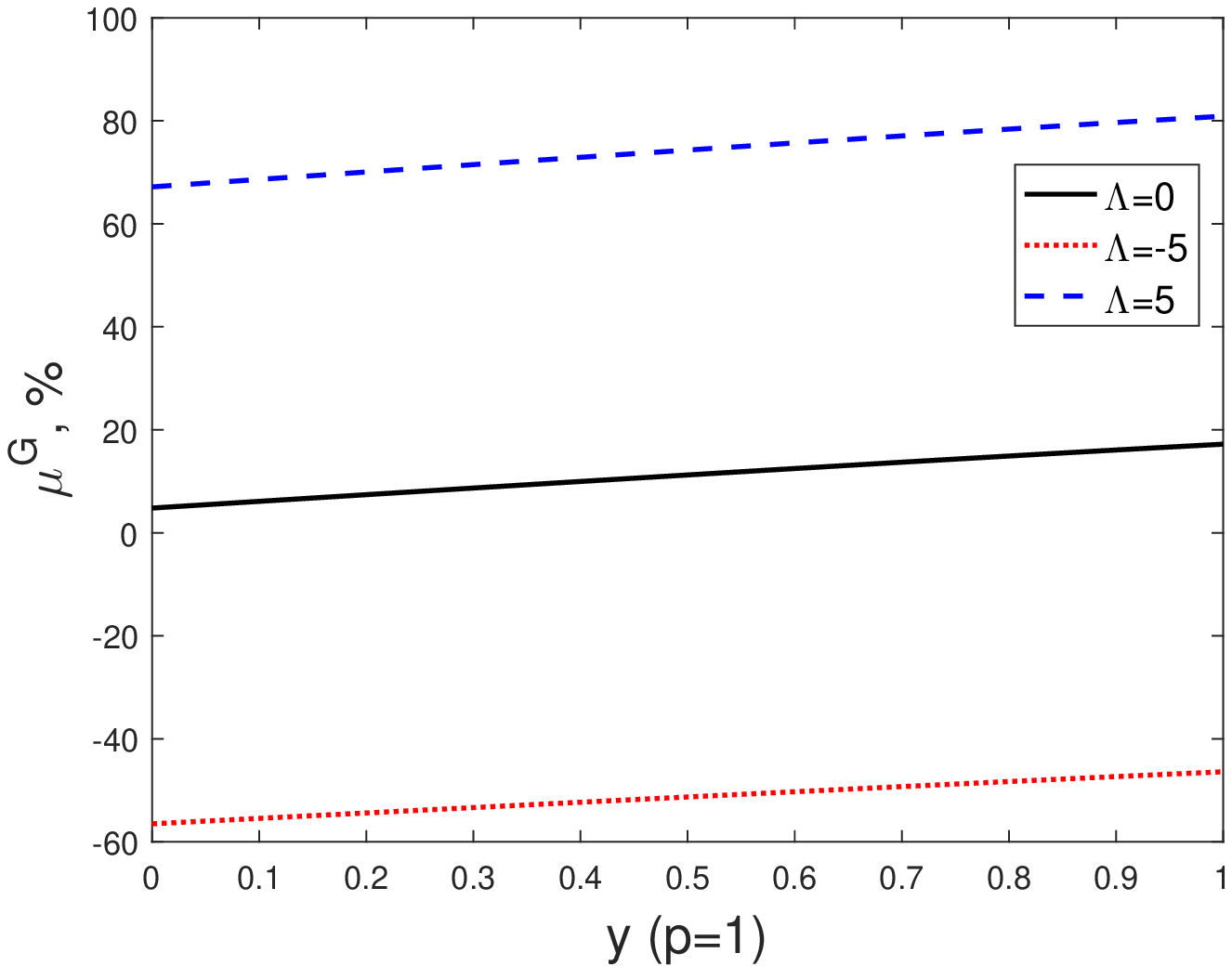}
}
\subfigure[$p=0.9$]{
\includegraphics[ width = 0.31\textwidth]{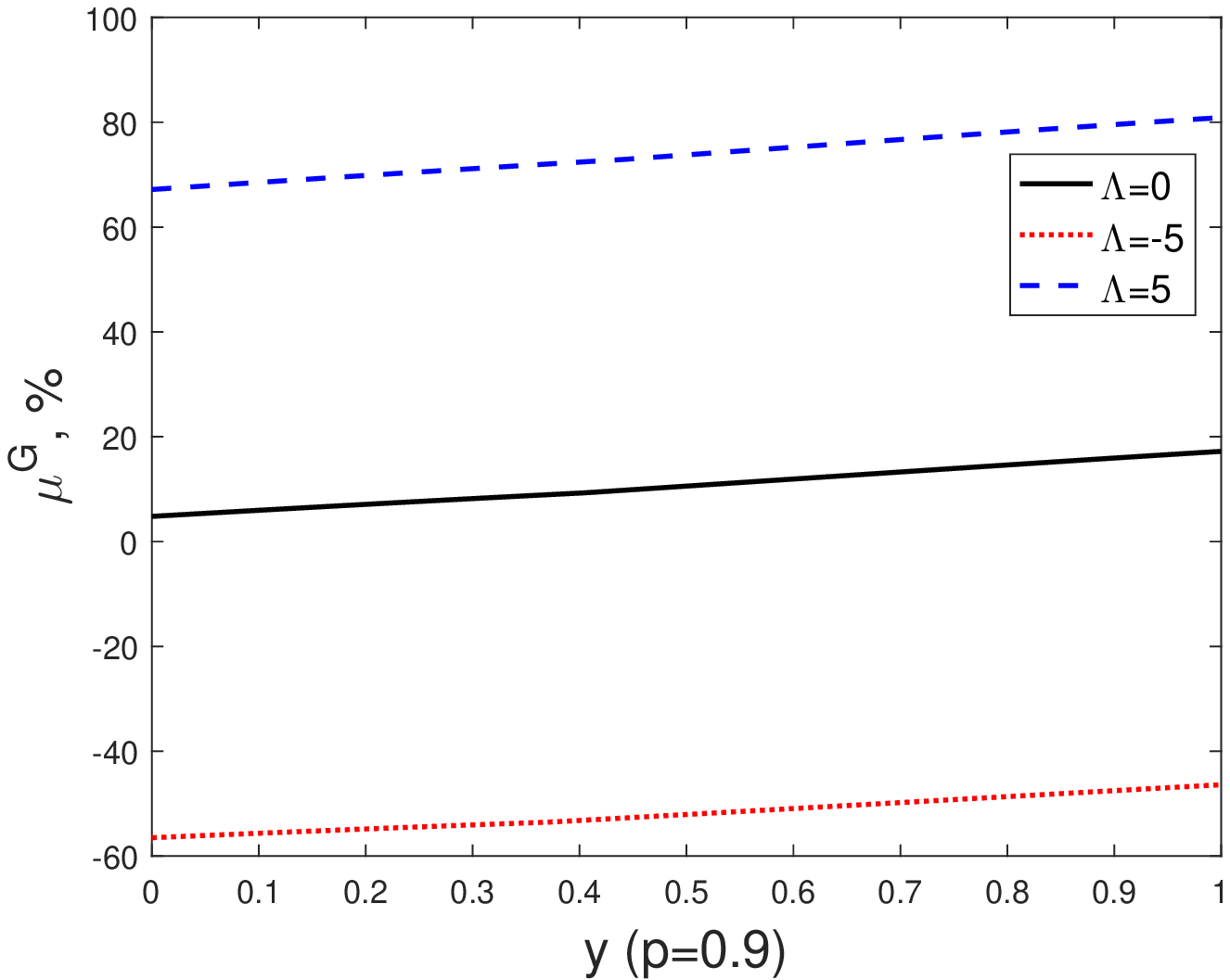}
}
\subfigure[$p=0.6$]{
\includegraphics[ width = 0.31\textwidth]{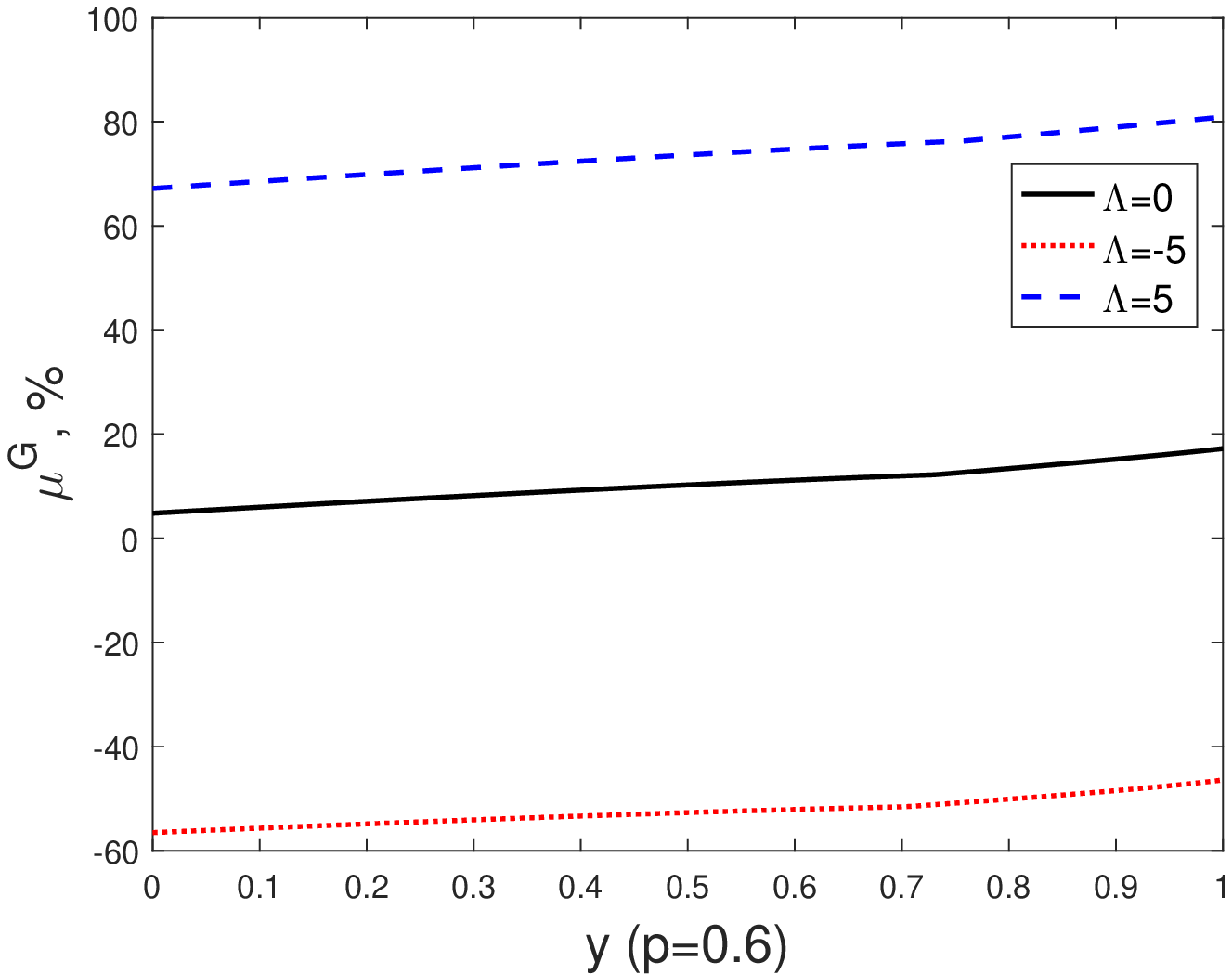}
}
\caption{Gross stock returns $\mu^G$ with different past information}
\label{fig_Greturnp}
\end{figure}

Figure \ref{fig_GreturnL} presents that no matter how investors are affected by the past information, poorer investor protection leads to lower gross stock return, which is consistent with our institution. The main reason is that poorer investor protection results in a higher fraction of diverted output. Compared with classic economies, it is showed by Figure \ref{fig_Greturnp} that investors' good memory (bad memory) brings about higher (lower) gross stock return wether or not investor protection in the economy is perfect. The reason is similar to the analysis in Figure \ref{fig_returnp}.

Summarizing, for each kind of investors' memory poorer investor protection leads to higher modified stock return and volatility and lower gross stock return and volatility, while compared with classic economies, good (bad) memory increases (decreases) modified stock return and volatility and gross stock return wether or not the investor protection is perfect.

\subsection{Interest rate}

The interest rate in the economy is another dynamic parameter indicating the situation of shareholders' investment in the bond. Figures \ref{fig_rateL} and \ref{fig_ratep} demonstrate the influence of investor protection and past information on the interest rate in equilibrium.

\begin{figure}[!htbp]
\centering
\subfigure[$\Lambda_t=0$]{
\includegraphics[ width = 0.31\textwidth]{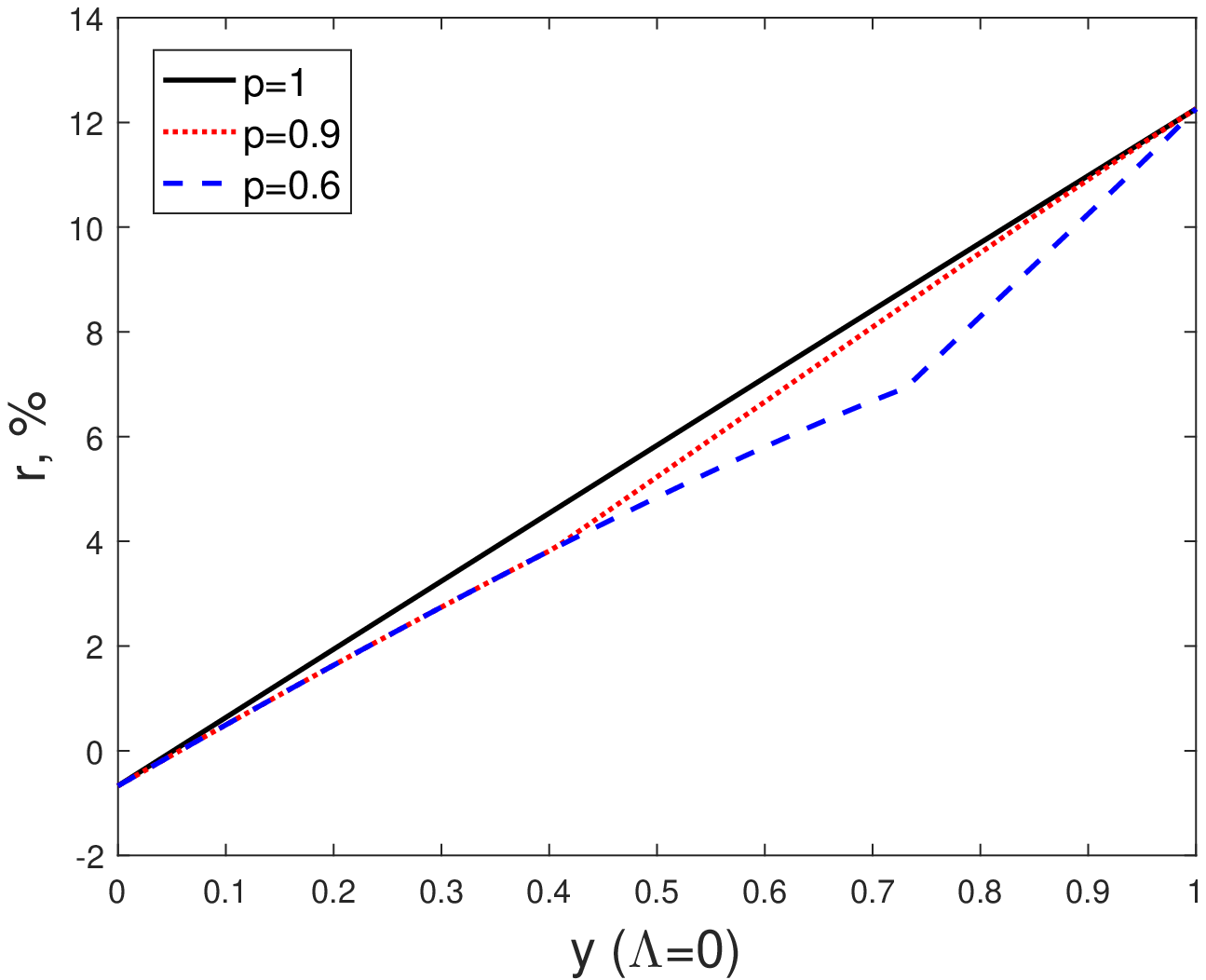}
}
\subfigure[$\Lambda_t=-5$]{
\includegraphics[ width = 0.31\textwidth]{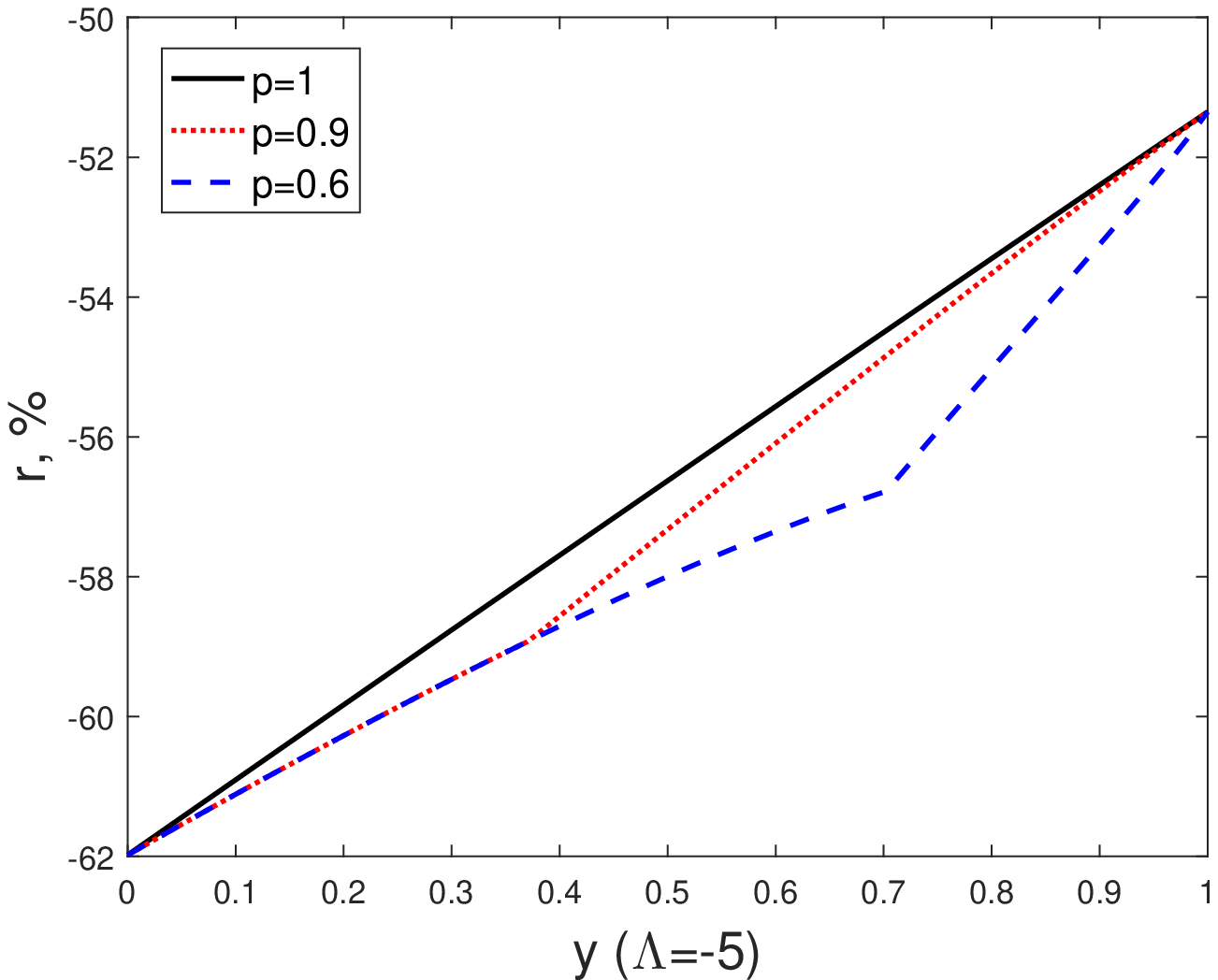}
}
\subfigure[$\Lambda_t=5$]{
\includegraphics[ width = 0.31\textwidth]{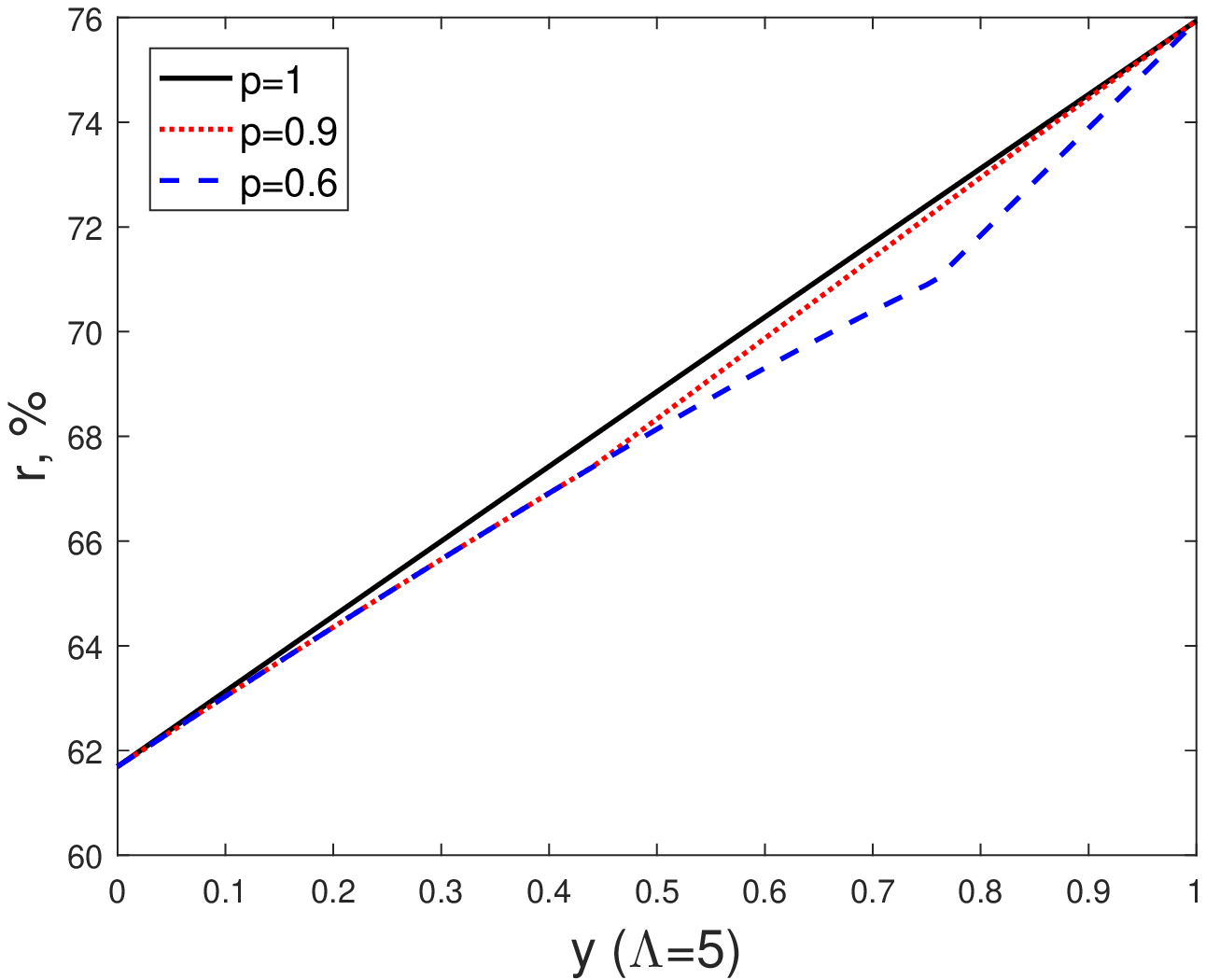}
}
\caption{Interest rates $r$ with different investor protection}
\label{fig_rateL}
\end{figure}
\begin{figure}[!htbp]
\centering
\subfigure[$p=1$]{
\includegraphics[ width = 0.31\textwidth]{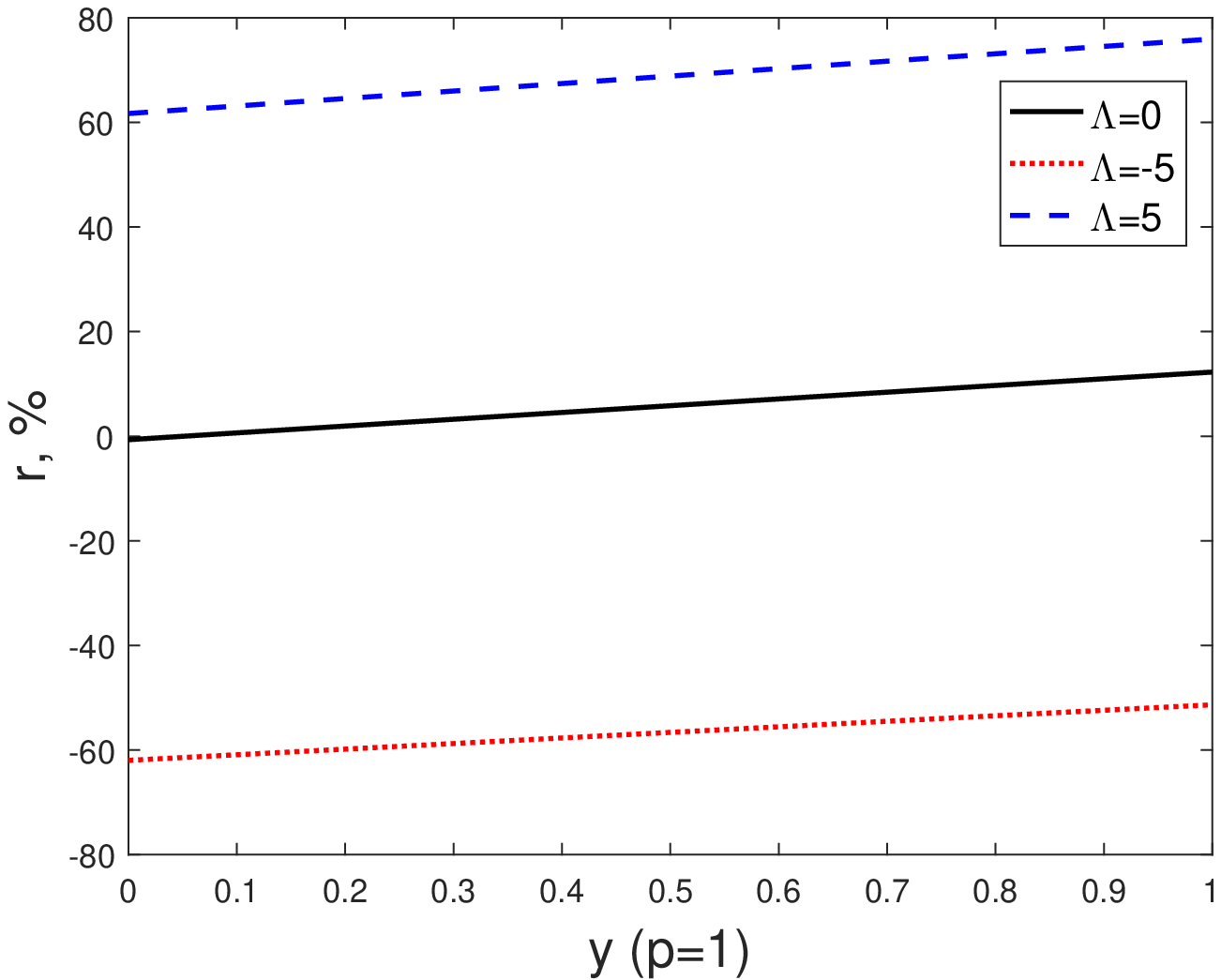}
}
\subfigure[$p=0.9$]{
\includegraphics[ width = 0.31\textwidth]{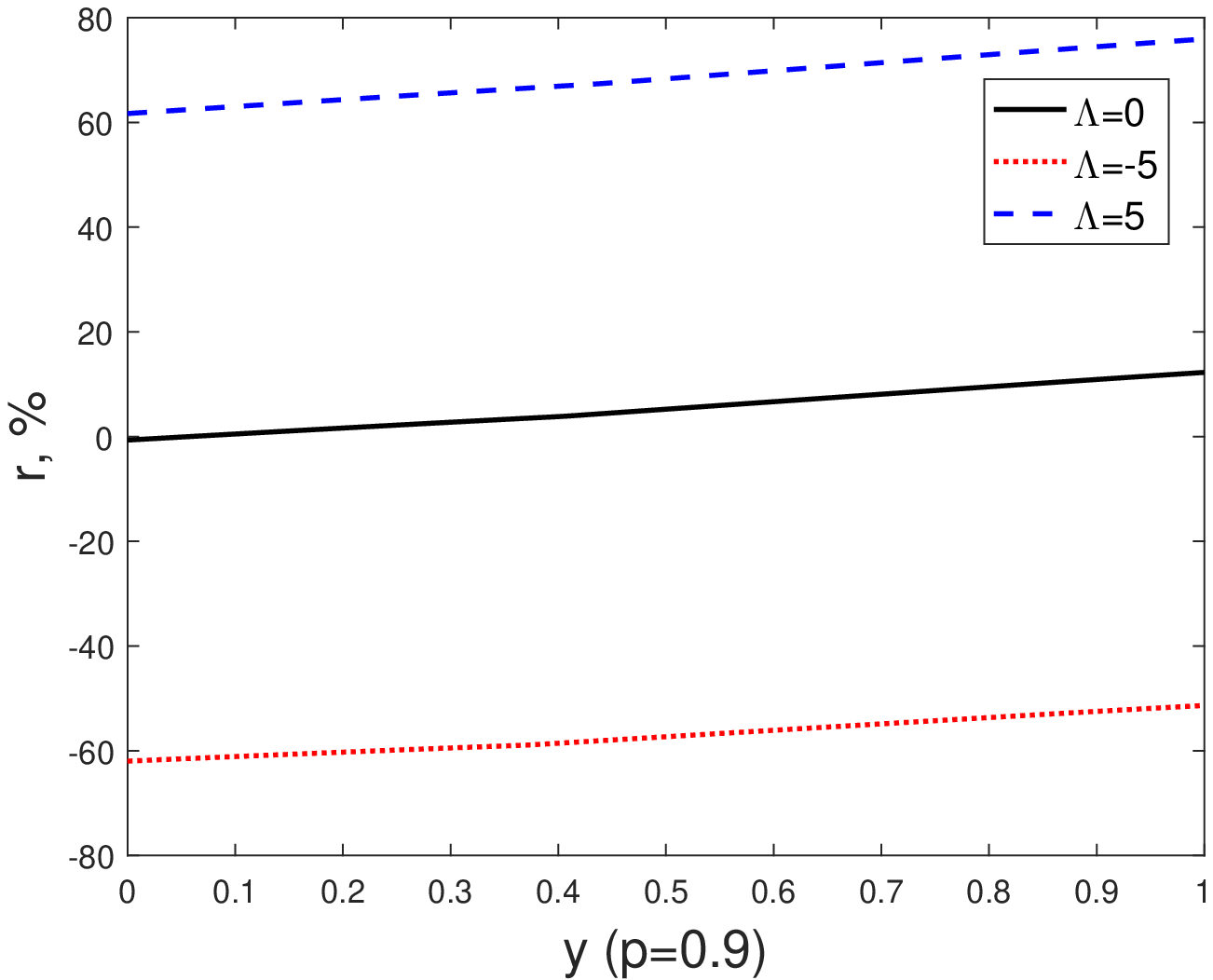}
}
\subfigure[$p=0.6$]{
\includegraphics[ width = 0.31\textwidth]{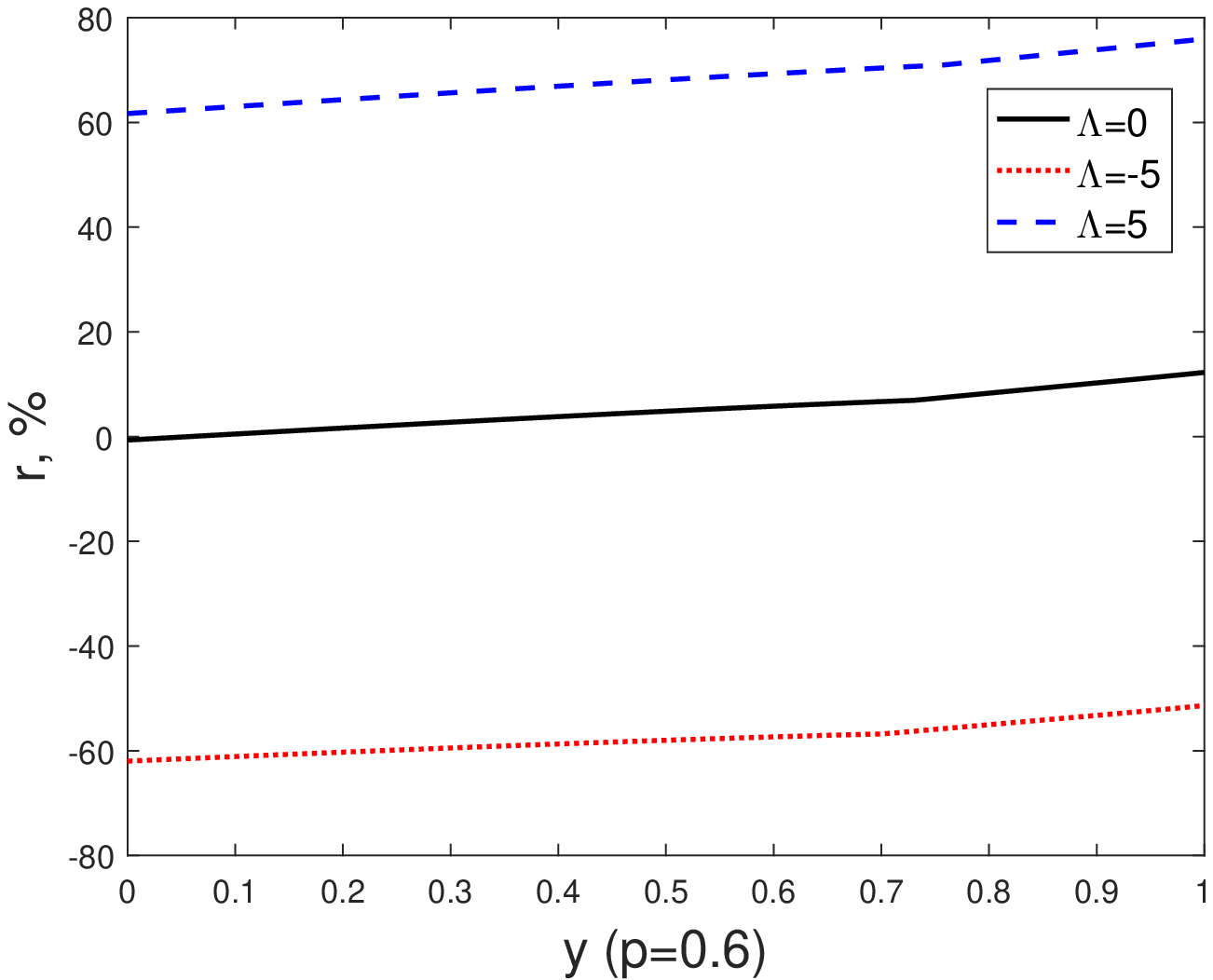}
}
\caption{Interest rates $r$ with different past information }
\label{fig_ratep}
\end{figure}

Figure \ref{fig_rateL} indicates that though the past information has different effects on investors' memory, poorer investor protection always leads to lower interest rate $r_t$. Since poorer investor protection decreases the gross stock return as we have discussed in Figure \ref{fig_GreturnL}, investors (especially the minority shareholder who can not benefit from diverting output) turn to investing the bond with cheaper credit, which then decreases the interest rate. Furthermore, poorer investor protection increases the fraction of diverted output $x_t^*$ as we have discussed in Figure \ref{fig_divertL}, and then stock investing could be partially covered by the diverted output leading to the decrease of the interest rate.

Figure \ref{fig_ratep} describes that compared with classic economies, investors' good memory (bad memory) results in higher (lower) interest rate wether or not investor protection in the economy is perfect. Bad memory would decreases the gross stock return as we have discussed in Figure \ref{fig_Greturnp}, which again decreases the interest rate. The reason for good memory can be considered similarly. This numerical result is consistent with Theorem 4.1 in \cite{Ohnishi}: ``Suppose two markets and assume that one market is more bullish than the other market. Then, the risk-free rate in a more bullish market is higher than that in a less bullish market."

It is interesting to note that not only is the entire line of interest rate below the zero line in the case of ``bad memory", but the interest rate in panel (a) in Figure \ref{fig_rateL} also turns negative for extremely low $y_t$. Negative interest rates of the former kind are experienced by a significant number of countries in the Global Financial Crisis (see, for instance, \cite{Chen}), and the main reason is easy to explain. With extreme ``bad memory", shareholders would not be willing to invest in the stock whose gross return is extremely low and does not match its comparatively high volatility (Figures \ref{fig_volL} and \ref{fig_volp}). Then substantial shareholders turn to invest the bond leading to the negative interest rate. For the latter, it suffices to explain the case around $y_t=0,\Lambda_t=0$. For one thing, the fact that controlling shareholder owns almost all stock shares urges minority shareholder to invest the bond, which produces cheaper credit. For another, though the past information appears to be ``no memory" at the present time, shareholders' sensitivity to past information make them prepare for future ``bad memory"  and then they must invest more money on the less risky bond to keep the consumption of necessary part. This behaviour of seeking safety are used to explain negative interest rates in unusual times in \cite{Anderson}. Hence, the negative interest rate is also caused by the parameter $\beta$ in (\ref{rto}).

\section{Conclusions}\label{section6}\noindent
\setcounter{equation}{0}
In this paper, based on the model in \cite{Basak} with investor protection and Brownian motion, we introduce an approximate fBm to a dynamic asset pricing model in an economy to address empirical regularities related to both investor protection for minority shareholders and memory properties of financial data. Besides a controlling shareholder who diverts a fraction of output, our model also features the good (or bad) memory obtained from the historical realized data in a pathwise way. Furthermore, for shareholders' objective functions (utility functions) in our approximate fractional economy, we consider shareholders' myopic preferences and sensitivity to the past information. In theory, we derive asset price dynamics in equilibria, and through the historical realized data, we establish an approximation scheme with pathwise convergence for the good (or bad) memory of investors for the past information, which ensures our theoretical results can be applied to numerical analysis. In numerical analysis, we find that poorer investor protection leads to higher stock holdings of controlling shareholders, lower gross stock returns, lower interest rates, and lower modified stock volatilities if the ownership concentration is sufficiently high. More importantly, we reveal that good (bad) memory would increase (decrease) stock holdings of controlling shareholders, modified stock returns and volatilities, gross stock returns, and interest rates, which is consistent with our intuition. Furthermore, we uncover how shareholders' memory affects investor protection for minority shareholders: good (bad) memory
would strengthen (weaken) investor protection for minority shareholder when the ownership concentration is sufficiently high, while good (bad) memory would inversely weaken (strengthen) investor protection for minority shareholder when the ownership concentration is sufficiently low.

There are several problems which have not been considered in this paper and we leave these as our future work. The driven approximate fBms of the output and the stock may be totally different, which means that investors have different memories for the output and the stock respectively. As we are able to deal with the memory for the output, how could we obtain the past information with pathwise convergence from the historical realized  price of the stock? Even in the economy driven by Brownian motion (i.e. $H=\frac{1}{2}$), there may not exist equilibria under market clearing conditions (\ref{ec-1})-(\ref{ec-3}) in the economy with extreme parameters. Then how could we deal with such extreme markets? Finally, the convergence rate ($(\Delta t)^{1/6}$) of the approximation scheme for $\Lambda_t$ and $\lambda_t$ is very slow and so developing some more efficient approximation schemes for $\Lambda_t$ and $\lambda_t$ remains to be solved.

\section*{Appendices}
\setcounter{equation}{0}
\renewcommand{\theequation}{A.\arabic{equation}}
\renewcommand{\thelemma}{A.\arabic{lemma}}

\textbf{\large Proof of Theorem \ref{th1}}\\
As discussed in Section \ref{section2}, $\Lambda_t$ is an $\mathcal{F}_t$-adapt process and known theoretically by using the information of $\{\widehat{D}_s\}_{0\leq s\leq t}$ and we may deal with $\Lambda_t$ as an exogenous process like $\widehat{D}_t$.

Solving (\ref{problemCc}) and (\ref{problemMc}) gives (\ref{ps_c}). Maximizing the quadratic objective function $J_M$ in (\ref{problemMn}), we obtain (\ref{ps_nM}) where $x_t^*$ is given latter.

It is easy to see that $J_c$ in (\ref{problemCn}) is a quadratic function of $x_t$ and
\begin{equation*}
\frac{\partial}{\partial x}J_C(n_{Ct};x_t)=-k\frac{D_t}{W_{Ct}}\left(x_t-\frac{1-n_{Ct}}{k}\right).
\end{equation*}
Hence, considering the constrain (\ref{protectC}), we get $x^*_t(n_{Ct})=\min\left\{\frac{1-n_{Ct}}{k},(1-p)n_{Ct}\right\}$. Substituting $x^*_t(n_{Ct})$ back into $J_C$ gives
\begin{equation}\label{J}
J_C(n_{Ct})=\left\{{\begin{array}{*{20}{l}}
J_{C1}(n_{Ct}),\quad n_{Ct}\leq \frac{1}{1+(1-p)k}\, \left(x^*_t(n_{Ct})=(1-p)n_{Ct}\right),\\
J_{C2}(n_{Ct}),\quad n_{Ct}\geq\frac{1}{1+(1-p)k}\, \left(x^*_t(n_{Ct})=\frac{1-n_{Ct}}{k}\right),
\end{array}}\right.
\end{equation}
where
\begin{align*}
J_{C1}(n_{Ct})=&-\frac{1}{2}\frac{S_t}{W_{Ct}}\left[2H\varepsilon^{2h}\delta_C\frac{S_t}{W_{Ct}}\sigma_t^2
+(2(1-p)+k(1-p)^2)\frac{D_t}{S_t}\right]n_{Ct}^2\nonumber\\
&+\frac{S_t}{W_{Ct}}\left[\mu_t-r_t+(2-p)\frac{D_t}{S_t}+\sigma_t\Lambda_t\right]n_{Ct},\\
J_{C2}(n_{Ct})=&-\frac{1}{2}\frac{S_t}{W_{Ct}}\left[2H\varepsilon^{2h}\delta_C\frac{S_t}{W_{Ct}}\sigma_t^2
-\frac{1}{k}\frac{D_t}{S_t}\right]n_{Ct}^2\nonumber\\
&+\frac{S_t}{W_{Ct}}\left[\mu_t-r_t+\left(1-\frac{1}{k}\right)\frac{D_t}{S_t}+\sigma_t\Lambda_t\right]n_{Ct}+\frac{D_t}{2kW_{Ct}}.
\end{align*}
In the case of $n_{Ct}\leq \frac{1}{1+(1-p)k}$, it is obvious that $J_{C1}$ is a quadratic concave function of $n_{Ct}$, and there are two possible maximal points for $J_{C1}$---$n_{Ct,1}^*$ and $n_{Ct,3}^*$. In the case of $n_{Ct}\geq \frac{1}{1+(1-p)k}$, $J_{C2}$ is a quadratic concave or convex function (or even a linear function) of $n_{Ct}$, but it is clear that there exist three possible maximal points for $J_{C2}$---$n_{Ct,2}^*,n_{Ct,3}^*$ and $n_{Ct,4}^*$.   Summarizing aforementioned analysis and searching over above points, we can determine the global maximum and the associated maximum point $n_{Ct}^*$, which gives (\ref{ps_nC}) and (\ref{nC1})-(\ref{nC4}).
\\
\\
\textbf{\large Proof of Theorem \ref{th2}}\\
Using (\ref{budgetB}), (\ref{ps_c}) and the market clearing conditions (\ref{ec-1})-(\ref{ec-3}), we have
\begin{align}
S_t=&W_{Ct}+W_{Mt},\label{st}\\
W_{it}=&\rho^{-\frac{1}{\delta_i}}\varphi(\Lambda_t)^{-\theta_i}c_{it}^*,\;i\in\{C,M\},\label{wit}\\
\widehat{D}_t=&\rho^{\frac{1}{\delta_C}}\varphi(\Lambda_t)^{\theta_C}W_{Ct}
+\rho^{\frac{1}{\delta_M}}\varphi(\Lambda_t)^{\theta_M}W_{Mt},\label{dw}
\end{align}
which clearly implies that
\begin{align}
\frac{\widehat{D}_t}{S_t}=&\frac{\widehat{D}_t}{\rho^{-\frac{1}{\delta_C}}\varphi(\Lambda_t)^{-\theta_C}c_{Ct}^*
+\rho^{-\frac{1}{\delta_M}}\varphi(\Lambda_t)^{-\theta_M}c_{Mt}^*}=
\frac{1}{(1-y_t)\rho^{-\frac{1}{\delta_C}}\varphi(\Lambda_t)^{-\theta_C}+y_t\rho^{-\frac{1}{\delta_M}}\varphi(\Lambda_t)^{-\theta_M}},\label{rato1}\\
\frac{W_{Ct}}{S_t}=&\frac{\rho^{-\frac{1}{\delta_C}}\varphi(\Lambda_t)^{-\theta_C}c_{Ct}^*}
{\rho^{-\frac{1}{\delta_C}}\varphi(\Lambda_t)^{-\theta_C}c_{Ct}^*
+\rho^{-\frac{1}{\delta_M}}\varphi(\Lambda_t)^{-\theta_M}c_{Mt}^*}
=\frac{(1-y_t)\rho^{-\frac{1}{\delta_C}}\varphi(\Lambda_t)^{-\theta_C}}{(1-y_t)\rho^{-\frac{1}{\delta_C}}\varphi(\Lambda_t)^{-\theta_C}+y_t\rho^{-\frac{1}{\delta_M}}\varphi(\Lambda_t)^{-\theta_M}},\label{rato2}\\
\frac{W_{Mt}}{S_t}=&\frac{\rho^{-\frac{1}{\delta_M}}\varphi(\Lambda_t)^{-\theta_M}c_{Mt}^*}
{\rho^{-\frac{1}{\delta_C}}\varphi(\Lambda_t)^{-\theta_C}c_{Ct}^*
+\rho^{-\frac{1}{\delta_M}}\varphi(\Lambda_t)^{-\theta_M}c_{Mt}^*}
=\frac{y_t\rho^{-\frac{1}{\delta_M}}\varphi(\Lambda_t)^{-\theta_M}}{(1-y_t)\rho^{-\frac{1}{\delta_C}}\varphi(\Lambda_t)^{-\theta_C}+y_t\rho^{-\frac{1}{\delta_M}}\varphi(\Lambda_t)^{-\theta_M}}.\label{rato3}
\end{align}

The above equations clearly gives (\ref{rato4})-(\ref{rato6}). Inserting (\ref{rato4}) into (\ref{kt}), the stock mean-return $\mu_t$ (the equation (\ref{miut})) is easily found.

Applying It\^{o}'s Lemma to both sides of (\ref{dw}) and then matching the terms of $dt$ and $dw_t$, it is seen that
\begin{align*}
\mu_D+\sigma_D\Lambda_t=&\frac{c^*_{Ct}}{\widehat{D}_t}\bigg[r_t+\frac{\sqrt{2H}\varepsilon^h n_{Ct}^*\sigma_t\kappa_tS_t}{W_{Ct}}+2Hh\varepsilon^{2h-1}\theta_C\frac{\varphi'(\Lambda_t)}{\varphi(\Lambda_t)}
\frac{n^*_{Ct}S_t\sigma_t}{W_{Ct}}+\left(x_t^*-\frac{kx_t^{*2}}{2}\right)\frac{D_t}{W_{Ct}}\\
&-\rho^{\frac{1}{\delta_C}}\varphi(\Lambda_t)^{\theta_C}+\theta_C\lambda_t\frac{\varphi'(\Lambda_t)}{\varphi(\Lambda_t)}
+Hh^2\varepsilon^{2h-2}\theta_C\left((\theta_C-1)\left(\frac{\varphi'(\Lambda_t)}{\varphi(\Lambda_t)}\right)^2
+\frac{\varphi''(\Lambda_t)}{\varphi(\Lambda_t)}\right)\bigg]\\
&+\frac{c^*_{Mt}}{\widehat{D}_t}\bigg[r_t+\frac{\sqrt{2H}\varepsilon^h n_{Mt}^*\sigma_t\kappa_tS_t}{W_{Mt}}+2Hh\varepsilon^{2h-1}\theta_M\frac{\varphi'(\Lambda_t)}{\varphi(\Lambda_t)}
\frac{n^*_{Mt}S_t\sigma_t}{W_{Mt}}+\frac{kx_t^{*2}}{2}\frac{D_t}{W_{Mt}}\\
&-\rho^{\frac{1}{\delta_M}}\varphi(\Lambda_t)^{\theta_M}+\theta_M\lambda_t\frac{\varphi'(\Lambda_t)}{\varphi(\Lambda_t)}
+Hh^2\varepsilon^{2h-2}\theta_M\left((\theta_M-1)\left(\frac{\varphi'(\Lambda_t)}{\varphi(\Lambda_t)}\right)^2
+\frac{\varphi''(\Lambda_t)}{\varphi(\Lambda_t)}\right)\bigg]\\
&+l_C\rho^{\frac{1}{\delta_C}}\varphi(\Lambda_t)^{\theta_C}+l_M\rho^{\frac{1}{\delta_M}}\varphi(\Lambda_t)^{\theta_M},\\
\sigma_D=&\frac{c^*_{Ct}}{\widehat{D}_t}\left(\frac{n_{Ct}^*\sigma_tS_t}{W_{Ct}}+\theta_C\frac{h}{\varepsilon}
\frac{\varphi'(\Lambda_t)}{\varphi(\Lambda_t)}\right)+\frac{c^*_{Mt}}{\widehat{D}_t}\left(\frac{n_{Mt}^*\sigma_tS_t}
{W_{Mt}}+\theta_M\frac{h}{\varepsilon}
\frac{\varphi'(\Lambda_t)}{\varphi(\Lambda_t)}\right).
\end{align*}
Then by substituting $y_t=\frac{c_{Mt}^*}{\widehat{D}_t},1-y_t=\frac{c_{Ct}^*}{\widehat{D}_t}$, (\ref{rato5}) and (\ref{rato2})-(\ref{rato3}), we derive the interest rate $r_t$ and the stock volatility $\sigma_t$ as (\ref{rt}) and (\ref{sigmat}) respectively.

It is from (\ref{ps_nM}) and (\ref{kt}) that
\[
\kappa_t=\sqrt{2H}\varepsilon^h\delta_M(1-n_{Ct}^*)\sigma_t\frac{S_t}{W_{Mt}},
\]
which together with (\ref{sigmat}) and (\ref{rato3}) gives (\ref{ktt}) of the Sharpe ratio $\kappa_t$.

We can rewrite $y_t$ as
\[y_t=\rho^{\frac{1}{\delta_M}}\varphi(\Lambda_t)^{\theta_M}\frac{W_{Mt}}{\widehat{D}_t}.\]
Then applying It\^{o}'s Lemma to both sides of
\[
 y_t\widehat{D}_t=\rho^{\frac{1}{\delta_M}}\varphi(\Lambda_t)^{\theta_M}W_{Mt}
\]
and matching the terms of $dt$ and $dw_t$ again, we just get $\mu_{yt}$ and $\sigma_{yt}$ as (\ref{muy}) and (\ref{sy}) respectively.

Finally, we turn to deriving $b_{it}^*$ and $n_{it}^*$ ($i\in\{C,M\}$). It is easy to obtain (\ref{bcc})-(\ref{nmt}) and (\ref{thnC3})-(\ref{thnC4}). In Region 2, since $x_t^*(n_{Ct,2}^*)=\frac{1-n_{Ct,2}^*}{k}$, we may rewrite (\ref{nC2}) as
\[
n_{Ct,2}^*=\frac{\mu_t-r_t+\left(1-x_t^*(n_{Ct,2}^*)\right)\frac{D_t}{S_t}
+\sigma_t\Lambda_t}{2H\varepsilon^{2h}\delta_C\frac{S_t}{W_{Ct}}\sigma_t^2
}=\frac{\kappa_t}{\sqrt{2H}\varepsilon^{h}\delta_C\frac{S_t}{W_{Ct}}\sigma_t},
\]
and then by substituting (\ref{sigmat}), (\ref{ktt}) and (\ref{rato2}) for  $\sigma_t$, $\kappa_t$ and $\frac{W_{Ct}}{S_t}$ we obtain
\[
n_{Ct,2}^*=\frac{1-y_t}{{\delta_C}\rho^{\frac{1}{\delta_C}}\varphi(\Lambda_t)^{\theta_C}}
\frac{{\delta_M}\rho^{\frac{1}{\delta_M}}\varphi(\Lambda_t)^{\theta_M}}{y_t}(1-n_{Ct,2}^*).
\]
Solving the above equation for $n_{Ct,2}^*$, we derive (\ref{thnC2}) for $n_{Ct,2}^*$. In Region 1, as $x_t^*(n_{Ct,1}^*)=n_{Ct,1}^*(1-p)$, we may rewrite (\ref{nC1}) as
\begin{align*}
n_{Ct,1}^*=&\frac{\mu_t-r_t+(1-x_t^*(n_{Ct,1}^*))\frac{D_t}{S_t}+\sigma_t\Lambda_t+(1-p)[1-n_{Ct,1}^*-k(1-p)n_{Ct,1}^*]\frac{D_t}{S_t}}
{2H\varepsilon^{2h}\delta_C\frac{S_t}{W_{Ct}}\sigma_t^2}\\
=&\frac{\kappa_t}{\sqrt{2H}\varepsilon^{h}\delta_C\frac{S_t}{W_{Ct}}\sigma_t}+(1-p)\left(1-\frac{n_{Ct,1}^*}{\frac{1}{1+(1-p)k}}\right)
\frac{\frac{D_t}{S_t}}{2H\varepsilon^{2h}\delta_C\frac{S_t}{W_{Ct}}\sigma_t^2}
\end{align*}
Then substituting (\ref{sigmat}), (\ref{ktt}), (\ref{thnC2}), (\ref{thnC3}), (\ref{rato4}) and (\ref{rato6}) for $\sigma_t$, $\kappa_t$, $n_{Ct,2}^*$, $n_{Ct,3}^*$, $\frac{{D}_t}{S_t}$ and $\frac{S_t}{W_{Ct}}$ respectively, we obtain
\begin{align*}
n_{Ct,1}^*=&(1-n_{Ct,1}^*){\frac{1-y_t}{{\delta_C}\rho^{\frac{1}{\delta_C}}\varphi(\Lambda_t)^{\theta_C}}}
\left/{\frac{y_t}{{\delta_M}\rho^{\frac{1}{\delta_M}}\varphi(\Lambda_t)^{\theta_M}}}\right.
+\left(1-\frac{n_{Ct,1}^*}{n_{Ct,3}^*}\right)
\frac{(1-p)(1-l_C-l_M)}{2H\varepsilon^{2h}}\frac{1-y_t}{{\delta_C}\rho^{\frac{1}{\delta_C}}\varphi(\Lambda_t)^{\theta_C}}\\
&\left.\times\left[n_{Ct,1}^*\rho^{\frac{1}{\delta_C}}\varphi(\Lambda_t)^{\theta_C}+(1-n_{Ct,1}^*)\rho^{\frac{1}{\delta_M}}\varphi(\Lambda_t)^{\theta_M}\right]^2
\right/\left[\sigma_D-\frac{h}{\varepsilon}
\frac{\varphi'(\Lambda_t)}{\varphi(\Lambda_t)}\left(\theta_C(1-y_t)+\theta_My_t\right)\right]^2,
\end{align*}
and finally derive (\ref{thnC1}).
\\
\\
\textbf{\large Proof of Corollary \ref{co1}}\\
The benchmark condition $p=1$ implies, by (\ref{protectC}), that $x_t\equiv 0$ and $f(x,D)\equiv 0$. Substituting $p=1$ and $x_t^{\mathfrak{B}}=0$ into Theorem \ref{th2}, it suffices to obtain $n_{Ct}^{\mathfrak{B}}$ to complete the proof. Indeed, by setting $p=1$ in (\ref{J}) and (\ref{thnC1}), we know that the condition $\frac{1}{1+(1-p)k}=1$ implies $n_{Ct}^{\mathfrak{B}}$ could be discussed equivalently in Region 1, and (\ref{conc}) is obtained easily.
\\
\\
\textbf{\large Proof of Theorem \ref{th4.1}}\\
To be brief, $C_q$ (or $C$) will denote a nonnegative constant depending only on $q$ whose value may change from line to line. For $n=0,1,\cdots,N-1$, define an approximation scheme used in definitions of It\^{o} integrals as
\begin{align}
\widetilde{Z}_{t_{n+1}}-\widetilde{Z}_{t_{n}}=&\left(\mu_D+\sigma_D\Lambda_{t_{n}}-H\varepsilon^{2h}\sigma_D^2\right)\Delta t+\sqrt{2H}\varepsilon^h\sigma_D \Delta w_{t_{n+1}},\quad \widetilde{Z}_{t_{0}}={Z}_{t_{0}};\label{s3}\\
\widetilde{\Lambda}_{t_{n+1}}=&\sum_{k=0}^n\sqrt{2H}h(t_{n+1}-t_k+\varepsilon)^{h-1}\Delta {w}_{t_{k+1}},\quad \widetilde{\Lambda}_{t_0}=0,\label{s4}\\
\widetilde{\lambda}_{t_{n+1}}=&\sum_{k=0}^n\sqrt{2H}h(h-1)(t_{n+1}-t_k+\varepsilon)^{h-2}\Delta {w}_{t_{k+1}},\quad \widetilde{\lambda}_{t_0}=0,\label{ss4}
\end{align}
where $\Delta w_{t_{n+1}}=w_{t_{n+1}}-w_{t_{n}}$. Set \[Y_{n+1}=(\widetilde{Z}_{t_{n+1}}-\widetilde{Z}_{t_{n}})-(Z_{t_{n+1}}-Z_{t_{n}})\]
for $n=0,1,\cdots,N-1$. In order to prove Theorem \ref{th4.1}, we firstly give several lemmas.
\begin{lemma}{\it
For all $q\geq 1$ and $n=0,1,\cdots,N-1$, there exists a nonnegative constant $C_q$ such that
\begin{equation}
\mathbb{E}|Y_{n+1}|^q\leq C_q (\Delta t)^{\frac{3}{2}q}.
\end{equation}
}
\end{lemma}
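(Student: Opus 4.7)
The plan is to first obtain a clean representation of $Y_{n+1}$, then reduce the moment estimate to a H\"older-type bound on the process $\Lambda$. Subtracting (\ref{Z}) integrated over $[t_n,t_{n+1}]$ from (\ref{s3}), the constant and Brownian contributions cancel and we are left with
\[
Y_{n+1}=\sigma_D\int_{t_n}^{t_{n+1}}(\Lambda_{t_n}-\Lambda_s)\,ds.
\]
Applying H\"older's inequality in $s$ (with exponents $q,q/(q-1)$), we get
\[
|Y_{n+1}|^q\le \sigma_D^q(\Delta t)^{q-1}\int_{t_n}^{t_{n+1}}|\Lambda_s-\Lambda_{t_n}|^q\,ds,
\]
so after taking expectation and using Fubini,
\[
\mathbb{E}|Y_{n+1}|^q\le \sigma_D^q(\Delta t)^{q-1}\int_{t_n}^{t_{n+1}}\mathbb{E}|\Lambda_s-\Lambda_{t_n}|^q\,ds.
\]
Hence the lemma reduces to showing $\mathbb{E}|\Lambda_s-\Lambda_{t_n}|^q\le C_q(\Delta t)^{q/2}$ for $s\in[t_n,t_{n+1}]$.

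For this moment bound, I would exploit the SDE already derived right after the definition of $\lambda_t$, namely $d\Lambda_t=\lambda_t\,dt+\sqrt{2H}h\varepsilon^{h-1}dw_t$, so that
\[
\Lambda_s-\Lambda_{t_n}=\int_{t_n}^s\lambda_u\,du+\sqrt{2H}h\varepsilon^{h-1}(w_s-w_{t_n}).
\]
By the $c_q$-inequality, $\mathbb{E}|\Lambda_s-\Lambda_{t_n}|^q\le C_q\mathbb{E}\bigl|\int_{t_n}^s\lambda_u\,du\bigr|^q+C_q\varepsilon^{(h-1)q}\mathbb{E}|w_s-w_{t_n}|^q$. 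The Brownian term contributes $C_q(\Delta t)^{q/2}$ directly. For the drift term, H\"older in $u$ gives $\bigl|\int_{t_n}^s\lambda_u\,du\bigr|^q\le(\Delta t)^{q-1}\int_{t_n}^s|\lambda_u|^q\,du$, and since $\lambda_u$ is a centered Gaussian with variance $2Hh^2(h-1)^2\int_0^u(u-r+\varepsilon)^{2h-4}dr\le C(\varepsilon,h,T)$ (using $\varepsilon>0$ to tame the singularity at $r=u$), we have $\sup_{u\in[0,T]}\mathbb{E}|\lambda_u|^q\le C_q$, giving $\mathbb{E}\bigl|\int_{t_n}^s\lambda_u\,du\bigr|^q\le C_q(\Delta t)^q$. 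Since $\Delta t\le 1$, the Brownian part dominates and $\mathbb{E}|\Lambda_s-\Lambda_{t_n}|^q\le C_q(\Delta t)^{q/2}$.

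Plugging this back yields $\mathbb{E}|Y_{n+1}|^q\le C_q\sigma_D^q(\Delta t)^{q-1}\cdot\Delta t\cdot(\Delta t)^{q/2}=C_q(\Delta t)^{3q/2}$, which is the claim. The only mildly delicate step is the uniform-in-$u$ moment bound on $\lambda_u$: one must verify that $\int_0^u(u-r+\varepsilon)^{2h-4}dr$ is finite and bounded on $[0,T]$, which follows from the change of variables $v=u-r+\varepsilon$ and the fact that the integrand is continuous on $[\varepsilon,u+\varepsilon]$ because $\varepsilon>0$ is kept fixed throughout; this is the precise place where the regularization by $\varepsilon$ pays off. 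Everything else is a routine combination of H\"older's inequality, the $c_q$-inequality, and Gaussian moment control.
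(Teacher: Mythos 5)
Your proposal is correct. The first half coincides with the paper's proof: both represent $Y_{n+1}=\sigma_D\int_{t_n}^{t_{n+1}}(\Lambda_{t_n}-\Lambda_s)\,ds$, apply H\"older's inequality in $s$, and thereby reduce the lemma to the increment estimate $\mathbb{E}|\Lambda_s-\Lambda_{t_n}|^q\leq C_q(\Delta t)^{q/2}$. Where you diverge is in how that increment estimate is proved. The paper works directly with the Wiener-integral representation of $\Lambda_{t_n}-\Lambda_t$, applies the It\^{o} isometry to the kernel difference $\mathbf{1}_{s\leq t_n}(t_n-s+\varepsilon)^{h-1}-(t-s+\varepsilon)^{h-1}$, and splits the resulting variance into the piece over $[t_n,t]$ (bounded by $\varepsilon^{2(h-1)}\Delta t$) and the piece over $[0,t_n]$ (bounded via the mean value theorem by $C(\Delta t)^2$); Gaussian moment scaling then gives the $q$-th moment bound. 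You instead invoke the semimartingale decomposition $d\Lambda_t=\lambda_t\,dt+\sqrt{2H}h\varepsilon^{h-1}dw_t$ already recorded in Section \ref{section3}, so the increment splits into a Brownian increment of order $(\Delta t)^{1/2}$ and a drift integral of order $\Delta t$, the latter controlled by a uniform-in-$u$ moment bound on the Gaussian process $\lambda_u$. The two routes are equivalent in substance --- both hinge on $\varepsilon>0$ taming the kernel singularity --- but yours is arguably cleaner in that it reuses the SDE for $\Lambda$ and avoids the explicit mean-value-theorem manipulation of the kernel, at the mild cost of needing the auxiliary bound $\sup_{u\in[0,T]}\mathbb{E}|\lambda_u|^q<\infty$, which you justify correctly. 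Your handling of the degenerate H\"older exponent at $q=1$ and of the comparison $(\Delta t)^q\leq(\Delta t)^{q/2}$ via $\Delta t\leq 1$ matches the paper's case split, so no gap remains.
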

\begin{proof}
By definition (\ref{s3}), we may rewrite $\widetilde{Z}_{t_{n+1}}-\widetilde{Z}_{t_{n}}$ as
\[
\widetilde{Z}_{t_{n+1}}-\widetilde{Z}_{t_{n}}=\int_{t_{n}}^{t_{n+1}}\left(\mu_D+\sigma_D\Lambda_{t_{n}}-H\varepsilon^{2h}\sigma_D^2\right)dt
+\int_{t_{n}}^{t_{n+1}}\sqrt{2H}\varepsilon^h\sigma_D d w_t
\]
Then it is seen that
\[
Y_{n+1}=\int_{t_{n}}^{t_{n+1}}\sigma_D(\Lambda_{t_{n}}-\Lambda_{t})dt.
\]

For $q=1$,
\begin{equation}
\mathbb{E}|Y_{n+1}|\leq \sigma_D\int_{t_{n}}^{t_{n+1}}\mathbb{E}\left|\Lambda_{t_{n}}-\Lambda_{t}\right|dt;\label{y1}
\end{equation}
and for $q>1$,
\begin{align}
\mathbb{E}|Y_{n+1}|^q\leq &\sigma_D^q\mathbb{E}\left[\int_{t_{n}}^{t_{n+1}}\left|\Lambda_{t_{n}}-\Lambda_{t}\right|dt\right]^q\nonumber\\
\leq&\sigma_D^q\mathbb{E}\left[\left(\int_{t_{n}}^{t_{n+1}}\left|\Lambda_{t_{n}}-\Lambda_{t}\right|^qdt\right)^{\frac{1}{q}}
\left(\int_{t_{n}}^{t_{n+1}}1^{q_0}dt\right)^{\frac{1}{q_0}}\right]^q\nonumber\\
=&\sigma_D^q(\Delta t)^{\frac{q}{q_0}}\int_{t_{n}}^{t_{n+1}}\mathbb{E}\left|\Lambda_{t_{n}}-\Lambda_{t}\right|^qdt,\label{yp}
\end{align}
where $\frac{1}{q}+\frac{1}{q_0}=1$ and we use H\"{o}lder inequality.

Noticing the definition of $\Lambda_{t}$, we know that $\Lambda_{t_{n}}-\Lambda_{t}\,(t\in [t_{n},t_{n+1}])$ possesses a normal distribution with the mean $0$ and some variance $\sigma_n^2$ for all $n$ (here the constant random variables are considered to be normal random variables with variances $0$). If we can prove that there exists a nonnegative constant $C$ such that
\begin{equation}\label{si}
\sigma_n^2\leq C\Delta t,
\end{equation}
then it is from (\ref{y1})-(\ref{yp}) and moment properties of normal distribution that
\begin{equation}
\mathbb{E}|Y_{n+1}|^q\leq\left\{{\begin{array}{*{20}{l}}
{C_1\int_{t_{n}}^{t_{n+1}}(\Delta t)^{\frac{1}{2}}dt,}& {q=1,}\\
{C_q(\Delta t)^{\frac{q}{q_0}}\int_{t_{n}}^{t_{n+1}}(\Delta t)^{\frac{q}{2}}dt,}&{ q>1,}
\end{array}}\right.=C_q(\Delta t)^{\frac{3}{2}q},
\end{equation}
which just prove the lemma.

Indeed, for each $n$ and $t\in [t_{n},t_{n+1}]$, we have
\begin{align}\label{I0}
\sigma_n^2=&\mathbb{E}(\Lambda_{t_{n}}-\Lambda_{t})^2\nonumber\\
=&C\mathbb{E}\left[\int_0^t\left(\mathbf{1}_{s\leq t_n}(t_n-s+\varepsilon)^{h-1}-(t-s+\varepsilon)^{h-1}\right)dw_s\right]^2\nonumber\\
=&C\int_0^t\left(\mathbf{1}_{s\leq t_n}(t_n-s+\varepsilon)^{h-1}-(t-s+\varepsilon)^{h-1}\right)^2ds\nonumber\\
=&C\left[\int_0^{t_n}\left((t_n-s+\varepsilon)^{h-1}-(t-s+\varepsilon)^{h-1}\right)^2ds+\int_{t_n}^t(t-s+\varepsilon)^{2(h-1)}ds\right]
\nonumber\\
=&C(I_1+I_2),
\end{align}
where
\begin{align*}
I_1=&\int_{t_n}^t(t-s+\varepsilon)^{2(h-1)}ds,\\
I_2=&\int_0^{t_n}\left((t_n-s+\varepsilon)^{h-1}-(t-s+\varepsilon)^{h-1}\right)^2ds.
\end{align*}
It is easy to see by $2(h-1)\in(-3,-1)$ that
\begin{equation}\label{I1}
I_1\leq\int_{t_n}^t\varepsilon^{2(h-1)}ds\leq C\Delta t.
\end{equation}
Since $\Delta t<T$ and $h-2\in(-\frac{5}{2},-\frac{3}{2})$, we have
\begin{align*}
&|(t_n-s+\varepsilon)^{h-1}-(t-s+\varepsilon)^{h-1}|\\
=&|(h-1)\xi^{h-2}(t-t_n)|\quad (\xi\in(t_n-s+\varepsilon,t-s+\varepsilon))\\
\leq &(1-h)\varepsilon^{h-2}|t-t_n|\\
\leq & C\Delta t,
\end{align*}
which clearly entails
\begin{equation}\label{I2}
I_2\leq \int_0^{t_n}(C\Delta t)^2ds\leq C (\Delta t)^2\leq C\Delta t.
\end{equation}
Substituting (\ref{I1})-(\ref{I2}) into (\ref{I0}), we obtain (\ref{si}), which completes the proof.
\end{proof}

\begin{lemma}\label{lemma2}{\it
Define
\[V_N=\sum_{n=0}^{N-1}|Y_{n+1}|. \]
Then for all $\varsigma>0$, there exists a nonnegative random variable $K_{1\varsigma}$ (depending only on $\varsigma$ with $\mathbb{E}K_{1\varsigma}^q<+\infty$ for all $q\geq 1$) such that
\[
V_N(\omega)\leq K_{1\varsigma}(\omega)(\Delta t)^{\frac{1}{6}-\varsigma}
\]
for almost all $\omega\in\Omega$.
}
\end{lemma}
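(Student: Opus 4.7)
My plan is to promote the per-increment moment estimate of the previous lemma to a global moment bound on $V_N$ via Minkowski's inequality, then construct $K_{1\varsigma}$ as a supremum over all discretisations so the claimed pointwise inequality holds uniformly in $N$, and finally verify that the resulting random variable has finite moments of every order by a direct summation/Lyapunov argument of Borel-Cantelli/Kolmogorov-continuity type.

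First, by Minkowski's inequality applied to $V_N=\sum_{n=0}^{N-1}|Y_{n+1}|$ and the bound $\mathbb{E}|Y_{n+1}|^q\leq C_q(\Delta t)^{3q/2}$ from the previous lemma,
\[
(\mathbb{E}V_N^q)^{1/q}\leq \sum_{n=0}^{N-1}(\mathbb{E}|Y_{n+1}|^q)^{1/q}\leq N\cdot C_q^{1/q}(\Delta t)^{3/2}=TC_q^{1/q}(\Delta t)^{1/2}
\]
for every $q\geq 1$, using $N\Delta t=T$. Thus there is a constant $C_q'$ independent of $N$ such that $\mathbb{E}V_N^q\leq C_q'(\Delta t)^{q/2}$.

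Next, writing $\Delta t_N=T/N$, define
\[
K_{1\varsigma}(\omega):=\sup_{N\geq 1}\frac{V_N(\omega)}{(\Delta t_N)^{1/6-\varsigma}}.
\]
The inequality $V_N\leq K_{1\varsigma}(\Delta t_N)^{1/6-\varsigma}$ then holds by construction for every $N$, so it remains only to show $\mathbb{E}K_{1\varsigma}^q<+\infty$ for every $q\geq 1$ (which forces almost sure finiteness). Using $\sup_N a_N\leq \sum_N a_N$ for nonnegative terms and the moment bound above,
\[
\mathbb{E}K_{1\varsigma}^q\leq \sum_{N=1}^{\infty}\frac{\mathbb{E}V_N^q}{(\Delta t_N)^{q(1/6-\varsigma)}}\leq C_q'T^{q(1/3+\varsigma)}\sum_{N=1}^{\infty}N^{-q(1/3+\varsigma)}.
\]
This series converges whenever $q(1/3+\varsigma)>1$, i.e.\ whenever $q>3/(1+3\varsigma)$. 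For those $q\geq 1$ that fail this condition, pick any $q'>\max\{q,\,3/(1+3\varsigma)\}$: the direct estimate yields $\mathbb{E}K_{1\varsigma}^{q'}<+\infty$, and Lyapunov's inequality $(\mathbb{E}K_{1\varsigma}^q)^{1/q}\leq (\mathbb{E}K_{1\varsigma}^{q'})^{1/q'}$ extends the bound down to $q$.

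The only subtlety is the choice of exponent $1/6-\varsigma$, which is dictated by the need for $q(1/3+\varsigma)>1$ to be achievable for arbitrarily large $q$; this is precisely what creates the gap between the sharp $L^q$ rate $(\Delta t)^{1/2}$ and the weaker pathwise rate $(\Delta t)^{1/6-\varsigma}$ that the lemma actually claims. Beyond this routine summability bookkeeping, there is no genuine technical obstruction, since the heavy lifting has already been done in the preceding lemma.
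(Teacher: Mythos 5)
Your proof is correct, and it takes a mildly but genuinely different route from the paper's. Where the paper bounds $\mathbb{E}V_N^q$ by the crude inequality $\bigl(\sum_{n}a_{n+1}\bigr)^q\leq N^q\sum_n a_{n+1}^q$, obtaining $[\mathbb{E}V_N^q]^{1/q}\leq C_q(\Delta t)^{1/2-1/q}$ and therefore needing a case split ($q\geq 3$ directly, $1\leq q<3$ via Jensen) to reach the uniform rate $(\Delta t)^{1/6}$, you use Minkowski's inequality and get the sharper bound $[\mathbb{E}V_N^q]^{1/q}\leq C_q'(\Delta t)^{1/2}$ uniformly in $q$ with no case split. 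For the passage from $L^q$ bounds to the pathwise bound, the paper simply cites Lemma 1 of Jentzen--Kloeden--Neuenkirch, whereas you inline its standard proof: define $K_{1\varsigma}$ as the supremum over $N$ of the normalized errors, dominate the supremum by the sum, check summability of $N^{-q(1/3+\varsigma)}$ for large $q$, and extend to small $q$ by Lyapunov's inequality. All of these steps are sound, and the self-contained argument is arguably more transparent than the citation. One remark in your closing paragraph is slightly off: the exponent $\frac{1}{6}-\varsigma$ is not ``dictated'' by your summability requirement --- with your Minkowski bound the same argument would deliver the stronger pathwise rate $(\Delta t)^{1/2-\varsigma}$ (summability then requires $q\varsigma>1$, achievable for large $q$); the $\frac16$ in the paper is an artifact of its more wasteful moment estimate. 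Since $\Delta t\leq 1$, proving the rate $\frac12-\varsigma$ would a fortiori give the claimed $\frac16-\varsigma$, so this does not affect the validity of your proof of the stated lemma.
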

\begin{proof}
We first recall that for any nonnegative $a_1,a_2,\cdots,a_{N}$,
\begin{align*}
\left(\sum_{n=0}^{N-1}a_{n+1}\right)^q\leq &\left(N\cdot\max\limits_{n=0,1,\cdots,N-1}a_{n+1}\right)^q\leq N^q\left(\sum_{n=0}^{N-1}a_{n+1}^q\right),\quad q\geq 1.
\end{align*}
In the case $q\geq 3$,
\begin{align*}
\mathbb{E}V_N^q\leq N^q\left(\sum_{n=0}^{N-1}\mathbb{E}|Y_{n+1}|^q\right)\leq N^q\left(\sum_{n=0}^{N-1}C_q (\Delta t)^{\frac{3}{2}q}\right)\leq C_q(\Delta t)^{\frac{q}{2}-1},
\end{align*}
which implies by $\Delta t\leq 1$ that
\begin{equation}\label{v3}
\left[\mathbb{E}V_N^q\right]^{\frac{1}{q}}\leq C_q(\Delta t)^{\frac{1}{2}-\frac{1}{q}}\leq C_q(\Delta t)^{\frac{1}{6}}.
\end{equation}
In the case $1\leq q<3$,
\begin{align*}
\mathbb{E}V_N^q\leq \left[\mathbb{E}(V_N^q)^{\frac{3}{q}}\right]^{\frac{q}{3}}=\left[\mathbb{E}V_N^3\right]^{\frac{q}{3}}\leq C_q(\Delta t)^{\frac{q}{6}},
\end{align*}
which gives
\begin{equation}\label{v1}
\left[\mathbb{E}V_N^q\right]^{\frac{1}{q}}\leq C_q(\Delta t)^{\frac{1}{6}}.
\end{equation}
Hence it is from (\ref{v3}) and (\ref{v1}) that for all $q\geq 1$
\begin{equation}\label{v}
\left[\mathbb{E}V_N^q\right]^{\frac{1}{q}}\leq C_q(\Delta t)^{\frac{1}{6}}.
\end{equation}

Consequently, by applying Lemma 1 in \cite{Jentzen} or Lemma 3.4 in \cite{Jentzen2} to (\ref{v}), we obtain the statement of the lemma.
\end{proof}

\begin{lemma}\label{lemma3}{\it
Define
\begin{align*}
V_{1N}=\max\limits_{n=0,1,\cdots,N-1}\left|\Lambda_{t_{n+1}}-\widetilde{\Lambda}_{t_{n+1}}\right|,\\
V_{2N}=\max\limits_{n=0,1,\cdots,N-1}\left|\lambda_{t_{n+1}}-\widetilde{\lambda}_{t_{n+1}}\right|.
\end{align*}
Then for all $\varsigma>0$, there exists a nonnegative random variable $K_{2\varsigma}$ (depending only on $\varsigma$ with $\mathbb{E}K_{2\varsigma}^q<+\infty$ for all $q\geq 1$) such that
\[
\max\{V_{1N}(\omega),V_{2N}(\omega)\}\leq K_{2\varsigma}(\omega)(\Delta t)^{\frac{1}{6}-\varsigma}
\]
for almost all $\omega\in\Omega$.
}
\end{lemma}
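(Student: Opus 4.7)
The plan is to estimate $\Lambda_{t_{n+1}}-\widetilde\Lambda_{t_{n+1}}$ and $\lambda_{t_{n+1}}-\widetilde\lambda_{t_{n+1}}$ as centered Wiener integrals against small kernel differences, bound their variances by $O((\Delta t)^2)$, and then upgrade the resulting $L^q$ estimates to an almost sure rate by the same Jentzen-type lemma that was already used in the proof of Lemma \ref{lemma2}. First I would decompose
\begin{equation*}
\Lambda_{t_{n+1}}-\widetilde\Lambda_{t_{n+1}}=\sqrt{2H}\,h\sum_{k=0}^{n}\int_{t_k}^{t_{k+1}}\bigl[(t_{n+1}-s+\varepsilon)^{h-1}-(t_{n+1}-t_k+\varepsilon)^{h-1}\bigr]\,dw_s,
\end{equation*}
and write $\lambda_{t_{n+1}}-\widetilde\lambda_{t_{n+1}}$ analogously, with kernel exponent $h-2$ and multiplicative constant $\sqrt{2H}\,h(h-1)$.

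On each subinterval $[t_k,t_{k+1}]$ the mean value theorem applied to $u\mapsto(t_{n+1}-u+\varepsilon)^{h-1}$ gives the value $(h-1)\xi^{h-2}(s-t_k)$ for some $\xi\geq t_{n+1}-s+\varepsilon\geq\varepsilon$. Since $h-2<0$, the map $\xi\mapsto\xi^{h-2}$ is decreasing, so the integrand is dominated in absolute value by $|h-1|\varepsilon^{h-2}\Delta t$, and the parallel computation yields the bound $|h-2|\varepsilon^{h-3}\Delta t$ for the $\lambda$-kernel. It\^o's isometry then gives variance at most $C(\Delta t)^3$ for each summand; because the summands are stochastic integrals over disjoint intervals, they are independent, so
\begin{equation*}
\Lambda_{t_{n+1}}-\widetilde\Lambda_{t_{n+1}}\sim\mathcal{N}(0,\sigma_n^2),\qquad \sigma_n^2\leq (n+1)C(\Delta t)^3\leq CT(\Delta t)^2,
\end{equation*}
with an identical estimate for the $\lambda$-difference. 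Standard Gaussian moment formulas then give $\mathbb{E}|\Lambda_{t_{n+1}}-\widetilde\Lambda_{t_{n+1}}|^q\leq C_q(\Delta t)^q$ uniformly in $n$, and likewise for $\lambda$.

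Summing over $n$ and using $N\Delta t=T$ yields, for every $q\geq 6/5$,
\begin{equation*}
\mathbb{E}V_{1N}^q\leq\sum_{n=0}^{N-1}\mathbb{E}\bigl|\Lambda_{t_{n+1}}-\widetilde\Lambda_{t_{n+1}}\bigr|^q\leq C_q(\Delta t)^{q-1},
\end{equation*}
so $[\mathbb{E}V_{1N}^q]^{1/q}\leq C_q(\Delta t)^{1-1/q}\leq C_q(\Delta t)^{1/6}$ (using $\Delta t\leq 1$); for $1\leq q<6/5$ the same bound follows by monotonicity of $L^q$-norms just as in the proof of Lemma \ref{lemma2}, and the identical estimate holds for $V_{2N}$. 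Applying Lemma 1 in \cite{Jentzen} (or Lemma 3.4 in \cite{Jentzen2}) then converts these uniform-in-$q$ moment bounds into the desired almost sure estimate $\max\{V_{1N}(\omega),V_{2N}(\omega)\}\leq K_{2\varsigma}(\omega)(\Delta t)^{1/6-\varsigma}$. The one delicate point is the diagonal case $k=n$, where $t_{n+1}-s+\varepsilon$ shrinks to $\varepsilon$ as $s\uparrow t_{n+1}$; the strict positivity of $\varepsilon$ keeps $\xi^{h-2}$ and $\xi^{h-3}$ bounded and is precisely why the construction insists on $\varepsilon>0$ rather than the genuine fBm limit $\varepsilon=0$.
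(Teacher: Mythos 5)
Your proof is correct, and it differs from the paper's own argument in one substantive way. The paper bounds $V_{1N}$ by the triangle inequality, $V_{1N}\le \sqrt{2H}\,|h|\sum_{k=0}^{N-1}|X_{k+1}|$, where $X_{k+1}$ is the Wiener integral of the kernel difference over a single subinterval, and then applies the crude inequality $\left(\sum_k a_k\right)^q\le N^q\sum_k a_k^q$ exactly as in the proof of Lemma \ref{lemma2}; this yields $\left[\mathbb{E}V_{1N}^q\right]^{1/q}\le C_q(\Delta t)^{\frac{1}{2}-\frac{1}{q}}$ and so needs $q$ large before the exponent reaches $\frac{1}{6}$, with the remaining $q$ handled by monotonicity of $L^q$-norms. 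You instead keep the whole difference $\Lambda_{t_{n+1}}-\widetilde{\Lambda}_{t_{n+1}}$ as a single centered Gaussian and use the It\^{o} isometry (equivalently, independence of the integrals over disjoint subintervals) to sum the variances, obtaining $\sigma_n^2\le CT(\Delta t)^2$ uniformly in $n$; the maximum over $n$ then only costs a factor $N^{1/q}$, giving the sharper bound $\left[\mathbb{E}V_{1N}^q\right]^{1/q}\le C_q(\Delta t)^{1-\frac{1}{q}}$. This buys something: your estimate would support an almost sure rate of $(\Delta t)^{1-\varsigma}$ for $V_{1N}$ and $V_{2N}$, which makes transparent that the limiting exponent $\frac{1}{6}$ in Theorem \ref{th4.1} is forced entirely by the $V_N$ term of Lemma \ref{lemma2} and not by the present lemma. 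Your mean-value-theorem bounds for the two kernels ($|h-1|\varepsilon^{h-2}\Delta t$ and $|h-2|\varepsilon^{h-3}\Delta t$) coincide with the paper's, both proofs conclude with the same Jentzen-type lemma converting uniform-in-$q$ moment bounds into a pathwise estimate, and taking $K_{2\varsigma}$ as the maximum of the two resulting random variables preserves finiteness of all moments, as in the paper.
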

\begin{proof}
For $k,n\in\{0,1,\cdots,N-1\}$, define
\[
X_{k+1}=\int_{t_k}^{t_{k+1}}\left((t_{n+1}-s+\varepsilon)^{h-1}-(t_{n+1}-t_k+\varepsilon)^{h-1}\right)dw_s,
\]
and it is seen by definitions of $\Lambda_{t_{n+1}}$ and $\widetilde{\Lambda}_{t_{n+1}}$ that
\begin{align*}
&\Lambda_{t_{n+1}}-\widetilde{\Lambda}_{t_{n+1}}\\
=&\sqrt{2H}h\sum_{k=0}^n\int_{t_k}^{t_{k+1}}(t_{n+1}-s+\varepsilon)^{h-1}dw_s
-\sqrt{2H}h\sum_{k=0}^n\int_{t_k}^{t_{k+1}}(t_{n+1}-t_k+\varepsilon)^{h-1}dw_s\\
=&\sqrt{2H}h\sum_{k=0}^nX_{k+1}
\end{align*}
and
\begin{equation}
V_{1N}\leq\sqrt{2H}|h|\sum_{k=0}^{N-1}|X_{k+1}|.
\end{equation}

Since $X_{k+1}$ possesses a normal distribution with mean $0$ and variance satisfying
\begin{align*}
\mathbb{E}X_{k+1}^2=&\int_{t_k}^{t_{k+1}}\left((t_{n+1}-s+\varepsilon)^{h-1}-(t_{n+1}-t_k+\varepsilon)^{h-1}\right)^2ds\\
=&\int_{t_k}^{t_{k+1}}(h-1)^2\xi^{2(h-2)}(s-t_k)^2ds\quad (\xi\in(t_{n+1}-s+\varepsilon,t_{n+1}-t_k+\varepsilon))\\
\leq&\int_{t_k}^{t_{k+1}}(h-1)^2\varepsilon^{2(h-2)}(\Delta t)^2ds\\
=&(h-1)^2\varepsilon^{2(h-2)}(\Delta t)^3,
\end{align*}
we get by moment properties of the normal distribution that for all $q\geq 1$ there exists a nonnegative constant $C_q$ such that
\[
\mathbb{E}|X_{k+1}|^q\leq C_q (\Delta t)^{\frac{3}{2}q}.
\]
Now by a similar proof of (\ref{v}), it is seen that for all $q\geq 1$
\begin{equation}\label{U}
\left[\mathbb{E}V_{1N}^q\right]^{\frac{1}{q}}\leq C_q(\Delta t)^{\frac{1}{6}},
\end{equation}
which implies by Lemma 1 in \cite{Jentzen} or Lemma 3.4 in \cite{Jentzen2} again that for all $\varsigma>0$, there exists a nonnegative random variable $L_{1\varsigma}$ depending only on $\varsigma$ with $\mathbb{E}L_{1\varsigma}^q<+\infty$ for all $q\geq 1$ such that
\[
V_{1N}(\omega)\leq L_{1\varsigma}(\omega)(\Delta t)^{\frac{1}{6}-\varsigma}
\]
for almost all $\omega\in\Omega$.

In a similar way, it is easy to show that for all $\varsigma>0$, there exists a nonnegative random variable $L_{2\varsigma}$ depending only on $\varsigma$ with $\mathbb{E}L_{2\varsigma}^q<+\infty$ for all $q\geq 1$ such that
\[
V_{2N}(\omega)\leq L_{2\varsigma}(\omega)(\Delta t)^{\frac{1}{6}-\varsigma}
\]
for almost all $\omega\in\Omega$. Then by setting
\[K_{2\varsigma}=\max\{L_{1\varsigma},L_{2\varsigma}\},\]
we complete the proof.
\end{proof}
Now we can prove Theorem \ref{th4.1}. For almost all $\omega\in\Omega$ (which we shall omit in our following proof for simplicity) and $n=0,1,\cdots,N-1$, it is seen from (\ref{s1}) and (\ref{s3}) that
\begin{align}\label{www}
\left|\Delta{w}_{t_{n+1}}-\Delta \widehat{w}_{t_{n+1}}\right|\leq \frac{|Y_{n+1}|+\sigma_D\Delta t|\Lambda_{t_{n}}-\widehat{\Lambda}_{t_{n}}|}{\sqrt{2H}\varepsilon^h\sigma_D}.
\end{align}
Then we obtain
\begin{align*}
&|\Lambda_{t_{n+1}}-\widehat{\Lambda}_{t_{n+1}}|\\
\leq &|\Lambda_{t_{n+1}}-\widetilde{\Lambda}_{t_{n+1}}|+|\widetilde{\Lambda}_{t_{n+1}}-\widehat{\Lambda}_{t_{n+1}}|\\
\leq &V_{1N}+\sum_{k=0}^{n}\sqrt{2H}|h|\varepsilon^{h-1}\left|\Delta{w}_{t_{k+1}}-\Delta \widehat{w}_{t_{k+1}}\right|\,((\ref{s2})\,\text{and}\,(\ref{s4}))\\
\leq& (\Delta t)^{\frac{1}{6}-\varsigma}K_{2\varsigma}+\frac{|h|}{\varepsilon\sigma_D}\sum_{k=0}^n(|Y_{k+1}|+\sigma_D\Delta t|\Lambda_{t_{k}}-\widehat{\Lambda}_{t_{k}}|)\,(\text{Lemma}\, \ref{lemma3}\;\text{and}\;(\ref{www}))\\
\leq&(\Delta t)^{\frac{1}{6}-\varsigma}K_{2\varsigma}+\frac{|h|}{\varepsilon\sigma_D}(\Delta t)^{\frac{1}{6}-\varsigma}K_{1\varsigma}+\frac{|h|\Delta t}{\varepsilon}\sum_{k=0}^n|\Lambda_{t_{k}}-\widehat{\Lambda}_{t_{k}}|(\text{Lemma}\, \ref{lemma2})\\
=&(\Delta t)^{\frac{1}{6}-\varsigma}K_{3\varsigma}+\frac{|h|\Delta t}{\varepsilon}\sum_{k=0}^n|\Lambda_{t_{k}}-\widehat{\Lambda}_{t_{k}}|,
\end{align*}
where
\[K_{3\varsigma}=K_{2\varsigma}+\frac{|h|}{\varepsilon\sigma_D}K_{1\varsigma}.\]
Using the discrete Gronwall lemma (see Lemma 1.4.2 in \cite{Quarteroni}), we have
\begin{align}
|\Lambda_{t_{n+1}}-\widehat{\Lambda}_{t_{n+1}}|\leq& (\Delta t)^{\frac{1}{6}-\varsigma}K_{3\varsigma}\exp\left\{\sum_{k=0}^n\frac{|h|\Delta t}{\varepsilon}\right\}\nonumber\\
\leq&(\Delta t)^{\frac{1}{6}-\varsigma}K_{3\varsigma}\exp\left\{\frac{|h|T}{\varepsilon}\right\}.\label{ww}
\end{align}
Making use of Lemma \ref{lemma3}, (\ref{www}) and (\ref{ww}), we have
\begin{align}
&|\lambda_{t_{n+1}}-\widehat{\lambda}_{t_{n+1}}|\nonumber\\
\leq&|\lambda_{t_{n+1}}-\widetilde{\lambda}_{t_{n+1}}|+|\widetilde{\lambda}_{t_{n+1}}-\widehat{\lambda}_{t_{n+1}}|\nonumber\\
\leq&(\Delta t)^{\frac{1}{6}-\varsigma}K_{2\varsigma}+\sum_{k=0}^{n}\sqrt{2H}|h(h-1)|\varepsilon^{h-2}\left|\Delta{w}_{t_{k+1}}-\Delta \widehat{w}_{t_{k+1}}\right|\nonumber\\
\leq&(\Delta t)^{\frac{1}{6}-\varsigma}K_{2\varsigma}+\frac{|h(h-1)|}{\varepsilon^2\sigma_D}(\Delta t)^{\frac{1}{6}-\varsigma}K_{1\varsigma}+\frac{|h(h-1)|\Delta t}{\varepsilon^2}\sum_{k=0}^n|\Lambda_{t_{k}}-\widehat{\Lambda}_{t_{k}}|\nonumber\\
\leq&(\Delta t)^{\frac{1}{6}-\varsigma}K_{4\varsigma},
\end{align}
where
\[K_{4\varsigma}=K_{2\varsigma}+\frac{|h(h-1)|}{\varepsilon^2\sigma_D}K_{1\varsigma}+\frac{|h(h-1)|T }{\varepsilon^2}\exp\left\{\frac{|h|T}{\varepsilon}\right\}K_{3\varsigma}.\]

By setting
\[
K_{\varsigma}=\max\left\{K_{3\varsigma}\exp\left\{\frac{|h|T}{\varepsilon}\right\},K_{4\varsigma}\right\},
\]
it is telling that $\mathbb{E}K_\varsigma^q<+\infty$ for all $q\geq 1$ and
\begin{align}
|\Lambda_{t_{n+1}}-\widehat{\Lambda}_{t_{n+1}}|\leq K_\varsigma(\Delta t)^{\frac{1}{6}-\varsigma},\label{ff1}\\
|\lambda_{t_{n+1}}-\widehat{\lambda}_{t_{n+1}}|\leq K_\varsigma(\Delta t)^{\frac{1}{6}-\varsigma}.\label{ff2}
\end{align}
Finally noticing right sides of (\ref{ff1})-(\ref{ff2}) are not related to $n$, we just prove the theorem.
\\
\\
\textbf{\large Lemma \ref{lastlemma} used in Remark \ref{remark_intro}}\\
\begin{lemma}\label{lastlemma}{\it
For $n=0,1,\cdots,N-1$, define a process $Q$ as
\[Q_{t_{n+1}}=Q_{t_n}+\left(\mu_D+\sigma_D\widetilde{\Lambda}_{t_{n}}-H\varepsilon^{2h}\sigma_D^2\right)\Delta t+\sqrt{2H}\varepsilon^h\sigma_D \Delta w_{t_{n+1}},\quad Q_{t_{0}}={Z}_{t_{0}},\]
where $\widetilde{\Lambda}_{t_{n}}$ is given by (\ref{s4}). Then for all $\varsigma>0$, there exists a nonnegative random variable $K_{A\varsigma}$ (depending only on $\varsigma$ with $\mathbb{E}K_{A\varsigma}^q<+\infty$ for all $q\geq 1$) such that
\[
\max_{n=0,1,\cdots,N-1}|Q_{t_{n+1}}(\omega)-Z_{t_{n+1}}(\omega)|\leq K_{A\varsigma}(\omega)(\Delta t)^{\frac{1}{6}-\varsigma}
\]
for almost all $\omega\in\Omega$.
}
\end{lemma}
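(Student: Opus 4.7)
The plan is to obtain Lemma A.4 as a relatively direct corollary of Lemmas A.2 and A.3 via a triangle inequality, using the intermediate process $\widetilde{Z}$ defined in (A.15). First I will split
\[
Q_{t_{n+1}}-Z_{t_{n+1}}=\bigl(Q_{t_{n+1}}-\widetilde{Z}_{t_{n+1}}\bigr)+\bigl(\widetilde{Z}_{t_{n+1}}-Z_{t_{n+1}}\bigr),
\]
and bound each piece separately, observing that both $Q$ and $\widetilde{Z}$ are driven by the very same Brownian increments $\Delta w_{t_{k+1}}$ (so the stochastic integral parts cancel in $Q-\widetilde Z$).

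For the second piece, by the very definition of $Y_{k+1}$ in Appendix A and the fact that $\widetilde{Z}_{t_0}=Z_{t_0}$, I get the telescoping identity
\[
\widetilde{Z}_{t_{n+1}}-Z_{t_{n+1}}=\sum_{k=0}^{n}Y_{k+1},
\]
so $\bigl|\widetilde{Z}_{t_{n+1}}-Z_{t_{n+1}}\bigr|\le V_N$, and Lemma A.2 directly gives a pathwise bound $V_N(\omega)\le K_{1\varsigma}(\omega)(\Delta t)^{\frac{1}{6}-\varsigma}$.

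For the first piece, subtracting the recursions of $Q$ and $\widetilde{Z}$ yields the explicit, non-recursive formula
\[
Q_{t_{n+1}}-\widetilde{Z}_{t_{n+1}}=\sigma_D\,\Delta t\sum_{k=0}^{n}\bigl(\widetilde{\Lambda}_{t_k}-\Lambda_{t_k}\bigr),
\]
with the $k=0$ term vanishing because $\widetilde{\Lambda}_{t_0}=\Lambda_{t_0}=0$. Applying Lemma A.3 uniformly in $k\in\{1,\dots,n\}$ and using $(n+1)\Delta t\le T$, I bound this by $\sigma_D\, T\, K_{2\varsigma}(\omega)(\Delta t)^{\frac{1}{6}-\varsigma}$. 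Combining the two estimates, setting $K_{A\varsigma}:=K_{1\varsigma}+\sigma_D T\,K_{2\varsigma}$ (which inherits $\mathbb{E}K_{A\varsigma}^{q}<\infty$ from $K_{1\varsigma}$ and $K_{2\varsigma}$) and taking the maximum over $n$ (the right-hand side is independent of $n$) completes the proof.

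I do not expect a substantive obstacle here: no Gronwall argument is needed because the difference of the two recursions does not couple back through $Q-\widetilde{Z}$ itself (in contrast to the proof of Theorem 4.1, where $\widehat{\Lambda}$ depends on the unknown $\widehat{w}$, forcing a Gronwall step). The only point requiring a bit of care is the index bookkeeping for Lemma A.3, which bounds $|\Lambda_{t_{m+1}}-\widetilde{\Lambda}_{t_{m+1}}|$ for $m=0,\dots,N-1$; writing the sum so that the $k=0$ summand is handled separately (or absorbed via $\widetilde{\Lambda}_{t_0}=\Lambda_{t_0}=0$) suffices.
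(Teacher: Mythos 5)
Your proposal is correct and is essentially the paper's own argument in unrolled form: the paper bounds $|Q_{t_{n+1}}-Z_{t_{n+1}}|\le |Q_{t_n}-Z_{t_n}|+|Y_{n+1}|+\sigma_D\Delta t\,V_{1N}$ and inducts, which after telescoping is exactly your decomposition through $\widetilde Z$ yielding $V_N+\sigma_D T V_{1N}$ and the same constant $K_{A\varsigma}=K_{1\varsigma}+\sigma_D T K_{2\varsigma}$. Your observation that no Gronwall step is needed (unlike in Theorem \ref{th4.1}) matches the paper, which likewise proceeds by plain induction.
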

\begin{proof}
We can rewrite $Q_{t_{n+1}}$ as
\[
Q_{t_{n+1}}=Q_{t_n}+\int_{t_n}^{t_{n+1}}\left(\mu_D+\sigma_D\widetilde{\Lambda}_{t_{n}}-H\varepsilon^{2h}\sigma_D^2\right)d t+\int_{t_n}^{t_{n+1}}\sqrt{2H}\varepsilon^h\sigma_D dw_t,
\]
which gives
\begin{align*}
&|Q_{t_{n+1}}-Z_{t_{n+1}}|\\
\leq&|Q_{t_{n}}-Z_{t_{n}}|+\left|\int_{t_n}^{t_{n+1}}\sigma_D(\widetilde{\Lambda}_{t_{n}}-\Lambda_t)dt\right|\\
\leq&|Q_{t_{n}}-Z_{t_{n}}|+|Y_{{n+1}}|+\sigma_D\Delta t|\widetilde{\Lambda}_{t_{n}}-\Lambda_{t_{n}}|\\
\leq&|Q_{t_{n}}-Z_{t_{n}}|+|Y_{{n+1}}|+\sigma_D\Delta tV_{1N}.
\end{align*}
By induction, it follows from Lemmas \ref{lemma2} and \ref{lemma3} that
\[
|Q_{t_{n+1}}-Z_{t_{n+1}}|\leq V_N+\sigma_DTV_{1N}
\leq (K_{1\varsigma}+\sigma_DTK_{2\varsigma})(\Delta t)^{\frac{1}{6}-\varsigma}
\]
for all $\varsigma>0$. We just complete the proof by setting $K_{A\varsigma}=K_{1\varsigma}+\sigma_DTK_{2\varsigma}$.
\end{proof}

\bigskip
\noindent

\end{document}